\newcommand{\ii}{\mathrm{i}}
\newtheorem{thm}{Theorem}
\newtheorem{prop}{Proposition}
\titleformat{\section}{\centering\LARGE\bfseries}{\thesection}{1em}{}
\titleformat{\subsection}{\Large\bfseries}{\thesubsection}{1em}{}
\begin{document}

\title{Rogue waves and their patterns for the coupled Fokas-Lenells equations}

\author{Liming Ling}
\address{School of Mathematics, South China University of Technology, Guangzhou, China 510641}
\email{linglm@scut.edu.cn}
\author{Huajie Su}
\address{School of Mathematics, South China University of Technology, Guangzhou, China 510641}
\email{zhangxiaoen@scut.edu.cn}

\begin{abstract}
%In this work, we propose a method to analyze the rogue wave pattern of coupled Fokas-Lenells equation with 
%the aid of Darboux transformation \lm{(DT)}. Using the Darboux transformation, we construct rogue waves with infinite free parameters. 
%Then we show that the pattern of rogue waves of the CFL equations is associated with Okamoto polynomial hierarchies. 

%In this paper, we analyze the pattern of rogue waves with infinite free parameters for the coupled Fokas-Lenells equation 
%with the aid of the Darboux transformation. 
%As the norm of a parameter is increased, 
%we show that the rogue wave generated at the branch point of multiplicity 
%three can be decomposed into one-order outer rogue waves and a lower-order inner rogue wave. 
%In this case, we can use Okamoto polynomial hierarchies with a transformation 
%to predict the location of outer and inner rogue waves. 

In this work, we explore the rogue wave patterns in the coupled Fokas-Lenells equation by using the Darboux transformation. 
We demonstrate that when one of the internal parameters is large enough, 
the general high-order rogue wave solutions generated at a branch point of multiplicity three can be decomposed into some
first-order outer rogue waves and a lower-order inner rogue wave. 
Remarkably, the positions and the orders of these outer and inner rogue waves are intimately related to Okamoto polynomial hierarchies.

{\bf Keywords:}  Coupled Fokas-Lenells equation, Asymptotic analysis, Rogue wave pattern, Darboux transformation.

{\bf 2020 MSC:} 35Q55, 35Q51, 37K10, 37K15, 35Q15, 37K40.
\end{abstract}

\date{\today}

\maketitle

\section{Introduction}\label{sec-introduction}
Integrable equations, such as 
the classical nonlinear Schr\"odinger (NLS) equation \cite{zakharov_exact_1972}, 
derivative-type NLS equation \cite{kaup_exact_1978, chen_integrability_1979, gerdjikov_quadratic_1982}, 
and other integrable equations play a crucial role in describing nonlinear wave fields. As we know, 
the NLS equation is an applicable model to describe the picosecond short pulse, 
while it is not effective for the subpicosecond or femtosecond pulse. 
In this case, we need to consider the high-order nonlinear effects. 
In the 1980s, Hasegawa and Kodama proposed the high-order NLS equation, 
from which several integrable models can be derived, such as the Hirota equation, 
derivative NLS equation, and Sasa-Satsuma equation. 
In 2009, after recalling certain aspects of the standard derivation of the NLS equation in nonlinear fiber optics, 
Lenells \cite{lenells_exactly_2008} derived the following integrable model
\begin{equation}
\ii u_t-\nu u_{tx}+\gamma u_{xx}+\rho |u|^2(u+\ii \nu u_x)=0
\end{equation}
when taking into account certain terms that are normally ignored. This model was first derived by 
Fokas with the aid of bi-Hamitonian methods \cite{fokas_class_1995}, so we named this model as Fokas-Lenells (FL) equation. 
After applying a gauge and coordinate transformation, the equation mentioned above can be reduced to \cite{lenells_exactly_2008, lenells_novel_2009}
\begin{equation}\label{FL}
    u_{xt}+u+\ii |u|^2 u_x=0.
\end{equation}
%has been derived to describe the behavior of monomode optical fibers, 
%where certain higher-order nonlinear effects are taken into account. 
%Through the gauge transformation $u(x,t)=q(x,t)\e{\rm e}^{-2it}$ and coordinate transformation $\tau=t,\xi=x-t$, we get the classical FL equation given by \cite{fokas_class_1995}
%\begin{equation}
%    {{\rm i}} q_{\xi\tau}-{{\rm i}} q_{\xi\xi}+2 q_{\xi}- |q|^2q_{\xi}+{{\rm i}} q=0.
%\end{equation}
Fokas and Lenells provided contributions by deriving the Hamiltonian structure 
and inverse scattering method (ISM) of the integrable FL equation in \cite{lenells_novel_2009}. 
Since then, several distinct types of solutions to the FL equation have been constructed using different techniques. 
The rogue wave solutions were derived from \cite{xu_n-order_2012, chen_peregrine_2014}, 
the dark soliton solutions were constructed using the Hirota bilinear method \cite{matsuno_direct_2012}, 
and the algebraic geometry solutions were constructed by Zhao et al. \cite{zhao_algebro-geometric_2013-1}. 

However, in the birefringent optical fiber systems, two wave packets of different carrier frequencies need to be considered. 
The corresponding coupled Fokas-Lenells (CFL) system
which is given in \cite{chen_peregrine_2018,ye_general_2019}
\begin{equation}\label{CFL system}
    \begin{split}
        {{\rm i}} D_{\xi} q_{1,\tau}+\frac{-\eta}{2} q_{1,\xi\xi} + (2|q_{1}|^{2}+\sigma|q_{2}|^2) D_{\xi} q_{1} + \sigma q_{1}q_{2}^{*} (D_{\xi} q_{2})=0, \\
        {{\rm i}} D_{\xi} q_{2,\tau}+\frac{-\eta}{2} q_{2,\xi\xi} + (2\sigma|q_{2}|^{2}+|q_{1}|^2) D_{\xi} q_{1} + q_{2}q_{1}^{*} (D_{\xi} q_{2})=0,
    \end{split}
\end{equation}
can be utilized to describe the propagation of ultrashort optical pulses \cite{matveev_darboux_1991,lenells_novel_2009,baronio_observation_2018}
in the study of ultrafast optics and hydrodynamics, 
where $\eta=\pm 1$ is the type of dispersion with $\sigma=\pm 1 $, $D_{\xi}=1+{\rm i}\nu \frac{\partial}{\partial \xi}$
is a differential operator and $\nu$ is the permutation of the Manakov system. 
The CFL system is also a generalization of the Manakov system that takes into account more physical effects than the latter \cite{chen_peregrine_2014}. 
Manakov system includes the terms of group-velocity dispersion and self- and cross-phase modulation. 
Additionally, the CFL system takes into account the effects of space-time coupling \cite{boyd_nonlinear_2020} 
and self-deepening \cite{moses_controllable_2006}. 
These terms are obtained by considering the slowly varying envelope approximation in \cite{lenells_novel_2009}. 

The CFL system \eqref{CFL system}, with $\xi = \nu \zeta$ and the transformation
\begin{equation}
    \zeta=\eta x- \eta t, \quad \tau= -2 \nu^{2} t, \quad q_{i}=\frac{{\rm i}}{2\nu}{\rm e}^{\eta {\rm i}x+\eta {\rm i}t}u_{i},
\end{equation}
yields the CFL equations
\begin{equation}\label{CFL}
    \begin{split}
        u_{1,xt}+u_1+{\rm i}(|u_1|^2+\frac{1}{2}\sigma |u_2|^2)u_{1,x}+\frac{{\rm i}}{2}\sigma u_1u_2^*u_{2,x}=&0,  \\
        u_{2,xt}+u_2+{\rm i}(\sigma |u_2|^2+\frac{1}{2}|u_1|^2)u_{2,x}+\frac{{\rm i}}{2}u_2u_1^*u_{1,x}=&0,
    \end{split}
\end{equation}
which were initially proposed by Guo and Ling \cite{guo_riemann-hilbert_2012} using the matrix generalization of the Lax pair. 
Ling, Feng, and Zhu delved into the integrability of the CFL equations in \cite{ling_general_2018} 
and constructed multi-Hamiltonian structures using the Tu scheme. 
The Lax pair of the CFL equations is
\begin{equation}\label{CFL-lax}
    \begin{split}
        \mathbf{\Phi}_x=&\mathbf{U}(x,t;\lambda)\mathbf{\Phi}, \qquad \mathbf{U}(x,t;\lambda)={{\rm i}}\lambda^{-2}\sigma_3+\lambda^{-1}\mathbf{Q}_x, \\
        \mathbf{\Phi}_t=&\mathbf{V}(x,t;\lambda)\mathbf{\Phi}, \qquad \mathbf{V}(x,t;\lambda)={{\rm i}}\left(\frac{1}{4}\lambda^2\sigma_3+\frac{1}{2}\sigma_3(\mathbf{Q}^2-\lambda \mathbf{Q})\right),
    \end{split}
\end{equation}
where
\begin{equation*}
    \sigma_3=\begin{pmatrix}
        1&0&0\\
        0&-1&0\\
        0&0&-1
    \end{pmatrix}
        ,\qquad
    \mathbf{Q}=
    \begin{pmatrix}
        0 & v_1&\sigma v_2 \\
        u_1 &0&0 \\
        u_2 &0&0
    \end{pmatrix}.
\end{equation*}
The zero curvature equation 
$\mathbf{U}_t-\mathbf{V}_x+[\mathbf{U},\mathbf{V}]=0$ ($[\mathbf{U},\mathbf{V}]\equiv \mathbf{U}\mathbf{V}-\mathbf{V}\mathbf{U}$ 
is the commutator) for Lax pair \eqref{CFL-lax} yields the CFL equations
with the symmetric condition $v_{i}=u_{i}^{*},i=1,2$ (the superscript $^{*}$ denotes the complex conjugate). 

Some investigations have already been carried out on the CFL equations, including 
the Riemann-Hilbert approach \cite{kang2018multi} and modulation instability \cite{yue_modulation_2021}. 
On the other hand, due to the integrability, 
we can construct different types of exact solutions for the CFL equations utilizing the methods of integrable systems. 
In 2017, Zhang et al. constructed the solitons, breathers, and rogue waves via the Darboux transformation of 
the integrable CFL equations \cite{zhang_solitons_2017}. 
In 2018, Ling et al. utilized the generalized Darboux transformation to obtain general soliton solutions 
\cite{ling_general_2018}, such as bright solitons, bright-dark solitons, and others. 
The general rogue wave solutions were constructed by Ye. et al. \cite{ye_general_2019} in 2019, and some localized waves were constructed
by Yue et al. \cite{yue_modulation_2021} in 2021. 
%As we can see, for the integrable system, the Darboux transformation is a powerful tool to construct exact solutions. 
Drawing upon the rogue wave solutions, 
it has been demonstrated that lower-order rogue waves can have special patterns 
as evidenced in various graphs \cite{kang2018multi,ye_general_2019,yue_modulation_2021}. 

More specifically, for the CFL equations, using the Darboux transformation, 
we can construct rogue wave solutions \cite{ye_general_2019} at the branch points of multiplicity two and three on the 
Riemann surface which is given by the spectral characteristic polynomial. 
In these two cases, Ye et al. \cite{ye_general_2019} presented figures illustrating 
first-order and second-order rogue waves, showcasing their doublet, triplet, quartet, and sextet states. 
The question naturally arises as to how to study these particular patterns for high-order rogue waves. 
%The problem naturally arises as to how to account for these particular wave patterns for high-order rogue waves, 
%which constitute a field of research referred to as rogue wave patterns. 
%Due to the integrability, the system can be solved utilizing the methods of integrable systems. 
%For instance, the inverse scattering transform, Darboux transformation, algebraic-geometry method, Hirota bilinear method, and so on. In this work, we would like to apply the Darboux transformation to study the relevant problems of the CFL equations. 
%The Darboux transformation is a method for finding exact solutions to integrable equations 
%and has a close connection with the inverse scattering method (ISM) \cite{matveev_darboux_1991, gu_darboux_2004, terng_backlund_1998}. 

Recently, the studies of rogue wave patterns become popular in the field of rogue waves, which can be used to predict higher-order rogue wave events 
and recognize their decomposition mechanism. 
The roots of special polynomials have been found to be closely associated with rogue wave patterns in various equations, 
as evidenced by previous studies.
In 2021, Yang et al. explored the rogue wave patterns of the NLS equation that corresponds to the Yablonskii-Vorob'ev hierarchies in \cite{yang_rogue_2021-1}. 
In their subsequent work \cite{yang_rogue_2022} in 2023, they examined the rogue wave patterns of the Manakov equations 
and the three-wave resonant interaction equation associated with Okamoto polynomial hierarchies. 
In \cite{zhang_rogue_2022}, Zhang et al. demonstrated that the rogue wave patterns of the vector NLS equation are associated with 
generalized Wronskian-Hermite polynomials. 
%Rogue wave patterns are useful for forecasting higher-order rogue wave phenomena and decomposing them. 
%It has been demonstrated that special polynomial roots can characterize rogue wave structures in various equations. 
%For instance, Bo Yang explored the patterns of the NLS equation related to the Yablonskii-Vorobev hierarchies 
%in \cite{yang_rogue_2021}, and Yang studied the structures of both Manakov equations 
%and three-wave resonant interaction equations related to Okamoto polynomial hierarchies in \cite{yang_rogue_2022}.
To the best of our knowledge, there are no studies on the rogue wave patterns for the CFL equations. 
The main contribution of this work is to study the patterns of rogue waves generated at the branch points 
of multiplicity three \cite{ling_general_2018,ye_general_2019}. 
%is associated with Okamoto polynomial hierarchies. The rogue wave 
%will decompose into first-order rogue wave solutions far from the origin and lower-order rogue wave solutions near the origin. 

Actually, the patterns of rogue wave solutions generated at branch points of different multiplicity are associated with different polynomials. 
The case of multiplicity two is associated with the Yablonskii-Vorob’ev polynomial hierarchies. 
The deep-going analysis for the rogue wave solutions generated by the case of multiplicity two needs to be given separately. 
In this work, we concentrate on the case of multiplicity three and analyze the rogue wave patterns, 
which are associated with Okamoto polynomial hierarchies \cite{yang_rogue_2022}. 

In contrast to previous studies on rogue wave patterns \cite{yang_rogue_2021-1,yang_rogue_2022,zhang_rogue_2022}, 
we utilize the Lax pair and the Darboux transformation to construct rogue wave solutions. 
Considering that our research is rooted in the integrability of the CFL equations, 
it is conceivable that a similar methodology can be applied to other general integrable systems, 
enabling the derivation and analysis of rogue waves and their associated patterns.

%Based on the Lax pair of the CFL equations and the Darboux transformation, 
%we can construct the rogue wave solutions of the CFL equations and study their patterns.

We organize this work as follows. 
In Section \ref{sec-pre}, we introduce Okamoto polynomial hierarchies and the Darboux transformation for the Lax pair. 
In Section \ref{sec-solution}, we introduce the plane wave solutions for the CFL equations and study the branch points of Riemann surfaces given by
the spectral characteristic polynomial. At the branch point of multiplicity three, we construct high-order rogue wave solutions. 
In Section \ref{sec-pattern}, we analyze the patterns of the rogue wave solutions generated at the branch point of multiplicity three. 
By utilizing the root structures of Okamoto polynomial hierarchies, the rogue wave patterns have two parts: 
the outer region and the inner region. 
We decompose the rogue wave solutions into some first-order rogue wave solutions in the outer region 
and a lower-order rogue wave solution in the inner region.

\section{Preliminaries}\label{sec-pre}

To initiate our analysis of rogue wave patterns in the CFL equations, 
we will provide some preliminaries. 
This section will cover Okamoto polynomial hierarchies and the Darboux transformations 
for the Lax pair \eqref{CFL-lax}.

Okamoto polynomial hierarchies, as outlined in the study by Yang et al. \cite{yang_rogue_2022}, 
play a crucial role in the analysis of rogue wave patterns. 
The rogue wave patterns indicate the specific positions of rogue waves when one of the internal parameters is sufficiently large. 
Furthermore, the Darboux transformation is a powerful tool for constructing solitonic solutions \cite{ling_general_2018,ye_general_2019}, 
as it enables us to derive the rogue wave solutions that we aim to investigate.
\subsection{Okamoto polynomial hierarchies}

Okamoto polynomial hierarchies \cite{yang_rogue_2022} 
are a generalization of the Okamoto polynomials \cite{okamoto_studies_1986}. 
Okamoto demonstrated that the logarithmic derivative of the Okamoto polynomials 
yields rational solutions to the Painlev\'e IV equation. 
Lateer, Kajiwara, and Ohta discovered the determinant representation of Okamoto polynomials 
using Schur polynomials \cite{kajiwara_determinant_1998}. 
Building upon this discovery, 
the determinant representation of Okamoto polynomials can be generalized to define its hierarchies \cite{yang_rogue_2022}. 

Before introducing Okamoto polynomial hierarchies, it is necessary to define Schur polynomials. 
Given an infinite dimensional vector 
$\mathbf{x}=(x_{1},x_{2},\cdots)\in \mathbb{C}^{\infty}$, the Schur polynomials $S_{n}$ are defined by
\begin{equation}
    \sum_{n=0}^{\infty}S_{n}(\mathbf{x})\epsilon^{n}=\exp({\sum_{n=1}^{\infty}x_{n}\epsilon^{n}}). 
\end{equation}
We also define $S_n(\mathbf{x})=0$ for $n\leq -1$. 
In order to analyze the rogue wave patterns, we introduce the following propositions regarding Schur polynomials.
\begin{prop}\label{prop-Ok1}
    For any complex $\eta\ne 0$ and infinite dimensional vector $\mathbf{x}=(x_{1},x_{2},\cdots)\in \mathbb{C}^{\infty}$, we have
    \begin{equation}
        S_{n}(x_{1},x_{2},\cdots)=\eta^{n}S_{n}(x_{1}\eta^{-1},x_{2}\eta^{-2},\cdots).
    \end{equation}
\end{prop}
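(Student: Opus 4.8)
The plan is to work directly from the generating-function definition of the Schur polynomials and exploit the fact that the scaling $x_k \mapsto x_k \eta^{-k}$ corresponds to a rescaling of the auxiliary variable $\epsilon$. Concretely, I would set $\mathbf{y} = (x_1 \eta^{-1}, x_2 \eta^{-2}, \cdots)$ and apply the defining identity to $\mathbf{y}$ in place of $\mathbf{x}$, obtaining
\begin{equation*}
\sum_{n=0}^{\infty} S_n(\mathbf{y}) \epsilon^n = \exp\left(\sum_{k=1}^{\infty} x_k \eta^{-k} \epsilon^k\right) = \exp\left(\sum_{k=1}^{\infty} x_k (\eta^{-1}\epsilon)^k\right).
\end{equation*}

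Next I would perform the substitution $\epsilon = \eta \tilde{\epsilon}$, which turns the right-hand side into $\exp\bigl(\sum_{k=1}^{\infty} x_k \tilde{\epsilon}^k\bigr)$, i.e.\ precisely the generating function of $S_n(\mathbf{x})$. The left-hand side becomes $\sum_{n=0}^{\infty} \eta^n S_n(\mathbf{y}) \tilde{\epsilon}^n$. Thus
\begin{equation*}
\sum_{n=0}^{\infty} \eta^n S_n(\mathbf{y}) \tilde{\epsilon}^n = \sum_{n=0}^{\infty} S_n(\mathbf{x}) \tilde{\epsilon}^n,
\end{equation*}
and comparing the coefficient of $\tilde{\epsilon}^n$ on both sides yields $\eta^n S_n(\mathbf{y}) = S_n(\mathbf{x})$, which is exactly the claimed identity $S_n(x_1,x_2,\cdots) = \eta^n S_n(x_1\eta^{-1}, x_2\eta^{-2}, \cdots)$.

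The only point that needs a word of justification is the coefficient comparison: since each $S_n$ is a polynomial in finitely many of its arguments, both sides are honest formal power series in $\tilde{\epsilon}$ (equivalently, one may regard the identity as an equality of polynomial identities in the $x_k$ after truncating), so equality of the two series forces equality of coefficients termwise. There is no genuine obstacle here — the statement is essentially a homogeneity/weighted-scaling property of Schur polynomials, where $x_k$ carries weight $k$ — so the "hard part" is merely bookkeeping with the change of variable $\epsilon \leftrightarrow \eta\tilde{\epsilon}$ and being careful that the substitution is legitimate ($\eta \ne 0$ guarantees invertibility, so nothing is lost).
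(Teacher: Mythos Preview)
Your proof is correct and follows essentially the same approach as the paper: both arguments use the generating-function definition, rewrite the exponent as $\sum_k x_k(\eta^{-1}\epsilon)^k$, and then compare coefficients after the change of variable $\epsilon \leftrightarrow \eta\tilde\epsilon$. The only cosmetic difference is that the paper starts from the series for $S_n(\mathbf{x})$ and arrives at $\eta^n S_n(x_1\eta^{-1},x_2\eta^{-2},\cdots)$, whereas you run the computation in the opposite direction.
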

%    \begin{equation}
%        \lim_{\eta\to\infty}\frac{S_{n}(\mathbf{x})}{S_{n}(\mathbf{v})}=1, 
%    \end{equation}
\begin{prop}\label{prop-OK2}
    Given $\mathbf{x}=(x_{1},x_{2},\cdots)\in \mathbb{C}^{\infty}$ and an integer $k\geq 2$. 
    If $x_{i}=\mathcal{O}(\eta),\forall i\ne k$ and 
    $x_{k}=\mathcal{O}(\eta^{k})$, for $n\geq 2$, we have the asymptotic expansion  
    \begin{equation}
        S_{n}(\mathbf{x})=S_{n}(\mathbf{v})+\begin{cases}
            \mathcal{O}(\eta^{n-1}), \quad k\geq 3,\\
            \mathcal{O}(\eta^{n-2}), \quad k=2,
        \end{cases}
    \end{equation}
    where $\mathbf{v}=(x_{1},0,\cdots,0,x_{k},0,\cdots)$. Especially, $S_{n}(\mathbf{x})=S_{n}(\mathbf{v})$ for $n=1,2$. 
\end{prop}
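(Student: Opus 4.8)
The plan is to read the claim off the monomial expansion of the Schur polynomials and keep track of powers of $\eta$. From the generating-function definition,
\[
S_n(\mathbf{x})=\sum_{\mathbf{m}}\ \prod_{j\ge 1}\frac{x_j^{m_j}}{m_j!},
\]
the sum running over all tuples $\mathbf{m}=(m_1,m_2,\dots)$ of non-negative integers with $\sum_{j\ge 1}j\,m_j=n$; there are finitely many of them, and each involves only $x_1,\dots,x_n$. Under the hypotheses the monomial indexed by $\mathbf{m}$ is $\mathcal{O}\big(\eta^{d(\mathbf{m})}\big)$ with $d(\mathbf{m})=k\,m_k+\sum_{j\ne k}m_j$, and comparing this with $n=k\,m_k+\sum_{j\ne k}j\,m_j$ yields
\[
n-d(\mathbf{m})=\sum_{j\ge 2,\ j\ne k}(j-1)\,m_j\ \ge\ 0,
\]
with equality precisely when $m_j=0$ for every $j\ge 2$ with $j\ne k$, i.e.\ when the monomial is built only from $x_1$ and $x_k$.

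Given these two identities the argument is short. Only the monomials supported on $\{x_1,x_k\}$ can reach the top order $\mathcal{O}(\eta^{n})$, and their sum is exactly what one gets by setting $x_j=0$ for $j\notin\{1,k\}$ in the expansion above, namely $S_n(\mathbf{v})$; every other monomial has $d(\mathbf{m})\le n-1$. Hence $S_n(\mathbf{x})-S_n(\mathbf{v})$ is a sum of monomials each of order $\mathcal{O}(\eta^{n-1})$ or smaller. To get the sharp exponent I would maximise $d(\mathbf{m})$ — equivalently minimise $\sum_{j\ge 2,\ j\ne k}(j-1)m_j$ — over tuples that genuinely contain some $x_j$ with $j\ge 2$, $j\ne k$. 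If $k\ge 3$ the index $j=2$ is admissible, the minimum is $1$ (take $m_2=1$ and realise the remaining weighted degree $n-2\ge 0$ by a power of $x_1$, possibly together with $x_k$), so $d(\mathbf{m})\le n-1$ and this value is attained: $S_n(\mathbf{x})-S_n(\mathbf{v})=\mathcal{O}(\eta^{n-1})$. If $k=2$ the index $j=2$ is forbidden, so the minimum is $2$ (take $m_3=1$, feasible once $n\ge 3$): $d(\mathbf{m})\le n-2$ and $S_n(\mathbf{x})-S_n(\mathbf{v})=\mathcal{O}(\eta^{n-2})$; and for $n=2$ no such tuple exists, so the two polynomials already coincide. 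The low-order statements are immediate from the expansion: for $n=1$ the only tuple is $m_1=1$, so $S_1(\mathbf{x})=x_1=S_1(\mathbf{v})$, and for $n=2$ the only weighted-degree-$2$ monomials are $x_1^2/2$ and $x_2$, both of which lie in $S_2(\mathbf{v})$ when $k=2$.

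I do not expect a real obstacle: the whole proof is the two displayed identities together with the elementary optimisation that follows. The two points that need a little care are keeping the cases $k=2$ and $k\ge 3$ separate — the exclusion of the index $j=2$ is exactly what upgrades the error from $\mathcal{O}(\eta^{n-1})$ to $\mathcal{O}(\eta^{n-2})$ — and checking that the residual weighted degree in the optimisation can always be realised, which it can since a power of $x_1$ realises every non-negative weighted degree, so the stated error bounds are also sharp. If a more streamlined write-up is preferred, one can instead apply Proposition \ref{prop-Ok1} with $\tilde x_j=\eta^{-j}x_j$: then $\tilde x_1,\tilde x_k=\mathcal{O}(1)$ while $\tilde x_j=\mathcal{O}(\eta^{1-j})=o(1)$ for all other $j$, the claim reduces to $S_n(\tilde{\mathbf{x}})-S_n(\tilde{\mathbf{v}})=\mathcal{O}(\eta^{-1})$ (resp.\ $\mathcal{O}(\eta^{-2})$ when $k=2$), and this is transparent because every monomial of $S_n(\tilde{\mathbf{x}})$ that contains such a $\tilde x_j$ is $\mathcal{O}(\eta^{1-j})$, with $j=2$ (resp.\ $j=3$) the dominant case.
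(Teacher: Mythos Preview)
Your proof is correct. The primary approach you take---expanding $S_n$ as a sum over weighted partitions $\mathbf{m}$, computing the $\eta$-degree $d(\mathbf{m})=km_k+\sum_{j\ne k}m_j$ of each monomial, and reading off the key identity $n-d(\mathbf{m})=\sum_{j\ge 2,\,j\ne k}(j-1)m_j$---is genuinely different from the paper's argument. The paper instead works at the level of generating functions: after rescaling via Proposition~\ref{prop-Ok1} it factors
\[
\sum_{n\ge 0}\bigl(S_n(\eta^{-\bullet}\mathbf{x})-S_n(\eta^{-\bullet}\mathbf{v})\bigr)\epsilon^n
=\exp\!\bigl(x_1\eta^{-1}\epsilon+x_k\eta^{-k}\epsilon^k\bigr)\Bigl(\exp\!\bigl(\textstyle\sum_{j\ge 2,\,j\ne k}x_j\eta^{-j}\epsilon^j\bigr)-1\Bigr),
\]
notes that the second factor begins at $\mathcal{O}(\eta^{-1})\epsilon^2$ when $k\ge 3$ and at $\mathcal{O}(\eta^{-2})\epsilon^3$ when $k=2$, and extracts coefficients. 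Your monomial argument is more elementary and makes the dichotomy between $k=2$ and $k\ge 3$ completely explicit (whether or not $j=2$ is available in the minimisation); the paper's factorisation is slicker and handles all $n$ at once without an optimisation step. The alternative you sketch in your final paragraph is essentially the paper's method. One small remark: the ``Especially'' clause $S_2(\mathbf{x})=S_2(\mathbf{v})$ only holds literally when $k=2$, as you correctly note; for $k\ge 3$ one has $S_2(\mathbf{x})-S_2(\mathbf{v})=x_2=\mathcal{O}(\eta)=\mathcal{O}(\eta^{n-1})$, consistent with the general bound.
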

\begin{proof}
    For Proposition \ref{prop-Ok1}, we can establish the following identity
    \begin{equation}
        \begin{split}
            \sum_{n=0}^{\infty}S_{n}(\mathbf{x})\epsilon^{n}&=\exp({\sum_{n=1}^{\infty}x_{n}\epsilon^{n}})\\
            &=\exp({\sum_{n=1}^{\infty}\eta^{-n}x_{n}(\epsilon\eta)^{n}})\\
            &=\sum_{n=0}^{\infty}S_{n}(x_{1}\eta^{-1},x_{2}\eta^{-2},\cdots)(\epsilon\eta)^{n}\\
            &=\sum_{n=0}^{\infty}\eta^{n}S_{n}(x_{1}\eta^{-1},x_{2}\eta^{-2},\cdots)\epsilon^{n},
        \end{split}
    \end{equation}
    and group terms according to the power of $\epsilon$. 
    To prove Proposition \ref{prop-OK2}, if $k\geq 3$, we proceed with the following calculations
    \begin{equation}\label{pro11}
        \begin{split}
            &\sum_{n=0}^{\infty}(S_{n}(x_{1}\eta^{-1},\cdots,x_{k-1}\eta^{-(k-1)},x_{k}\eta^{-k},x_{k+1}\eta^{-(k+1)},\cdots)-
            S_{n}(x_{1}\eta^{-1},\cdots,0,x_{k}\eta^{-k},0,\cdots))\epsilon^{n}\\
            =&\exp(x_{1}\eta^{-1}\epsilon+x_{k}\eta^{-k}\epsilon^{k})\left(
                \exp(\sum_{n=2,n\ne k}^{\infty}x_{n}(\frac{\epsilon}{\eta})^{n})-1
            \right)\\
            =&\exp(\mathcal{O}(1)\epsilon+\mathcal{O}(1)\epsilon^{k})\left(
                \exp(\sum_{n=2,n\ne k}^{\infty}\mathcal{O}(\eta^{-n+1})\epsilon^{n})-1
            \right)\\
            =&\left(\sum_{n=0}^{\infty}\mathcal{O}(1)\epsilon^{n}\right)
            \left(\mathcal{O}(\eta^{-1})\epsilon^{2}+\cdots\right)\\
            =&\sum_{n=2}^{\infty}\mathcal{O}(\eta^{-1})\epsilon^{n}.
        \end{split}
    \end{equation}
    If $k=2$, 
    \begin{equation*}
        \begin{split}
            &\sum_{n=0}^{\infty}(S_{n}(x_{1}\eta^{-1},x_{2}\eta^{-2},x_{3}\eta^{-3},\cdots)-
            S_{n}(x_{1}\eta^{-1},x_{2}\eta^{-2},0,\cdots))\epsilon^{n}\\
            =&\exp(x_{1}\eta^{-1}\epsilon+x_{2}\eta^{-2}\epsilon^{2})\left(
                \exp(\sum_{n=3}^{\infty}x_{n}(\frac{\epsilon}{\eta})^{n})-1
            \right)\\
            =&\exp(\mathcal{O}(1)\epsilon+\mathcal{O}(1)\epsilon^{2})\left(
                \exp(\sum_{n=3}^{\infty}\mathcal{O}(\eta^{-n+1})\epsilon^{n})-1
            \right)\\
            =&\left(\sum_{n=0}^{\infty}\mathcal{O}(1)\epsilon^{n}\right)
            \left(\mathcal{O}(\eta^{-2})\epsilon^{3}+\cdots\right)\\
            =&\sum_{n=3}^{\infty}\mathcal{O}(\eta^{-2})\epsilon^{n}.
        \end{split}
    \end{equation*}
    By utilizing Proposition \ref{prop-Ok1} and grouping the terms with respect to $\epsilon$, 
    we complete the proof.
\end{proof}

Based on Proposition \ref{prop-OK2}, it is established that the Schur polynomials can 
be expressed as simplified polynomials involving only two parameters $x_{1},x_{k}$ with error terms
when one of the parameters is large enough. This simplification provides the basis 
for our investigation into Okamoto polynomial hierarchies 
\cite{okamoto_studies_1986,kajiwara_determinant_1998, yang_rogue_2022}. To define the hierarchies, 
we consider a special form of Schur polynomials $p_j^{[m]}(z)$ which are defined by
\begin{equation}
    \sum_{j=0}^{\infty} p_j^{[m]}(z) \epsilon^j=\exp \left(z \epsilon+\epsilon^m\right),
\end{equation}
where $z\in\mathbb{C}$. 
Then we define $k$-type Okamoto polynomial hierarchies \cite{yang_rogue_2022} for $k=0,1$ respectively

\begin{equation}\label{Okamoto}
    W_{N}^{[k,m]}(z)=c_{N}^{[k]}\det(p_{3i-j-k}^{[m]})_{1\leq i,j \leq N}=c_{N}^{[k]}
    \left|
    \begin{matrix}
        p_{2-k}^{[m]}(z) & p_{1-k}^{[m]}(z) & \cdots & p_{3-N-k}^{[m]}(z)\\
        p_{5-k}^{[m]}(z) & p_{4-k}^{[m]}(z) & \cdots & p_{6-N-k}^{[m]}(z)\\
        \vdots & \vdots & \ddots & \vdots \\
        p_{3N-1-k}^{[m]}(z) & p_{3N-2-k}^{[m]}(z) & \cdots & p_{2N-k}^{[m]}(z)\\
    \end{matrix}
    \right|,
\end{equation}
where
\begin{equation}
    c_N^{[k]}=3^{-\frac{1}{2} N(N-1)} \frac{(2-k) ! (5-k) ! \cdots(3 N-1-k) !}{0 ! 1 ! \cdots(N-1) !}, 
\end{equation}
which ensure that the leading order terms of $W_{N}^{[k,m]}(z)$ with respect to $z$ are equal to $1$. 
To study the decomposition of rogue wave solutions, we also define 
\begin{equation}
    \begin{split}
    W_{N,1}^{[k,m]}(z)=c_{N}^{[k]}&\left(
    \left|
        \begin{matrix}
            p_{2-k}^{[m]}(z)  & \cdots & p_{4-N-k}^{[m]}(z) & p_{1-N-k}^{[m]}(z)\\
            p_{5-k}^{[m]}(z)  & \cdots & p_{7-N-k}^{[m]}(z)& p_{4-N-k}^{[m]}(z)\\
            \vdots & \vdots & \ddots & \vdots \\
            p_{3N-1-k}^{[m]}(z)  & \cdots & p_{2N+1-k}^{[m]}(z) & p_{2N-2-k}^{[m]}(z)\\
        \end{matrix}
    \right| \right.\\
    &\left.+
    \left|
        \begin{matrix}
            p_{2-k}^{[m]}(z)  & \cdots & p_{2-N-k}^{[m]}(z) & p_{3-N-k}^{[m]}(z)\\
            p_{5-k}^{[m]}(z)  & \cdots & p_{5-N-k}^{[m]}(z)& p_{6-N-k}^{[m]}(z)\\
            \vdots & \vdots & \ddots & \vdots \\
            p_{3N-1-k}^{[m]}(z)  & \cdots & p_{2N-1-k}^{[m]}(z) & p_{2N-k}^{[m]}(z)\\
        \end{matrix}
    \right|   
    \right).
    \end{split}
\end{equation}
The term $W_{N,1}^{[k,m]}(z)$ is the sum of two variations of $W_{N}^{[k,m]}(z)$. 
One variation is to subtract two from the indices of the elements in the penultimate column of $W_{N}^{[k,m]}(z)$, 
and the other variation is to subtract two from the indices of the elements in the last column of $W_{N}^{[k,m]}(z)$. 
For the CFL equations, the expression of rogue wave solutions in the outer region includes 
terms $W_{N,1}^{[k,m]}(z)$. 

Now we turn to the root structures of the Okamoto polynomial hierarchies \eqref{Okamoto}, 
which are important in understanding rogue wave patterns. 
Previous studies have demonstrated that all Okamoto polynomials (the case $m=2$) have simple 
roots \cite{kametaka_poles_1983,fukutani_special_2000}. 
When $m$ and $N$ are small, it can be observed that the nonzero roots of the Okamoto polynomial hierarchies are typically simple, 
while the zero roots may be multiple roots \cite{yang_rogue_2022}. 
However, whether all non-zero roots of the hierarchies are simple remains a conjecture. 
Nevertheless, some results have already been obtained. 
Yang et al. have studied the root distributions of the Okamoto polynomial hierarchies \cite{yang_rogue_2022}. 
This theorem reveals the symmetry in the patterns of rogue waves. 
To analyze the root distributions of Okamoto polynomial hierarchies, let $N_{0}$ be the remainder of $N$ divided by $m$, 
we define $(N_{1}^{[k]},N_{2}^{[k]})$ as follow: If $m\mod{3}\equiv 1$, 
\begin{equation}
    (N_{1}^{[0]},N_{2}^{[0]})=
    \begin{cases}
        (N_{0},0),                               & 0\leq N_{0} \leq [\frac{m}{3}],\\
        ([\frac{m}{3}],N_{0}-[\frac{m}{3}]), & [\frac{m}{3}]+1\leq N_{0} \leq 2 [\frac{m}{3}],\\
        (m-1-N_{0},m-1-N_{0}),                     & 2[\frac{m}{3}]+1\leq N_{0} \leq m-1,\\
    \end{cases}
\end{equation}
\begin{equation}
    (N_{1}^{[1]},N_{2}^{[1]})=
    \begin{cases}
        (0,N_{0}),                               & 0\leq N_{0} \leq [\frac{m}{3}],\\
        ([\frac{m}{3}]-1,N_{0}-1-[\frac{m}{3}]), & [\frac{m}{3}]+1\leq N_{0} \leq 2 [\frac{m}{3}]+1,\\
        (m-1-N_{0},m-N_{0}),                     & 2[\frac{m}{3}]+2\leq N_{0} \leq m-1.\\
    \end{cases}
\end{equation}
If $m\mod{3}\equiv 2$, 
\begin{equation}
    (N_{1}^{[0]},N_{2}^{[0]})=
    \begin{cases}
        (N_{0},0),                               & 0\leq N_{0} \leq [\frac{m}{3}],\\
        (N_{0}-1-[\frac{m}{3}],[\frac{m}{3}]), & [\frac{m}{3}]+1\leq N_{0} \leq 2 [\frac{m}{3}]+1,\\
        (m-1-N_{0},m-1-N_{0}),                     & 2[\frac{m}{3}]+2\leq N_{0} \leq m-1,\\
    \end{cases}
\end{equation}
\begin{equation}
    (N_{1}^{[1]},N_{2}^{[1]})=
    \begin{cases}
        (0,N_{0}),                               & 0\leq N_{0} \leq [\frac{m}{3}]+1,\\
        (N_{0}-1-[\frac{m}{3}],[\frac{m}{3}]+1), & [\frac{m}{3}]+2\leq N_{0} \leq 2 [\frac{m}{3}]+1,\\
        (m-1-N_{0},m-N_{0}),                     & 2[\frac{m}{3}]+2\leq N_{0} \leq m-1,\\
    \end{cases}
\end{equation}
where the symbol $[x]=\max_{n\in\mathbb{Z},n\leq x}{n}$. 
These notations are used to study the degree of zero roots of Okamoto polynomial hierarchies 
and analyze the rogue wave patterns in the inner region. 
The following theorem \cite{yang_rogue_2022} is hold:
\begin{thm}\label{Ok-root}
    Given an integer $m\geq2$, the Okamoto polynomial hierarchies $W_{N}^{[k,m]}(z)$ is monic with degree $N(N+1-k)$. 
    If $m$ is not a multiple of $3$, then $W_{N}^{[k,m]}(z)$ have the decomposition
    \begin{equation}\label{Ok-poly}
        W_{N}^{[k,m]}(z)=z^{N^{[k]}}q_{N}^{[k,m]}(z^{m}),
    \end{equation}
    where $q_{N}^{[k,m]}(\xi)$ is a monic polynomial with respect to $\xi$ with all real-value coefficients and a nonzero constant term. 
    The multiplicity of the zero root is 
    \begin{equation}
        N^{[k]}=N_{1}^{[k]}(N_{1}^{[k]}-N_{2}^{[k]}+1)+(N_{2}^{[k]})^{2}.
    \end{equation}
    If $m$ is a multiple of $3$, then 
    \begin{equation}
        W_{N}^{[k,m]}(z)=z^{N(N+1-k)}.
    \end{equation}
\end{thm}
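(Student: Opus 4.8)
The plan is to reduce everything to properties of the special Schur polynomials $p_j^{[m]}(z)$ and the Wronskian-type determinant structure of $W_N^{[k,m]}(z)$. First I would establish the scaling symmetry. From the generating function $\sum_j p_j^{[m]}(z)\epsilon^j = \exp(z\epsilon+\epsilon^m)$, replacing $\epsilon$ by $\omega\epsilon$ with $\omega$ a primitive cube root of unity gives $p_j^{[m]}(\omega z) = \omega^{j}\, p_j^{[m]}(z) \cdot \omega^{-m\cdot(\text{correction})}$ only when $3\mid m$ is delicate, so more carefully: $\exp(\omega z\epsilon + \omega^m\epsilon^m)$. When $m\not\equiv 0 \pmod 3$ this does not simply rescale, but the right substitution is $\epsilon\mapsto\omega\epsilon$ together with tracking that each entry $p_{3i-j-k}^{[m]}$ in row $i$, column $j$ picks up a factor $\omega^{3i-j-k}=\omega^{-j-k}$, so the determinant transforms by $\omega^{-k N}\,\omega^{-(1+2+\cdots+N)}\,(\text{stuff})$ — a clean global power of $\omega$ — which forces $W_N^{[k,m]}(\omega z)=\omega^{d}W_N^{[k,m]}(z)$ for the appropriate exponent $d$. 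Combined with the fact (to be checked from the generating function) that $p_j^{[m]}$ has nonzero coefficients only at exponents $\equiv j \pmod m$ in $z$, this yields that $W_N^{[k,m]}$, as a polynomial in $z$, is supported on exponents in a single residue class mod $\gcd(3,m)\cdot(\ldots)$ — precisely the residue class that produces the factorization $z^{N^{[k]}}q_N^{[k,m]}(z^m)$ when $3\nmid m$.

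Second, I would compute the total degree. Each $p_j^{[m]}(z)$ is monic of degree $j$ (when $j\ge 0$), so the leading behavior of the determinant $\det(p_{3i-j-k}^{[m]})_{1\le i,j\le N}$ is governed by the leading term $z^{3i-j-k}$ of each entry; the Vandermonde-like combinatorics gives leading degree $\sum_{i}(3i-k) - \sum_j j = 3\cdot\frac{N(N+1)}{2} - kN - \frac{N(N+1)}{2} = N(N+1) - kN = N(N+1-k)$, and the normalization constant $c_N^{[k]}$ is exactly the reciprocal of the resulting coefficient (a ratio of factorials over the Barnes-type product $0!1!\cdots(N-1)!$ with the $3^{-N(N-1)/2}$ absorbing the scaling), so $W_N^{[k,m]}$ is monic of degree $N(N+1-k)$. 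This is a standard Schur/Jacobi–Trudi leading-term computation.

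Third, for $3\mid m$ I would argue the polynomial collapses entirely to $z^{N(N+1-k)}$. Here the point is that $\exp(z\epsilon+\epsilon^m)$ with $3\mid m$ means $p_j^{[m]}(z) = \sum_{\ell\ge 0} \frac{z^{j-m\ell}}{(j-m\ell)!\,\ell!}$; the Wronskian determinant of such a system, by row/column operations eliminating the $\epsilon^m$ term (which only shifts indices by multiples of $m$, hence by multiples of $3$, matching the row-stride $3$ of the determinant), reduces to the determinant of pure monomials $p_j^{[0]}(z)=z^j/j!$, whose Wronskian is, up to the normalizing constant, exactly $z^{N(N+1-k)}$. I would make this rigorous by showing the map $p^{[m]}\mapsto p^{[0]}$ is realized by a unipotent (determinant-one) transformation on the rows of the matrix when $3\mid m$.

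The main obstacle will be the precise bookkeeping for the multiplicity $N^{[k]}$ of the zero root in the case $3\nmid m$, i.e. proving $N^{[k]}=N_1^{[k]}(N_1^{[k]}-N_2^{[k]}+1)+(N_2^{[k]})^2$. The degree-in-$z$ is $N(N+1-k)$ and we know $W_N^{[k,m]}=z^{N^{[k]}}q_N^{[k,m]}(z^m)$, so $N^{[k]}\equiv N(N+1-k)\pmod m$; but identifying the exact value (not just the residue) requires controlling the lowest-order term of the determinant, which amounts to evaluating a specialized Schur function at $z=0$ — equivalently counting how many columns of the matrix $(p_{3i-j-k}^{[m]}(0))$ are forced to vanish and with what multiplicity, organized by the residue $N_0 = N \bmod m$. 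I would handle this by expanding $p_j^{[m]}(z)$ near $z=0$, noting $p_j^{[m]}(0)\ne 0$ iff $m\mid j$, and then performing a careful Laplace/cofactor analysis of the determinant stratified by the three residue classes of the row indices $3i-k$ modulo $m$; the combinatorial identity for $N^{[k]}$ should drop out of counting lattice points, matching the piecewise definition of $(N_1^{[k]},N_2^{[k]})$. Since this portion is essentially transcribed from Yang et al.\ \cite{yang_rogue_2022}, I would either cite it directly or reproduce the lattice-counting argument in an appendix, rather than grinding through it here.
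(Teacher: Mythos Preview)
The paper does not prove this theorem; it simply imports it from Yang et al.\ \cite{yang_rogue_2022} with the sentence ``The following theorem \cite{yang_rogue_2022} is hold.'' So your closing suggestion---cite \cite{yang_rogue_2022} for the multiplicity formula---is in fact the paper's entire treatment of the whole statement. Everything else in your proposal is additional content the paper does not attempt.

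That said, your symmetry argument has a real error. You work with a primitive \emph{cube} root of unity $\omega$ and claim each entry $p_{3i-j-k}^{[m]}$ picks up a factor $\omega^{3i-j-k}$; but substituting $z\mapsto\omega z$ with $\omega^3=1$ does \emph{not} give $p_j^{[m]}(\omega z)=\omega^j p_j^{[m]}(z)$ unless $3\mid m$, because the monomials in $p_j^{[m]}$ have exponents congruent to $j$ modulo $m$, not modulo $3$. The correct scaling uses an $m$th root of unity: from $p_j^{[m]}(z)=\sum_{\ell\ge 0} z^{j-m\ell}/\big((j-m\ell)!\,\ell!\big)$ one sees that $p_j^{[m]}(\omega z)=\omega^{j}p_j^{[m]}(z)$ whenever $\omega^m=1$, and then pulling $\omega^{3i-k}$ from row $i$ and $\omega^{-j}$ from column $j$ yields $W_N^{[k,m]}(\omega z)=\omega^{N(N+1-k)}W_N^{[k,m]}(z)$, which is exactly the statement that $W_N^{[k,m]}(z)=z^{r}\cdot(\text{polynomial in }z^m)$ with $r\equiv N(N+1-k)\pmod m$. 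The number $3$ enters not through the root of unity but through the row stride of the determinant; this is what makes the $3\mid m$ collapse and the multiplicity combinatorics depend on $m\bmod 3$. Your degree computation and the row-reduction idea for the $3\mid m$ case are fine; you also do not address the reality of the coefficients of $q_N^{[k,m]}$, but neither does the paper.
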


Based on the Proposition \ref{prop-Ok1}, \ref{prop-OK2} and the root structures of Okamoto polynomial hierarchies in Theorem \ref{Ok-root}, 
we can analyze the rogue wave patterns in Section \ref{sec-pattern} if $m$ is not a multiple of $3$. The case of $m$ is a multiple of 
$3$ will be excluded from consideration, and we will see the reason in the proof of rogue wave patterns. 
When we consider the rogue wave decomposition in the inner region, the proof is similar 
to the Theorem \ref{Ok-root}. 

The root distributions indicate the positions in the rogue wave patterns. 
More specifically, with a linear transformation, the positions of rogue waves correspond to 
the root distributions of Okamoto polynomial hierarchies. The order of the rogue waves corresponds 
to the degree of roots. 

\subsection{Darboux transformation}
Now we turn to introducing the Darboux transformation \cite{ling_general_2018}, which is used to convert the Lax pair \eqref{CFL-lax} into a new one. 
For the new elements in the Lax pair \eqref{CFL-lax}, we denote it by adding superscript $^{[N]}$, such as the new potential functions $\mathbf{Q}^{[N]}$. 
By establishing a relationship between the original and new potential functions, together with initial seed solutions, 
we can construct a variety of new solutions. 
In Section \ref{sec-solution}, we will focus on specific parameter selections for the Darboux transformation, 
which enables us to generate rogue wave solutions by plane wave solutions.

We introduce the $N$-fold Darboux transformation $\mathbf{T}_N(\lambda;x,t)$, as presented in \cite{ling_general_2018}. 
Let $\mathbf{A}_i=|x_i\rangle \langle y_i|\mathbf{J}$, where $|x_i\rangle=(x_{i,1},x_{i,2},x_{i,3})^{T}$ and 
$|y_i\rangle=(y_{i,1},y_{i,2},y_{i,3})^{T}$ are three dimensional complex vectors, and 
$\mathbf{J}=\mathrm{diag} (1,-1,-\sigma)$. The transformation from  $|y_i\rangle$ to $|x_i\rangle$ is
\begin{equation*}
    \begin{split}
        \left[|x_{1,1}\rangle,|x_{2,1}\rangle,\cdots,|x_{N,1}\rangle\right]&  =
\left[|y_{1,1}\rangle,|y_{2,1}\rangle,\cdots,|y_{N,1}\rangle\right]\mathbf{B}^{-1},\,\, \mathbf{B}=(b_{ij})_{N\times N},\\
        \left[|x_{1,k}\rangle,|x_{2,k}\rangle,\cdots,|x_{N,k}\rangle\right]&  =
\left[|y_{1,k}\rangle,|y_{2,k}\rangle,\cdots,|y_{N,k}\rangle\right]\mathbf{M}^{-1},\,\,\mathbf{M}=(m_{ij})_{N\times N},\,\, k=2,3,
\end{split}
\end{equation*}
and the coefficients are given by
\begin{equation*}
b_{ij}=\frac{\langle y_i|\mathbf{J}|y_j\rangle}{\lambda_i^*-\lambda_j}+\frac{\langle y_i|\mathbf{J}\sigma_3|y_j\rangle}{\lambda_i^*+\lambda_j},\,\, m_{ij}=\frac{\langle y_i|\mathbf{J}|y_j\rangle}{\lambda_i^*-\lambda_j}-\frac{\langle y_i|\mathbf{J}\sigma_3|y_j\rangle}{\lambda_i^*+\lambda_j}. 
\end{equation*}
Let superscript $^{\dagger}$ denotes the complex conjugate and transposition, 
the $N$-fold Darboux transformation has the following form. 
\begin{thm}\label{DT}
By the following $N$-fold Darboux transformation 
\begin{equation}\label{n-fold-dt}
    \mathbf{T}_N(\lambda;x,t)=\mathbb{I}+\sum_{i=1}^{N}\left[\frac{\mathbf{A}_i}{\lambda-\lambda_i^*}-\frac{\sigma_3\mathbf{A}_i\sigma_3}{\lambda+\lambda_i^*}\right],
\end{equation}
the Lax pair \eqref{CFL-lax} can be converted into a new one. 
Then the B\"acklund transformation between old and new potential functions is
\begin{equation}\label{nfoldbt}
\mathbf{Q}^{[N]}=\mathbf{Q}+\sum_{i=1}^N(\mathbf{A}_i-\sigma_3\mathbf{A}_i\sigma_3),
\end{equation}
i.e. 
\begin{equation}
    u_{i}^{[N]}=u_{i}+2\mathbf{Y}_{i}\mathbf{M}^{-1}\mathbf{Y}^{\dagger},\quad i=1,2,
\end{equation}
where
\begin{equation*}
    \mathbf{Y}=\begin{pmatrix}
    y_{1,1} & y_{2,1} & \cdots & y_{N,1}
        \end{pmatrix},\,\,\mathbf{Y}_i=\begin{pmatrix}
        y_{1,i+1} & y_{2,i+1} & \cdots & y_{N,i+1} \\
        \end{pmatrix}.
\end{equation*}
\end{thm}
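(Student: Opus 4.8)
\emph{Proof idea.} The plan is to run the classical dressing argument and let the algebraic structure of $\mathbf{T}_N$ carry the proof. Set
\[
\mathbf{U}^{[N]}:=\bigl(\mathbf{T}_{N,x}+\mathbf{T}_N\mathbf{U}\bigr)\mathbf{T}_N^{-1},\qquad
\mathbf{V}^{[N]}:=\bigl(\mathbf{T}_{N,t}+\mathbf{T}_N\mathbf{V}\bigr)\mathbf{T}_N^{-1};
\]
these automatically solve the zero curvature equation (a fundamental solution $\mathbf{\Phi}$ of \eqref{CFL-lax} is sent to the fundamental solution $\mathbf{T}_N\mathbf{\Phi}$ of the new pair), so the whole content is to show (i) that $\mathbf{U}^{[N]},\mathbf{V}^{[N]}$ again have the form \eqref{CFL-lax} with $\mathbf{Q}$ replaced by some $\mathbf{Q}^{[N]}$ of the same block shape obeying $v_i^{[N]}=(u_i^{[N]})^*$, and (ii) that this $\mathbf{Q}^{[N]}$ is the one in \eqref{nfoldbt}. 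Since $\mathbf{U}^{[N]},\mathbf{V}^{[N]}$ are rational in $\lambda$, I would argue by a Liouville-type principle: list the candidate poles, show the spurious ones cancel, then match the behaviour at $\lambda=0$ and $\lambda=\infty$.

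The first thing to record are the two reductions carried by $\mathbf{T}_N$. The symmetric placement of the two residue terms in \eqref{n-fold-dt} gives the $\mathbb{Z}_2$ reduction $\sigma_3\mathbf{T}_N(-\lambda)\sigma_3=\mathbf{T}_N(\lambda)$ for free; in particular $\mathbf{T}_N(0)$ commutes with $\sigma_3$ and $\mathbf{U}^{[N]}(\lambda)=\sigma_3\mathbf{U}^{[N]}(-\lambda)\sigma_3$. The Hermitian reduction $\mathbf{J}\mathbf{T}_N(\lambda^*)^\dagger\mathbf{J}=\mathbf{T}_N(\lambda)^{-1}$ is where the coefficients $b_{ij},m_{ij}$ enter. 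Differentiating the symmetry constraint $\mathbf{J}\mathbf{Q}^\dagger\mathbf{J}=-\mathbf{Q}$ (which is exactly $v_i=u_i^*$) and using $\sigma_3\mathbf{J}=\mathbf{J}\sigma_3$ yields $\mathbf{U}(\mu)^\dagger\mathbf{J}=-\mathbf{J}\mathbf{U}(\mu^*)$, and similarly for $\mathbf{V}$, so that each left row $\langle y_i|\mathbf{J}$ solves the Lax pair at $\lambda_i^*$. I would then prove the Hermitian reduction by verifying that the product $\mathbf{J}\mathbf{T}_N(\lambda^*)^\dagger\mathbf{J}\,\mathbf{T}_N(\lambda)$ is pole-free and equals $\mathbb{I}$ at $\lambda=\infty$; cancellation of its candidate simple poles at $\pm\lambda_i$ and $\pm\lambda_i^*$ reduces — after splitting off the first row/column ($\sigma_3=\mathrm{diag}(1,-1,-1)$) — to the kernel identity $\mathbf{T}_N(\lambda_i)|y_i\rangle=0$ ($i=1,\dots,N$) and its conjugate, which is precisely the pair of $N\times N$ systems $[x_{1,1},\dots,x_{N,1}]\mathbf{B}=[y_{1,1},\dots,y_{N,1}]$, $[x_{1,k},\dots,x_{N,k}]\mathbf{M}=[y_{1,k},\dots,y_{N,k}]$ ($k=2,3$) defining the $|x_i\rangle$. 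Carrying out this bookkeeping cleanly — showing via a Lagrange/partial-fraction identity for the eigenfunctions that the off-diagonal residues telescope, and that $\mathbf{B},\mathbf{M}$ are invertible for generic spectral data — is the main technical obstacle; everything afterwards is formal.

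Granting both reductions, the pole analysis of $\mathbf{U}^{[N]}$ is routine. Since $\mathbf{T}_N^{-1}=\mathbf{J}\mathbf{T}_N(\lambda^*)^\dagger\mathbf{J}$ has poles only at $\pm\lambda_i$, while $\mathbf{T}_{N,x}+\mathbf{T}_N\mathbf{U}$ has poles only at $\pm\lambda_i^*$ and $\lambda=0$, the only candidate singularities of $\mathbf{U}^{[N]}$ are $\pm\lambda_i,\pm\lambda_i^*,0,\infty$. At $\lambda_i^*$ the residue of $\mathbf{T}_{N,x}+\mathbf{T}_N\mathbf{U}$ collapses to $|x_i\rangle_x\langle y_i|\mathbf{J}$ (the $\mathbf{U}$-terms cancel because $\langle y_i|\mathbf{J}$ is a left solution there), and $\langle y_i|\mathbf{J}\,\mathbf{T}_N^{-1}(\lambda_i^*)=\bigl(\mathbf{T}_N(\lambda_i)|y_i\rangle\bigr)^\dagger\mathbf{J}=0$; at $\lambda_i$ the residue of $\mathbf{T}_N^{-1}$ is $|y_i\rangle\langle x_i|\mathbf{J}$ and $(\mathbf{T}_{N,x}+\mathbf{T}_N\mathbf{U})|y_i\rangle=\partial_x\bigl(\mathbf{T}_N(\lambda_i)|y_i\rangle\bigr)=0$. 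So $\mathbf{U}^{[N]}$ is singular at most at $0$ and $\infty$. At $\infty$, $\mathbf{T}_N\to\mathbb{I}$ forces $\mathbf{U}^{[N]}\to0$ with no positive powers of $\lambda$; near $0$, $[\mathbf{T}_N(0),\sigma_3]=0$ keeps the leading term $\ii\lambda^{-2}\sigma_3$, and writing $\mathbf{U}^{[N]}=\ii\lambda^{-2}\sigma_3+\lambda^{-1}\mathbf{Q}^{[N]}_{x}+\mathcal{O}(1)$, the $\mathbb{Z}_2$ reduction forces $\mathbf{Q}^{[N]}_{x}$ to anticommute with $\sigma_3$ and the $\mathcal{O}(1)$ term to vanish, while the Hermitian reduction forces $\mathbf{J}(\mathbf{Q}^{[N]}_{x})^\dagger\mathbf{J}=-\mathbf{Q}^{[N]}_{x}$; hence $\mathbf{Q}^{[N]}$ has precisely the admissible shape. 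The identical analysis of $\mathbf{V}^{[N]}$ (whose $\lambda^2$ term is untouched since $\mathbf{T}_N(\infty)=\mathbb{I}$, and whose $\lambda^1$ coefficient forces the \emph{same} $\mathbf{Q}^{[N]}$) completes (i).

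For (ii), I would expand $\mathbf{T}_N=\mathbb{I}+\lambda^{-1}\sum_{i=1}^N(\mathbf{A}_i-\sigma_3\mathbf{A}_i\sigma_3)+\mathcal{O}(\lambda^{-2})$ at $\lambda=\infty$ and compare the $\lambda^1$ coefficients in $\mathbf{V}^{[N]}\mathbf{T}_N=\mathbf{T}_{N,t}+\mathbf{T}_N\mathbf{V}$; since $\sum_i(\mathbf{A}_i-\sigma_3\mathbf{A}_i\sigma_3)$ anticommutes with $\sigma_3$, this collapses to $\mathbf{Q}^{[N]}=\mathbf{Q}+\sum_{i=1}^N(\mathbf{A}_i-\sigma_3\mathbf{A}_i\sigma_3)$, i.e. \eqref{nfoldbt}. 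Substituting $\mathbf{A}_i=|x_i\rangle\langle y_i|\mathbf{J}$, reading off the $(2,1)$ and $(3,1)$ entries (each doubled by the $-\sigma_3\mathbf{A}_i\sigma_3$ term), and inserting $[x_{1,k},\dots,x_{N,k}]=[y_{1,k},\dots,y_{N,k}]\mathbf{M}^{-1}$ for $k=2,3$ gives $u_i^{[N]}=u_i+2\mathbf{Y}_i\mathbf{M}^{-1}\mathbf{Y}^\dagger$; the relations $v_i^{[N]}=(u_i^{[N]})^*$ then follow from the Hermitian reduction already established.
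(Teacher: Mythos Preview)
The paper does not actually prove this theorem: it is stated as a known result imported from \cite{ling_general_2018} (``We introduce the $N$-fold Darboux transformation $\mathbf{T}_N(\lambda;x,t)$, as presented in \cite{ling_general_2018}''), with no accompanying argument. So there is no in-paper proof to compare your attempt against.

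That said, your sketch is the standard dressing/Liouville argument used to establish results of this type, and the outline is sound: the two reductions $\sigma_3\mathbf{T}_N(-\lambda)\sigma_3=\mathbf{T}_N(\lambda)$ and $\mathbf{J}\mathbf{T}_N(\lambda^*)^\dagger\mathbf{J}=\mathbf{T}_N(\lambda)^{-1}$, the kernel condition $\mathbf{T}_N(\lambda_i)|y_i\rangle=0$ producing exactly the defining systems for $\mathbf{B}$ and $\mathbf{M}$, the residue cancellations at $\pm\lambda_i,\pm\lambda_i^*$, and the matching at $\lambda=0,\infty$ are all the right moves. The extraction of \eqref{nfoldbt} from the $\lambda^1$ coefficient of the $\mathbf{V}$-equation and the subsequent reading of the $(2,1),(3,1)$ entries via $[x_{1,k},\dots,x_{N,k}]=[y_{1,k},\dots,y_{N,k}]\mathbf{M}^{-1}$ is also correct. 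The one place you flag as the ``main technical obstacle'' --- invertibility of $\mathbf{B},\mathbf{M}$ and the residue telescoping --- is indeed where the work lies, but for generic distinct spectral data this is standard; if you want a self-contained write-up you would need to spell that step out rather than gesture at it.
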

In this paper, we will consider $u_{i}\ne 0$, and it follows that
\begin{equation}\label{N-sol}
    u_{i}^{[N]}=u_{i}\frac{\det(\mathbf{M}+2u_{i}^{-1}\mathbf{Y}^{\dagger}\mathbf{Y}_{i})}{\det(\mathbf{M})},\quad i=1,2,
\end{equation}
since
\begin{equation*}
    1+\mathbf{Y}_{i}\mathbf{M}^{-1}\mathbf{Y}^{\dagger}=\frac{\det
    \begin{pmatrix}
        \mathbf{M}&\mathbf{Y}^{\dagger}\\
        \mathbf{Y}_{i}&1
    \end{pmatrix}
    }{\det(\mathbf{M})}.
\end{equation*}
We proceed to analyze the numerator and denominator of the obtained solution \eqref{N-sol}. 
Specifically, by selecting appropriate $|y_i\rangle$ and seed solutions $u_i$ in Theorem \ref{DT}, 
the elements of the numerator and denominator in \eqref{N-sol} exhibit quadric forms which are helpful in 
constructing the rogue wave solutions. 

\section{Rogue wave solutions}\label{sec-solution}
In Section \ref{sec-pre}, we have discussed the theorem regarding Okamoto polynomial hierarchies and the Darboux transformation for the CFL equations. 
In the subsequent section, we will utilize the properties of Schur polynomials 
and the root structures of Okamoto polynomial hierarchies to analyze the rogue wave patterns. 
Specifically, in this section, we employ Theorem \ref{DT} to construct the rogue wave solutions. 
To achieve this, we introduce the seed solutions $u_i$ and select specific $|y_i\rangle$ vectors.

\subsection{Seed solution and spectral characteristic polynomial}
We will consider the seed solutions in the form of plane wave solutions in Theorem \ref{DT}. 
Through these plane wave solutions, we can transform the Lax pair \eqref{CFL-lax} into a system with 
constant coefficients. By simultaneously diagonalizing the transformed matrices for $\mathbf{U}$ and $\mathbf{V}$ in \eqref{CFL-lax}, 
we can effectively solve the Lax pair \eqref{CFL-lax} and obtain the fundamental solutions. 
It is worth noting that the choices of the parameters $|y_i\rangle$ are connected to the fundamental solutions.

The fundamental solutions are determined by the spectral characteristic polynomial, 
which forms a three-sheet Riemann surface. We will investigate the properties at the branch points on the Riemann surface. 
These properties play a crucial role in determining the feasibility of constructing the rogue wave solutions 
through the Darboux transformation.

It is accessible to obtain the plane wave solutions for the CFL equations \eqref{CFL}:
\begin{equation}\label{seed}
    u_i^{[0]}=a_i{\rm e}^{{\rm i}\omega_i},\,\, i=1,2
\end{equation}
where
\begin{equation*}
\begin{split}
    \omega_1=&b_1x-\frac{1}{2}\left(2a_1^2+\sigma a_2^2-\frac{2}{b_1}+\sigma a_2^2\frac{b_2}{b_1}\right)t,  \\
    \omega_2=&b_2x-\frac{1}{2}\left(2\sigma a_2^2+a_1^2-\frac{2}{b_2}+a_1^2\frac{b_1}{b_2}\right)t,
\end{split}
\end{equation*}
the parameters $a_{i}$s are real numbers and $b_{i}$s are nonzero real numbers. 
Inserting the seed solutions \eqref{seed} into the Lax pair \eqref{CFL-lax}, 
introducing $z=1/\lambda^2$, we solve the Lax pair \eqref{CFL-lax} by ODE. Consider the parameter settings $a_{i}\ne 0$ and 
$b_{1}\ne b_{2}$, 
we have the fundamental solutions for the Lax pair \eqref{CFL-lax}
\begin{equation}\label{fund2}
    \mathbf{\Phi}(\lambda)=\mathbf{D}\mathbf{E}\mathrm{diag}\left({\rm e}^{\theta_1},{\rm e}^{\theta_2},{\rm e}^{\theta_3}\right),\,\, \mathbf{D}=\mathrm{diag}\left(1,{\rm e}^{{\rm i}\omega_1},{\rm e}^{{\rm i}\omega_2}\right),
\end{equation}
where
\begin{equation*}
\begin{split}
    \theta_i&={\rm i}(\kappa_i-z)\left(x+\frac{1}{2b_1b_2z}(\kappa_i-z+b_1+b_2)t\right),\,\, i=1,2,3\\
\end{split}
\end{equation*}
    and
\begin{equation*}
\begin{split}
    \mathbf{E}&=\begin{pmatrix}
            1 & 1 & 1 \\
            \frac{a_1b_1}{\lambda(\kappa_1+b_1)} & \frac{a_1b_1}{\lambda(\kappa_2+b_1)} & \frac{a_1b_1}{\lambda(\kappa_3+b_1)} \\
            \frac{a_2b_2}{\lambda(\kappa_1+b_2)} & \frac{a_2b_2}{\lambda(\kappa_2+b_2)} & \frac{a_2b_2}{\lambda(\kappa_3+b_2)} \\
        \end{pmatrix}.
\end{split}
\end{equation*}
The terms $\kappa_i, i=1,2,3$ satisfy the algebraic equation
\begin{equation}\label{chara-2}
    \frac{\kappa}{z}-2+\frac{a_1^2 b_1^2}{(\kappa+b_1)}+\frac{\sigma a_2^2b_2^2}{(\kappa+b_2)}=0.
\end{equation}
Note that $\lambda$ is the primary spectral parameter of Lax pair $\eqref{CFL-lax}$, but here we use the parameter $z$. 
The algebraic equation \eqref{chara-2} 
generate a three-sheet Riemann surface 
\begin{equation}\label{RS1}
    \mathcal{R}=\left\{(z,\kappa)\in \mathbb{S}^{2}:\frac{\kappa}{z}-2+\frac{a_1^2 b_1^2}{(\kappa+b_1)}+\frac{\sigma a_2^2b_2^2}{(\kappa+b_2)}=0\right\},
\end{equation}
with projection $p:(z,\kappa)\mapsto z$, where $\mathbb{S}$ is the Riemann sphere. 
Denote $\alpha=a_1^2b_1^2+\sigma a_2^2b_2^2$, $\beta=a_1^2b_1^2b_2+\sigma a_2^2b_1b_2^2$, $\gamma=b_1+b_2$, $\delta=b_1-b_2$, 
the branch points of \eqref{RS1} are determined by the following quartic equation with respect to $z$:
\begin{equation}\label{quantic}
    \begin{split}
        A_{4} {z}^{4}+ A_{3} z^{3}+A_{2} z^{2}+A_{1} z+A_{0}=0,
    \end{split}
\end{equation}
where the LHS is the discriminant of the spectral characteristic polynomial \eqref{chara-2} with respect to $\kappa$. The coefficients are
given by
\begin{equation*}
    \begin{split}
        A_{4}&=4 {\alpha}^{2}-16 \gamma \alpha+16 {\delta}^{2}+32 \beta ,\quad A_{3}=-4 {\alpha}^{3}+20 {\alpha}^{2}\gamma
    -20 \alpha {\delta}^{2}-12 \alpha {\gamma}^{2}+16 {\delta}^{2}
    \gamma-36 \alpha \beta+24 \beta \gamma, \\
    A_{2}&=3 {\alpha}^{2}{\delta}^{2}-2 {\alpha}^{2}{\gamma}^{2}-\alpha
    {\delta}^{2}\gamma-3 \alpha {\gamma}^{3}-2 {\delta}^{4}+6 {
    \delta}^{2}{\gamma}^{2}+18 \alpha \beta \gamma-18 \beta {\delta}^
    {2}+6 \beta {\gamma}^{2}-27 {\beta}^{2}, \\
    A_{1}&=-\frac{1}{4} {\gamma}^{4}\alpha+\frac{1}{2} {\gamma}^{3}\beta+{\gamma}^{3}{
        \delta}^{2}-\gamma {\delta}^{4}-\frac{3}{4} \alpha {\delta}^{4}+{\gamma}^{2
        }\alpha {\delta}^{2}-\frac{9}{2} \gamma \beta {\delta}^{2}, \quad
        A_{0}=\frac{1}{16} {\delta}^{6}+\frac{1}{16} {\delta}^{2}{\gamma}^{4}-\frac{1}{8} {\delta}^{4}{
            \gamma}^{2}.
    \end{split}
\end{equation*}
The complex roots (not real) of the quartic equation \eqref{quantic} corresponds to the rogue wave solutions of the 
CFL equations \eqref{CFL}. But generally, it is hard to analyze the roots of the algebraic equation \eqref{quantic}. 
The discriminant of \eqref{quantic} with respect to $z$ is useful to analyze the roots:
\begin{equation}\label{dis}
    \begin{split}
    \Delta\equiv\frac{1}{16}  \left(2 \beta+{\delta}^{2}-{\gamma}^{2}\right)\left[(\gamma \alpha-2 \beta)^2 -\alpha^2 \delta^2\right]\times \\
    \left[ \left( 324 {\delta}^{2}+108 {\gamma}^{2} \right) {\beta}^{2}+
\left( 54 {\alpha}^{2}{\delta}^{2}+18 {\alpha}^{2}{\gamma}^{2}-180
\alpha {\delta}^{2}\gamma-108 \alpha {\gamma}^{3}+288 {\delta}^{
4} \right) \beta\right.\\\left.-8 {\alpha}^{3}{\gamma}^{3}+27 {\delta}^{4}{\alpha}^
{2}-6 {\alpha}^{2}{\delta}^{2}{\gamma}^{2}+27 {\alpha}^{2}{\gamma}^{
4}-96 \alpha {\delta}^{4}\gamma+64 {\delta}^{6}
\right] ^{3}. 
    \end{split}
\end{equation}
The equation $\Delta=0$ can be solved with respect to $\beta$. 
Our main idea is to determine the cases of the roots of algebraic equation \eqref{quantic} by evaluating different values of $\beta$.
In this regard, we establish the following properties:
\begin{prop}\label{root-classify}
    If the parameters $(a_1,a_2,b_1,b_2)$ belong to $\Omega=\{(a_1,a_2,b_1,b_2)| a_1,a_2\neq 0,\,\, \sigma b_2\neq 2a_2^{-2}-b_1a_1^{2}a_2^{-2},\, b_1\neq b_2\}$, 
    there are several cases for the roots of \eqref{quantic}:
    \begin{enumerate}
        \item \label{ca1}If $3 {\alpha}^{2}-4\gamma \alpha-4 {\delta}^{2}\geq0$, let $\beta_1^{[a]}\leq \beta_2^{[a]}\leq\beta_3^{[a]}\leq\beta_4^{[a]}\leq\beta_5^{[a]}$ 
        are the real roots of $\Delta$, there are three cases:
        \begin{enumerate}
            \item \label{ca1a}If $\beta\in (-\infty,\beta_1^{[a]})\cup(\beta_2^{[a]},\beta_3^{[a]})\cup(\beta_4^{[a]},\beta_5^{[a]})$, 
            we obtain two real roots and a pair of complex conjugate roots (Fig. \ref{rootp123}-a).
            \item \label{ca1b}If $\beta\in (\beta_1^{[a]},\beta_2^{[a]})\cup(\beta_3^{[a]},\beta_4^{[a]})\cup(\beta_5^{[a]},\infty)$, 
            we obtain four real roots (Fig. \ref{rootp123}-b).
            \item \label{ca1c}If $\beta\in\{\beta_1^{[a]},\beta_2^{[a]},\beta_3^{[a]},\beta_4^{[a]},\beta_5^{[a]}\}$, we obtain
            one, two, or three real roots, or a pair of complex conjugate roots, 
            or one real root and a pair of complex conjugate roots (Fig. \ref{rootp123}-c).
        \end{enumerate}
        \item \label{ca2}If $3 {\alpha}^{2}-4\gamma \alpha-4 {\delta}^{2}<0$, let $\beta_1^{[a]}\leq \beta_2^{[a]}\leq\beta_3^{[a]}$ 
        are the real roots of $\Delta$, there are three cases:
        \begin{enumerate}
            \item \label{ca2a}If $\beta\in (-\infty,\beta_1^{[a]})\cup(\beta_2^{[a]},\beta_3^{[a]})$, we can 
            obtain two real roots and a pair of complex conjugate roots (Fig. \ref{rootn123}-a).
            \item \label{ca2b}If $\beta\in (\beta_1^{[a]},\beta_2^{[a]})\cup (\beta_3^{[a]},\infty)$, we can 
            obtain four real roots or two pairs of complex conjugate roots (Fig. \ref{rootn123}-b, c).
            \item \label{ca2c}If $\beta\in\{\beta_1^{[a]},\beta_2^{[a]},\beta_3^{[a]}\}$, we obtain
            one, two, or three real roots, or a pair of complex conjugate roots, 
            or one real root and a pair of complex conjugate roots (Fig. \ref{rootn123}-d).
        \end{enumerate}
    \end{enumerate}
    \end{prop}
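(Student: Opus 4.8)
The plan is to treat \eqref{quantic} as a real quartic in $z$ and to classify its roots via the sign of its discriminant $\Delta$ from \eqref{dis} together with the standard auxiliary invariants, holding $\alpha,\gamma,\delta$ fixed (as they are once a point of $\Omega$ is chosen, since the assignment $(a_1^2b_1^2,\,\sigma a_2^2b_2^2)\mapsto(\alpha,\beta)$ is linear with determinant $b_1-b_2\ne0$) and regarding $\beta$ as the running parameter. I will use the classical dichotomy for a real quartic $A_4z^4+A_3z^3+A_2z^2+A_1z+A_0$: if $\Delta<0$ it has two real roots and one complex-conjugate pair; if $\Delta>0$ it has either four real roots or two complex-conjugate pairs, the former occurring exactly when $P:=8A_4A_2-3A_3^2<0$ and $D:=64A_4^3A_0-16A_4^2A_2^2+16A_4A_3^2A_2-16A_4^2A_3A_1-3A_3^4<0$; and if $\Delta=0$ it has a repeated root, the finitely many degenerate profiles being pinned down by the signs of $P$, $D$, $A_4$ together with the order of vanishing of $\Delta$. (Equivalently one may route the $\Delta>0$ split through the resolvent cubic of \eqref{quantic}.)

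First I would record the elementary sign data on $\Omega$: $A_0=\tfrac1{16}\delta^2(\delta^2-\gamma^2)^2=\delta^2b_1^2b_2^2>0$ (using $b_1\ne b_2$ and $b_1b_2\ne0$), $324\delta^2+108\gamma^2>0$ (using $b_1\ne b_2$), and hence the leading coefficient of $\Delta$ in $\beta$ is positive. Next I use the factorization already exhibited in \eqref{dis}, $\Delta=L(\beta)\,Q_1(\beta)\,Q_2(\beta)^3$ with $L(\beta)=\tfrac1{16}(2\beta+\delta^2-\gamma^2)$, $Q_1(\beta)=(\gamma\alpha-2\beta)^2-\alpha^2\delta^2=4(\alpha b_1-\beta)(\alpha b_2-\beta)$, and $Q_2$ the quadratic $(324\delta^2+108\gamma^2)\beta^2+\cdots$ from the cubed bracket. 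Since a cube preserves sign, $\operatorname{sign}\Delta=\operatorname{sign}(L\,Q_1\,Q_2)$; here $L$ contributes one simple real zero $\beta=2b_1b_2$, $Q_1$ contributes the two simple real zeros $\alpha b_1,\alpha b_2$, and $Q_2$ — with positive leading coefficient — contributes two real zeros or none according to the sign of $\operatorname{disc}_\beta Q_2$. The crux of this step is the computation that $\operatorname{disc}_\beta Q_2$ equals $3\alpha^2-4\gamma\alpha-4\delta^2$ times a strictly positive polynomial; this is exactly what separates the two top alternatives of the statement, giving $\Delta$ five real $\beta$-zeros when $3\alpha^2-4\gamma\alpha-4\delta^2\ge0$ and three when $3\alpha^2-4\gamma\alpha-4\delta^2<0$. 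Ordering them as $\beta_1^{[a]}\le\cdots$ then cuts the $\beta$-line into the intervals named in the statement.

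On each open interval the sign of $\Delta$ is read off from $L\,Q_1\,Q_2$: it is negative as $\beta\to-\infty$ and positive as $\beta\to+\infty$ (positive leading coefficient) and it flips at each of these simple or triple real zeros, so that $\Delta<0$ precisely on the odd-numbered intervals (counting from the left, starting at $(-\infty,\beta_1^{[a]})$) and $\Delta>0$ on the even-numbered ones — matching subcase (a) versus subcase (b) in both main cases. Where $\Delta<0$ we get ``two real roots and a complex-conjugate pair'' at once. Where $\Delta>0$ I would evaluate the polynomials $P(\beta)$ and $D(\beta)$ on those intervals: when $3\alpha^2-4\gamma\alpha-4\delta^2\ge0$ one checks $P<0$ and $D<0$ throughout them, forcing four real roots (the strengthened conclusion of subcase (b) in the first case), while when $3\alpha^2-4\gamma\alpha-4\delta^2<0$ one exhibits both admissible sign patterns of $(P,D)$, so that four real roots and two complex-conjugate pairs both genuinely occur (the ambiguous conclusion in the second case). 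Finally, at $\beta=\beta_i^{[a]}$ one has $\Delta=0$; the repeated-root configuration there is obtained by combining the order of vanishing of $\Delta$ (simple at a zero coming from $L$ or $Q_1$ alone, triple at a zero of $Q_2$, higher at a confluence) with the signs of $P$, $D$, $A_4$ at that value, which yields the finite list of possibilities in the ``(c)'' alternative of each case.

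The main obstacle is the middle step: establishing the exact identity $\operatorname{disc}_\beta Q_2=(\text{positive factor})\cdot(3\alpha^2-4\gamma\alpha-4\delta^2)$, and, more demandingly, the sign bookkeeping for $P$ and $D$ across all the $\Delta>0$ intervals and for both signs $\sigma=\pm1$ — in particular proving that these intervals carry $P<0,\,D<0$ uniformly in the first case but not in the second, which is precisely the phenomenon reflected in the figures cited in the statement. The enumeration of the finitely many degenerate profiles at the $\beta_i^{[a]}$ and the check that they match those figures is long but mechanical; throughout one must carry the sub-branches $\delta\gtrless0$ and account for possible coincidences among the zeros $2b_1b_2$, $\alpha b_1$, $\alpha b_2$ and the zeros of $Q_2$.
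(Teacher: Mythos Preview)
Your outline is correct and reaches the same classification, but the decisive step in case (1)(b) is handled quite differently from the paper. Both you and the paper read off the five (resp.\ three) real $\beta$-zeros of $\Delta$ from the factorization \eqref{dis} and the sign of $3\alpha^2-4\gamma\alpha-4\delta^2$ (the paper simply writes down the explicit roots $\beta_1^{[b]},\dots,\beta_5^{[b]}$, with $\beta_4^{[b]},\beta_5^{[b]}$ carrying the square root $\sqrt{3\alpha^2-4\gamma\alpha-4\delta^2}$), and both use the standard dichotomy on $\operatorname{sign}\Delta$ for subcases (a) and (c). Where you propose to evaluate the auxiliary invariants $P,D$ on the $\Delta>0$ intervals to force four real roots in case (1)(b), the paper instead exploits that \eqref{quantic} is \emph{quadratic in $\beta$}: solving for $\beta=\beta_{1,2}(z)$ with the discriminant $\Delta_1=16z^2+(8\gamma-12\alpha)z+3\delta^2+\gamma^2$, it notes that $\Delta_1=0$ has real roots $z_1,z_2$ exactly when $3\alpha^2-4\gamma\alpha-4\delta^2\ge0$, that $\beta_1(z_i)=\beta_2(z_i)$ there, and then a short continuity/limit argument on $\beta_{1,2}$ shows their joint range is all of $\mathbb{R}$. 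Hence every $\beta$ admits a real $z$, so two complex-conjugate pairs are impossible. This bypasses the sign bookkeeping for $P$ and $D$ that you flag as the main obstacle, and it also makes transparent why the argument breaks in case (2) (there $z_1,z_2$ are non-real, the ranges need not cover $\mathbb{R}$, and indeed the paper simply states that both outcomes can occur). Your invariant route should succeed in principle, but the paper's range argument is shorter and more conceptual for this particular quartic.
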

    \begin{proof}
        We use the discriminant \eqref{dis} to study the quartic equation \eqref{quantic}, which is a real quartic equation about $z$. 
    % How to distinguish the four real roots or
    % two pairs of complex conjugate roots in equation \eqref{quantic}. 
    % We must use the extreme value for the function $\beta_1(z)$ and $\beta_2(z)$ in equations \eqref{beta12}. We can obtain that their extreme values
    % are consistent with the roots of $\Delta=0.$
    \begin{enumerate}
        \item If $3 {\alpha}^{2}-4\gamma \alpha-4 {\delta}^{2}\geq0$, we can obtain the following roots of $\Delta=0$:
        \begin{equation*}
            \begin{split}
                \beta_1^{[b]}= &\frac{1}{2}({\gamma}^{2}-{\delta}^{2}),  \,\,
                \beta_2^{[b]}= \frac{1}{2}\left( \gamma-\delta \right) \alpha,\,\,
                \beta_3^{[b]}= \frac{1}{2}\left( \gamma+\delta \right) \alpha,\\
                \beta_4^{[b]}=&\frac { \left( -9 {\delta}^{2}-3 {\gamma}^{2} \right) {\alpha
        }^{2}+ \left( 30 {\delta}^{2}\gamma+18 {\gamma}^{3} \right) \alpha-
        48 {\delta}^{4}+ \left|  \left( 3 \alpha-8 \gamma \right) {\delta}^
        {2}+\alpha {\gamma}^{2} \right| \sqrt {3}\sqrt {3 {\alpha}^{2}-4
        \gamma \alpha-4 {\delta}^{2}}}{36(3 {\delta}^{2}+{\gamma}^{2})},
        \\
        \beta_5^{[b]}=&\frac { \left( -9 {\delta}^{2}-3 {\gamma}^{2} \right) {\alpha
        }^{2}+ \left( 30 {\delta}^{2}\gamma+18 {\gamma}^{3} \right) \alpha-
        48 {\delta}^{4}- \left|  \left( 3 \alpha-8 \gamma \right) {\delta}^
        {2}+\alpha {\gamma}^{2} \right| \sqrt {3}\sqrt {3 {\alpha}^{2}-4
        \gamma \alpha-4 {\delta}^{2}}}{36(3 {\delta}^{2}+{\gamma}^{2})}.
            \end{split}
        \end{equation*}
        We rearrange the roots $\beta_1^{[a]}\leq \beta_2^{[a]}\leq\beta_3^{[a]}\leq\beta_4^{[a]}\leq\beta_5^{[a]}$, then
        \begin{enumerate}
            \item If $\beta\in (-\infty,\beta_1^{[a]})\cup(\beta_2^{[a]},\beta_3^{[a]})\cup(\beta_4^{[a]},\beta_5^{[a]})$, then $\Delta<0$. 
            The quartic equation \eqref{quantic} has two real roots and a pair of complex conjugate roots.
            \item If $\beta\in (\beta_1^{[a]},\beta_2^{[a]})\cup(\beta_3^{[a]},\beta_4^{[a]})\cup(\beta_5^{[a]},\infty)$, then 
            $\Delta>0$. The quartic equation \eqref{quantic} could have four real roots or
            two pairs of complex conjugate roots. Since the equation \eqref{quantic} is a square equation 
            with respect to the parameter $\beta$, solving the equation, the roots are
            \begin{equation}\label{beta12}
                \begin{split}
                \beta_1(z)=&\frac{64 z^{3}+ \left( 48 \gamma-72 \alpha \right) z^{2}+ \left( 36 \gamma \alpha-36 {\delta}^{2}+12 {\gamma}^{2}
            \right) z-9 {\delta}^{2}\gamma+{\gamma}^{3}+\sqrt { \Delta_{1} ^{3}}}{108 z}
            , \\
            \beta_2(z)=&\frac{64 z^{3}+ \left( 48 \gamma-72 \alpha \right) z^{2}+ \left( 36 \gamma \alpha-36 {\delta}^{2}+12 {\gamma}^{2}
            \right) z-9 {\delta}^{2}\gamma+{\gamma}^{3}-\sqrt { \Delta_{1} ^{3}}}{108 z},
                \end{split}
            \end{equation}
            where $\Delta_{1}\equiv-12
            \alpha z+3 {\delta}^{2}+{\gamma}^{2}+8 \gamma z+16
            {z}^{2}$. 
            To distinguish these two cases, we need to study the function $\beta_1(z)$ and $\beta_2(z)$. If the union 
            of the range for $\beta_1(z)$ and $\beta_2(z)$ is $\mathbb{R}$, 
        then there exists a real root of equation \eqref{quantic}. There would not exist two pairs of complex conjugate roots. 
        Denote the roots of the equation $\Delta_{1}=0$ are
        \begin{equation}
            \begin{split}
                z_{1}=\frac{3\alpha-2\gamma+\sqrt{3}\sqrt{3 {\alpha}^{2}-4\gamma \alpha-4 {\delta}^{2}}}{8},\\
                z_{2}=\frac{3\alpha-2\gamma-\sqrt{3}\sqrt{3 {\alpha}^{2}-4\gamma \alpha-4 {\delta}^{2}}}{8},
            \end{split}
        \end{equation}
        we get $z_{1}z_{2}>0$ and
        $\beta_{1}(z_1)=\beta_{2}(z_{1}),\beta_{1}(z_2)=\beta_{2}(z_{2})$. By direct calculation, we obtain the limit
        $\lim_{z\to+\infty}\beta_{1}(z)=\lim_{z\to-\infty}\beta_{2}(z)=+\infty$ and 
        \begin{equation}
            \begin{split}
                \lim_{z\to 0}z\beta_{1}(z)=\frac{-\gamma(9\delta^2-\gamma^2)+|3\delta^2+\gamma^{2}|\sqrt{3\delta^2+\gamma^{2}}}{108},\\
                \lim_{z\to 0}z\beta_{2}(z)=\frac{-\gamma(9\delta^2-\gamma^2)-|3\delta^2+\gamma^{2}|\sqrt{3\delta^2+\gamma^{2}}}{108}.
            \end{split}
        \end{equation}
        Thus $\lim_{z\to 0^{+}}z\beta_{1}(z)>0$ and $\lim_{z\to 0^{+}}z\beta_{2}(z)<0$. If $z_{1}>0$, since $\beta_{1}(z)$ and $\beta_{2}(z)$ 
        are continuous on $(0,z_{2}]$, the range contains $[\beta_{1}(z_{2}),+\infty)$ for $\beta_{1}(z)$ and $(-\infty,\beta_{2}(z_{2})]$ 
        for $\beta_{2}(z)$. If $z_{1}<0$, the range contains $(-\infty,\beta_{1}(z_{1})]$ for $\beta_{1}(z)$ 
        and $[\beta_{2}(z_{1}),+\infty)$ for $\beta_{2}(z)$. 
        Hence for any $\beta$, there exists a real $z$ such that the quantic equation \eqref{quantic} is valid. Therefore,
        there are no two pairs of complex conjugate roots for \eqref{quantic}.
        \item If $\beta\in\{\beta_1^{[a]},\beta_2^{[a]},\beta_3^{[a]},\beta_4^{[a]},\beta_5^{[a]}\}$, then $\Delta=0$. We obtain
        one, two, or three real roots, or a pair of complex conjugate roots, 
        or one real root and a pair of complex conjugate roots.
        \end{enumerate}
        \item If $3 {\alpha}^{2}-4
        \gamma \alpha-4 {\delta}^{2}<0$, we merely obtain the real roots $\beta_1^{[b]}$, 
        $\beta_2^{[b]}$ and $\beta_3^{[b]}$ for the equation $\Delta=0.$ 
        Rearranging the roots $\beta_1^{[a]}\leq \beta_2^{[a]}\leq\beta_3^{[a]}$, then
        \begin{enumerate}
            \item If $\beta\in (-\infty,\beta_1^{[a]})\cup(\beta_2^{[a]},\beta_3^{[a]})$, then $\Delta<0$. There are 
            two real roots and a pair of complex conjugate roots. 
            \item If $\beta\in (\beta_1^{[a]},\beta_2^{[a]})\cup (\beta_3^{[a]},\infty)$, then $\Delta>0$. 
            In this case, we can not distinguish whether there exists a real $z$ 
            such that the quartic equation \eqref{quantic} is valid just using these roots. There are four real roots or two pairs of complex conjugate roots.
            \item If $\beta\in\{\beta_1^{[a]},\beta_2^{[a]},\beta_3^{[a]}\}$, then $\Delta=0$. There are
            one, two, or three real roots, or a pair of complex conjugate roots, 
            or one real root and a pair of complex conjugate roots.
        \end{enumerate}
    \end{enumerate}
\end{proof}

%修改：complex conjugate，共轭复根
Below we provide some examples. In Figure \ref{rootp123}, 
the three subfigures Figure (\ref{rootp123}-a), Figure (\ref{rootp123}-b) and Figure (\ref{rootp123}-c) correspond to
the condition (\ref{ca1a}), (\ref{ca1b}) and (\ref{ca1c}) 
in Proposition \ref{root-classify} respectively. 
%修改：简略写即可，不用写过多文字。就说图一中分别描述三种情况即可。
In Figure \ref{rootn123}, the subfigure Figure(\ref{rootn123}-a) corresponds to the condition (\ref{ca2a}), 
the subfigures Figure(\ref{rootn123}-b) and Figure (\ref{rootn123}-c) correspond to the condition 
(\ref{ca2b}) and the subfigure Figure (\ref{rootn123}-c) corresponds to the condition (\ref{ca2c}) in Proposition \ref{root-classify}. 
% For (a) and (b), according to Proposition \ref{root-classify}, $\beta$ in 
% $(\beta_1^{[a]},\beta_2^{[a]})\cup (\beta_3^{[a]},\infty)$ for (a), 
% and $(-\infty,\beta_1^{[a]})\cup(\beta_2^{[a]},\beta_3^{[a]})$ for (b). 
% In (a), four real roots are present, whereas (b) exhibits two real roots and a complex conjugate pair. 
% Comparing (a) and (c), both $\beta$ in $(\beta_1^{[a]},\beta_2^{[a]})\cup (\beta_3^{[a]},\infty)$, but 
% (a) possesses four real roots while (c) exhibits two pairs of complex conjugate roots. 
\begin{figure}[htbp]
    \centering
    \includegraphics[scale=0.18]{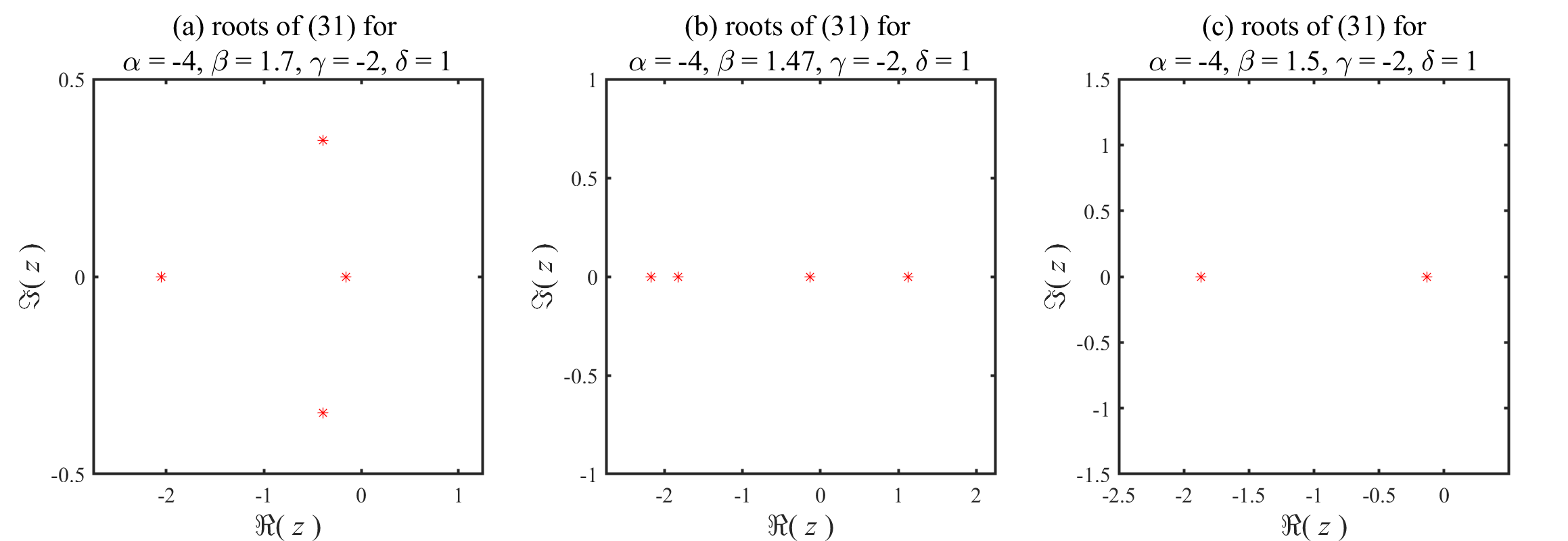}
    \caption{The root distributions of Equation \eqref{quantic}. Parameter settings: (a) $(\alpha,\beta,\gamma,\delta)=(-4,1.7,-2,1)$,
    (b) $(\alpha,\beta,\gamma,\delta)=(-4,1.47,-2,1)$ and (c) $(\alpha,\beta,\gamma,\delta)=(-4,1.5,-2,1)$.}
    \label{rootp123}
\end{figure}

\begin{figure}[htbp]
    \centering
    \includegraphics[scale=0.17]{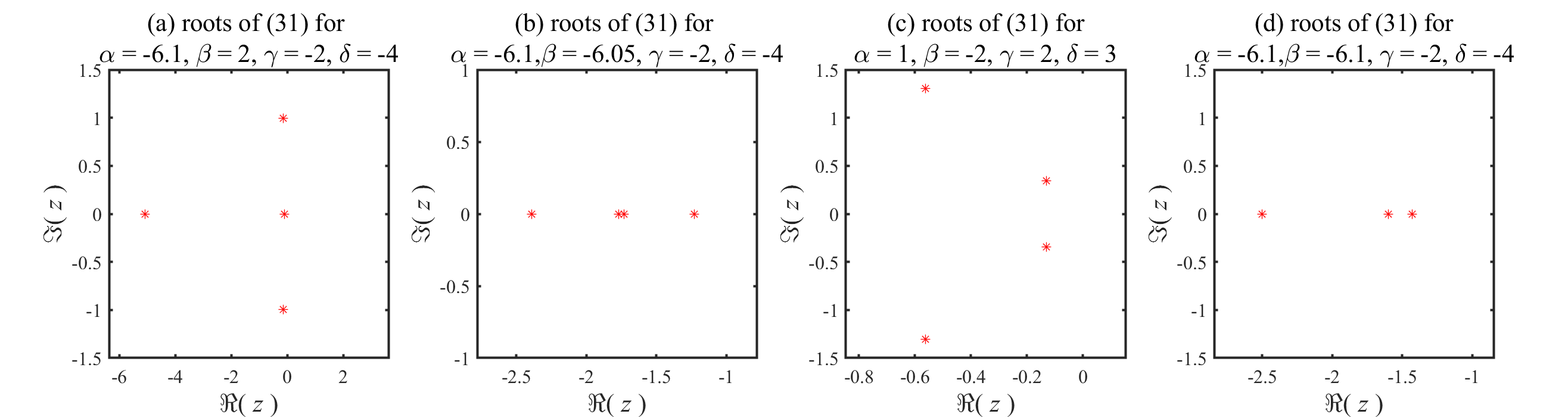}
    \caption{The root distributions of Equation \eqref{quantic}. Parameter settings: (a) $(\alpha,\beta,\gamma,\delta)=(-6.1,-6.05,-2,-4)$,
    (b) $(\alpha,\beta,\gamma,\delta)=(-6.1,2,-2,-4)$, (c) $(\alpha,\beta,\gamma,\delta)=(1,-2,2,3)$ and 
    (d) $(\alpha,\beta,\gamma,\delta)=(-6.1,-6.1,-2,-4)$. }
    \label{rootn123}
\end{figure}

Now we concentrate on the branch point $(z^{[0]},\kappa^{[0]})\in \mathcal{R}$ \eqref{RS1} of multiplicity three. 
By straightforward calculation, we obtain 
\begin{equation}\label{mult3}
    \begin{split}
        \kappa^{[0]}&={\frac {b_{{1}}b_{{2}} \left( -b_{{1}}-b_{{2}}+{\rm i}\sqrt {3} \left(
        b_{{1}}-b_{{2}} \right)  \right) }{b_{1}^{2}-b_{{1}}b_{{2}}+b_{2
        }^{2}}},
        \\        
        z^{[0]}&=\frac{b_{1}}{2}-\frac{b_{2}}{4}-\frac{3}{4}{\frac{b_{2}^{2}\left(2b_{1}
        -b_{2}\right)}{b_{1}^{2}-b_{1}b_{2}+b_{2}^{2}}}+\frac{3}{4}{\frac {{\rm i}b_{1} b_{2}\left( b_{1}-b_{2} \right)\sqrt {3}}{b_{1}^{2}-b_{1}b_{2}+b_{2}^{2}}},\,\,\,\, b_1,b_2>0.
    \end{split}
\end{equation}
As we can see, the imaginary parts of $z^{[0]}$ and $\kappa^{[0]}$ are not zero. 
Moreover, $a_{i}$s can be expressed by $b_{i}$s:
\begin{equation*}
\begin{split}
    a_1=\frac{\sqrt {2 b_{{1}}} \left( b_{{1}}-b_{{2}} \right) }{b_{
{1}}^{2}-b_{{1}}b_{{2}}+b_{{2}}^{2}},
\quad
    a_2={\frac{\sqrt {2 b_{{2}}} \left( b_{{1}}-b_{{2}} \right) }{b_{
{1}}^{2}-b_{{1}}b_{{2}}+b_{{2}}^{2}}}.
\end{split}
\end{equation*}
Hence the branch point and $a_{i}$s can be represented by $b_{i}$s. 
It is routine to verify that \eqref{mult3} satisfies the equation \eqref{quantic}. 
Since $z^{[0]}$ is a complex root (not real), we can generate high-order rogue waves at the point $(z^{[0]},\kappa^{[0]})$. 

To construct the local coordinate chart, we expand $(z,\kappa)\in\mathcal{R}$ at $(z^{[0]},\kappa^{[0]})$ 
with the following form with respect to $\epsilon$:
\begin{equation}\label{coord}
    \begin{split}
        z=z(\epsilon)=z^{[0]}+z^{[1]}\epsilon^3, \quad
        \kappa=\kappa(\epsilon)=\kappa^{[0]}-2z^{[1]}\epsilon \mu(\epsilon),
    \end{split}
\end{equation}
where
\begin{equation*}
    \begin{split}
        z^{[1]}=-\frac{1}{2}{\frac {b_{{1}}b_{{2}} \left( b_{{1}}-b_{{2}} \right) }{b_{{1}}^{2}-
b_{{1}}b_{{2}}+b_{{2}}^{2}}}, \quad
\mu(\epsilon)=\sum_{i=1}^{\infty}\mu_i\epsilon^{i-1}.
    \end{split}
\end{equation*}
Substituting $(z,\kappa)$ \eqref{coord} into the spectral characteristic polynomial \eqref{chara-2}, it leads to the recursive relation 
about $\kappa$: 
\begin{equation}\label{expand-f}
    \mu^3+\epsilon^2\mu^2+{\rm i}\sqrt{3}\epsilon \mu-1=0.
\end{equation}
Hence the coefficients $\mu_i$ can be determined through the following recursive relation:
\begin{equation*}
    \mu_1=1,\,\, \mu_{i}=-\frac{1}{3}\left(\sum_{j+k+l=i,\,0\leq j,k,l\leq i-1}\mu_j\mu_k\mu_l+\sum_{j+k=i-2,\,\,j,k\geq 0}\mu_j\mu_k+{\rm i}\sqrt{3}\mu_{i-1}\right),\,\, i\geq 2 .
\end{equation*}
The first several coefficients are
\begin{equation*}
    \mu_1=1,\,\, \mu_2=-\frac{\sqrt{3}}{3}{\rm i},\,\, \mu_3=-\frac{1}{3},\,\,
    \mu_4=\frac{2}{27}\sqrt{3}{\rm i},\,\, \mu_5=\frac{1}{27},\,\, \mu_6=0,\,\, \mu_7=\frac{1}{3^5},\,\, \mu_8=-\frac{\sqrt{3}{\rm i}}{3^6},\,\, \mu_9=0.
\end{equation*}
It can be verified that the convergence domain of the series $\mu(\epsilon)$ is $|\epsilon|<\sqrt{3}$. With the local coordinate 
chart \eqref{coord} at $(z^{[0]},\kappa^{[0]})$, the roots of \eqref{chara-2} are 
$\kappa_{i}=\kappa(\epsilon\omega^{i-1}),i=1,2,3$, where $\omega={\rm e}^{2\pi{\rm i}/3}$ is a root of equation $\omega^{3}=1$. 

Now we turn to constructing the rogue wave solutions. 
Note that the Darboux transformation can be used to generate soliton solutions \cite{ling_general_2018} for a non-branch point 
on the Riemann surface $\mathcal{R}$. 
The rogue wave solutions are generated at the branch point of multiplicity two and three, 
as stated in \cite{ye_general_2019}. 
In this paper, 
we only focus on the case multiplicity of three. We will investigate the case of multiplicity two in future work. 
%Before constructing rogue waves, we use Theorem \ref{DT} to obtain a general form of solutions with special parameters. 

\subsection{The determinant representation of rogue wave solutions}
Based on the seed solutions \eqref{seed}, we consider $|y_{s}\rangle=\mathbf{\Phi}(\lambda_{s})(c_{s,1},c_{s,2},c_{s,3})^{T},s=1,2,\cdots,N$ 
in Theorem \ref{DT}. For the spectral parameters $\lambda=\lambda_{s}$ in \eqref{fund2}, we denote $\kappa_{l}=\kappa_{l}^{(s)}$ and 
$\theta_{l}=\theta_{l}^{(s)}$. 
If the parameters $(a_1,a_2,b_1,b_2)$ belong to $\Omega=\{(a_1,a_2,b_1,b_2)| a_1,a_2\neq 0,\,\, \sigma b_2\neq 2a_2^{-2}-b_1a_1^{2}a_2^{-2},\, b_1\neq b_2\}$, 
then the determinant elements of the numerator and denominator defined in \eqref{N-sol} 
have the following quadric forms which are given in \cite{ling_general_2018}: 
    \begin{equation}
        \begin{split}
            m_{rs}&=\begin{pmatrix}
                c_{r,1}&c_{r,2}&c_{r,3}
            \end{pmatrix}^{*}
            (z^{[0]}_{k,l})_{1\leq k,l \leq 3}
            \begin{pmatrix}
                c_{s,1}\\c_{s,2}\\c_{s,3}
            \end{pmatrix},\\
            m_{rs}+2a_{i}^{-1}{\rm e}^{-{\rm i}\omega_{i}}y_{r,1}^{*}y_{s,i+1}&=\begin{pmatrix}
                c_{r,1}&c_{r,2}&c_{r,3}
            \end{pmatrix}^{*}
            (z^{[1]}_{k,l})_{1\leq k,l \leq 3}
            \begin{pmatrix}
                c_{s,1}\\c_{s,2}\\c_{s,3}
            \end{pmatrix},
        \end{split}
    \end{equation}
    where
    \begin{equation}
        \begin{split}\label{generating-function}
            z^{(0)}_{k,l}&=\frac{2}{\lambda_{s}}\frac{\kappa_{k}^{(r)*}}{\kappa_{l}^{(s)}-\kappa_{k}^{(r)*}}{\rm{e}}^{\theta_{l}^{(s)}+\theta_{k}^{(r)*}},\\
            z^{(1)}_{k,l}&=\frac{2}{\lambda_{s}}\frac{\kappa_{l}^{(s)}}{\kappa_{l}^{(s)}-\kappa_{k}^{(r)*}}\frac{\kappa_{k}^{(r)*}+b_{i}}{\kappa_{l}^{(s)}+b_{i}}{\rm{e}}^{\theta_{l}^{(s)}+\theta_{k}^{(r)*}}.
        \end{split}
    \end{equation}

Since dividing both the numerator and denominator by the same factor does not change the value of the solutions \eqref{N-sol}, 
we will consider two new elements by discarding the factors $2/\lambda_{s}$:
\begin{equation}\label{quadric-form}
    \begin{split}
        m_{rs}^{(0)}&=\frac{\lambda_{s}}{2}m_{rs},\\
        m_{rs}^{(1)}&=\frac{\lambda_{s}}{2}(m_{rs}+2a_{i}^{-1}{\rm e}^{-{\rm i}\omega_{i}}y_{r,1}^{*}y_{s,i+1}).
    \end{split}
\end{equation}
Using the above formulas \eqref{quadric-form}, we can analyze the concrete form of the solution \eqref{N-sol}. 
To obtain high-order rogue wave solutions, 
we use local coordinate chart \eqref{coord} at $(z^{[0]},\kappa^{[0]})$, then the terms 
$\kappa_{l}^{(s)}=\kappa(\epsilon_{s}\omega^{l-1})$. Since $\theta_{l}^{(s)}$ are the functions of $\kappa_{l}^{(s)}$ 
and $\lambda_{s}$, we also have $\theta_{l}^{(s)}=\theta(\epsilon_{s}\omega^{l-1})$.

On the other hand, we need to set special $(c_{s,1},c_{s,2},c_{s,3})$ in $|y_{s}\rangle$ to construct the rogue wave solutions. The idea arises from the calculation 
of the limit. 
If we take 
\begin{equation}\label{ys0-discussion}
    |y_{s}\rangle=|y_{s}^{(0)}\rangle:=\mathbf{\Phi}(\lambda_{s})(c^{(0)}_{s}(\epsilon_{s}),\omega c^{(0)}_{s}(\omega\epsilon_{s}),\omega^{2}c^{(0)}_{s}(\omega^{2}\epsilon_{s}))^{T} ,
\end{equation}
where $\mathbf{\Phi}(\lambda_{s})$ is defined in \eqref{fund2}. Then we can set 
\begin{equation*}
    \mathbf{\Phi}(\lambda_{s})=(\Phi_{s}(\epsilon_{s}),\Phi_{s}(\omega\epsilon_{s}),\Phi_{s}(\omega^{2}\epsilon_{s})), 
\end{equation*}
where $\Phi_{s}(\epsilon_{s})$ is a column vector. Expanding $\Phi_{s}(\epsilon_{s})=\sum_{i=0}^{\infty}\Phi_{s}^{[i]}\epsilon_{s}^{i}$, 
using $1+\omega+\omega^{2}=0$, we obtain 
\begin{equation}
    |y_{s}\rangle=3\sum_{k=1}^{\infty}\Phi_{s}^{[3k-1]}\epsilon_{s}^{3k-1}\mathrm{diag}\left(c^{(0)}_{s}(\epsilon_{s}),c^{(0)}_{s}(\omega\epsilon_{s}),c^{(0)}_{s}(\omega^{2}\epsilon_{s})\right) , 
\end{equation}
which has only $(3k-1)$th order coefficients with respect to $\epsilon_{s}$. 
Furthermore, dividing both the numerator and denominator by the same coefficients in \eqref{N-sol} does not alter the value 
of the solutions \eqref{N-sol}. 
Considering $|y_{s}\rangle/\epsilon_{s}^{2}$, we can obtain a solution that only takes into account the $(3k-1)$th order terms of $\epsilon_{s}$. 
Similarly, if we consider 
\begin{equation}\label{ys1-discussion}
    |y_{s}\rangle=|y_{s}^{(1)}\rangle:=\mathbf{\Phi}(\lambda_{s})(c^{(1)}_{s}(\epsilon_{s}),\omega^{2} c^{(1)}_{s}(\omega\epsilon_{s}),\omega c^{(1)}_{s}(\omega^{2}\epsilon_{s}))^{T}
\end{equation}
and $|y_s\rangle/\epsilon_{s}$, then the solution is only in terms of the $(3k-2)$th order coefficients about $\epsilon_{s}$.
Next, we will conduct precise calculations.

Now we need to introduce additional internal parameters, by considering another form of $c^{(l)}_{s}(\epsilon_{s}), l=0,1$. To simplify 
the notations, we consider a function $c_{s}(\epsilon_{s})$ firstly. 
Let $\chi^{[i]}\in \mathbb{C}$, 
we consider $c_{s}(\epsilon_{s})={\rm e}^{\sum_{i=1}^{\infty}\chi^{[i]}\epsilon_{s}^{i}}$ and $\chi^{[3i]}=0, i\geq 1$. 
Under the local coordinate chart \eqref{coord}, 
define $\vartheta(\epsilon_{s})=\theta(\epsilon_{s})+\sum_{i=1}^{\infty}\chi^{[i]}\epsilon_{s}^{i}$, we obtain 
\begin{equation}\label{exp-theta}
\vartheta(\epsilon_{s})={\rm i}\kappa(\epsilon_{s})\left[x+\left(\frac{\kappa(\epsilon_{s})+b_1+b_2}{z(\epsilon_{s})}-2\right)\frac{t}{2b_1b_2}\right]+\ln(c_{s}(\epsilon_{s}))
=\sum_{i=0}^{\infty}\vartheta^{[i]}\epsilon_{s}^{i}
\end{equation}
where
\begin{equation*}
\begin{split}
    \vartheta^{[0]}=&{\rm i}\kappa^{[0]}\left[x+\left(\frac{c_1}{z^{[0]}}-2\right)\frac{t}{2b_1b_2}\right],  \\
    \vartheta^{[1]}=&-2{\rm i}z^{[1]}\mu_1\left[x+\left(\frac{\kappa^{[0]}+c_1}{z^{[0]}}-2\right)\frac{t}{2b_1b_2}\right]+\chi^{[1]}, \\
\vartheta^{[i]}=&{\rm i}\left\{\left[-2z^{[1]}\sum_{k+3l=i,k\geq1,l\geq0}\mu_kc_2^{l}+c_1
c_2^{i/3}\delta_{i\mod3,0}\right]\frac{\kappa^{[0]}t}{2b_1b_2z^{[0]}}-2z^{[1]}\mu_i\left[x+\left(\frac{c_1}{z^{[0]}}-2\right)\frac{t}{2b_1b_2}\right]\right. \\
&\left.-\frac{z^{[1]}t}{b_1b_2z^{[0]}}\sum_{m+n=i}\mu_m\left[-2z^{[1]}\sum_{k+3l=n,k\geq0,l\geq0}\mu_k
c_2^{l}+c_1c_2^{n/3}\delta_{n\mod3,0}\right]\right\}+\chi^{[i]},\,\,\, i\geq 2, 
\end{split}
\end{equation*}
the terms $c_1=\kappa^{[0]}+b_1+b_2$, $c_2=-z^{[1]}/z^{[0]}$ and $\delta_{i\mod 3,0}$ is the Kronecker's delta. 

With the above preliminaries, we can construct the high-order rogue waves at the branch point of multiplicity three. 
Taking $|y_{s}\rangle=|y_{s}^{(0)}\rangle$ and $c^{(0)}_{s}(\epsilon_{s})=c_{s}(\epsilon_{s})$ that defined in \eqref{ys0-discussion}
and \eqref{exp-theta} for all $s=1,2,\cdots N$, since $\kappa_{l}^{(s)}=\kappa(\epsilon_{s}\omega^{l-1})$, 
we consider the following two functions naturally with respect to $\epsilon_{s}$ and $\epsilon_{r}^{*}$:
\begin{equation}\label{denom-expand}
    \begin{split}
        \mathcal{M}(\epsilon_{s},\epsilon_{r}^{*})&=\frac{\hat{\kappa}(\epsilon_{r}^{*}){\rm e}^{\vartheta(\epsilon_{s})+\hat{\vartheta}(\epsilon_{r}^{*})}}{\kappa(\epsilon_{s})-\hat{\kappa}(\epsilon_{r}^{*})}=\frac{\kappa^{[0]*}}{\kappa^{[0]}-\kappa^{[0]*}}{\rm e}^{\vartheta^{[0]}+\vartheta^{[0]*}}\sum_{k=0,l=0}^{\infty,\infty}M_{k,l}\epsilon_{r}^{*k}\epsilon_{s}^l,  \\
        \mathcal{G}(\epsilon_{s},\epsilon_{r}^{*})&=\frac{\hat{\kappa}(\epsilon_{r}^{*})+b_i}{\kappa(\epsilon_{s})+b_i}\frac{\kappa(\epsilon_{s}){\rm e}^{\vartheta(\epsilon_{s})+\hat{\vartheta}(\epsilon_{r}^{*})}}{\kappa(\epsilon_{s})-\hat{\kappa}(\epsilon_{r}^{*})}=\frac{\kappa^{[0]*}+b_i}{\kappa^{[0]}+b_i}\frac{\kappa^{[0]}}{\kappa^{[0]}-\kappa^{[0]*}}{\rm e}^{\vartheta^{[0]}+\vartheta^{[0]*}}\sum_{k=0,l=0}^{\infty,\infty}G_{k,l}\epsilon_{r}^{*k}\epsilon_{s}^l,
    \end{split}
\end{equation}
where $\hat{\kappa}(\epsilon_{r}^{*})=\kappa(\epsilon_{r})^{*}$ and $\hat{\vartheta}(\epsilon_{r}^{*})=\vartheta(\epsilon_{r})^{*}$. 
Then we calculate the quadric forms \eqref{quadric-form}
\begin{equation}
    \begin{split}
        m_{rs}^{(0)}&=\sum_{k,l=1}^{3}(\omega^{*})^{k-1}\omega^{l-1}\mathcal{M}(\omega^{l-1}\epsilon_{s},(\omega^{k-1}\epsilon_{r})^{*})\\
        &=9\frac{\kappa^{[0]*}}{\kappa^{[0]}-\kappa^{[0]*}}{\rm e}^{\vartheta^{[0]}+\vartheta^{[0]*}}\sum_{k=1,l=1}^{\infty,\infty}M_{3k-1,3l-1}\epsilon_{r}^{*(3k-1)}\epsilon_{s}^{3l-1}
    \end{split}
\end{equation}
and
\begin{equation}
    \begin{split}
        m_{rs}^{(1)}&=\sum_{k,l=1}^{3}(\omega^{*})^{k-1}\omega^{l-1}\mathcal{G}(\omega^{l-1}\epsilon_{s},(\omega^{k-1}\epsilon_{r})^{*})\\
        &=9\frac{\kappa^{[0]*}+b_i}{\kappa^{[0]}+b_i}\frac{\kappa^{[0]}}{\kappa^{[0]}-\kappa^{[0]*}}{\rm e}^{\vartheta^{[0]}+\vartheta^{[0]*}}\sum_{k=1,l=1}^{\infty,\infty}G_{3k-1,3l-1}\epsilon_{r}^{*(3k-1)}\epsilon_{s}^{3l-1}.
    \end{split}
\end{equation}
As we discussed earlier, the quadratic forms only depend on the $(3k-1,3l-1)$th order coefficients
with respect to $\epsilon_{s}$ and $\epsilon^{*}_{r}$. Similarly, taking $|y_{s}\rangle=|y_{s}^{(1)}\rangle$ 
and $c^{(1)}_{s}(\epsilon_{s})=c_{s}(\epsilon_{s})$ for all $s=1,2,\cdots,N$, 
the quadratic forms only depend on the $(3k-2,3l-2)$th order coefficients. 
Due to the fact that a solution of the CFL equations \eqref{CFL} multiplied by a constant with a modulus of one 
is still a solution of it, we can discard the terms 
$9\frac{\kappa^{[0]*}}{\kappa^{[0]}-\kappa^{[0]*}}{\rm e}^{\vartheta^{[0]}+\vartheta^{[0]*}}$ in $m_{rs}^{(0)}$ and 
$9\frac{\kappa^{[0]*}+b_i}{\kappa^{[0]}+b_i}\frac{\kappa^{[0]}}{\kappa^{[0]}-\kappa^{[0]*}}{\rm e}^{\vartheta^{[0]}+\vartheta^{[0]*}}$ in $m_{rs}^{(1)}$. 
Take limit $\epsilon_{s}, \epsilon^{*}_{r}\to 0$ in \eqref{N-sol}, for $k=0,1$, 
it leads to $k$-type rogue wave solutions 
\begin{equation}\label{special-rw1}
    u_{i,0}^{[N]}=a_{i}\left(\frac{\det(\left(G_{3k-1,3l-1}\right)_{1\leq k,l\leq N})}{\det(\left(M_{3k-1,3l-1}\right)_{1\leq k,l\leq N})}\right){\rm e}^{{\rm i}\omega_i}, \quad
    u_{i,1}^{[N]}=a_{i}\left(\frac{\det(\left(G_{3k-2,3l-2}\right)_{1\leq k,l\leq N})}{\det(\left(M_{3k-2,3l-2}\right)_{1\leq k,l\leq N})}\right){\rm e}^{{\rm i}\omega_i}.
\end{equation}
For 0-type rogue wave solutions, the free internal parameters are $(\chi^{[1]},\chi^{[2]},\chi^{[4]},\chi^{[5]},\cdots,\chi^{[3N-1]})$ and 
for 1-type rogue wave solutions, the free internal parameters are $(\chi^{[1]},\chi^{[2]},\chi^{[4]},\chi^{[5]},\cdots,\chi^{[3N-2]})$. 
We will study the rogue wave patterns for these two cases. 

Actually, we can generate multi-rogue wave solutions by taking $|y_{s}\rangle=|y_{s}^{(0)}\rangle$ for $1\leq s \leq N_{1}$ and 
$|y_{s}\rangle=|y_{s}^{(1)}\rangle$ for $N_{1}+1\leq s \leq N$ where $N_{1}$ is an integer. Then there will 
occur $(3k-1,3l-2)$th and $(3k-2,3l-1)$th order coefficients. In this case, let $\chi_{l}^{[i]}\in\mathbb{C}$, we consider
$c_{s}^{(l)}(\epsilon_{s})={\rm e}^{\sum_{i=1}^{\infty}\chi_{l}^{[i]}\epsilon_{s}^{i}}$ and $\chi_{l}^{[3i]}=0, i\geq 1$. Then we 
can expand $\vartheta^{(l)}(\epsilon_{s})=\theta(\epsilon_{s})+\sum_{i=1}^{\infty}\chi_{l}^{[i]}\epsilon_{s}^{i}$ just like \eqref{exp-theta}. 
% For example, 
% if we fixed $s,r$ where $1\leq s \leq N_{1}$ and $N_{1}+1 \leq r \leq N$, then 
% \begin{equation}
%     \begin{split}
%         m_{rs}^{(0)}&=\frac{2}{\lambda}\sum_{k,l=1}^{3}(\omega^{*})^{2(k-1)}\omega^{l-1}\mathcal{M}(\omega^{l-1}\epsilon_{s},(\omega^{k-1}\epsilon_{r})^{*})\\
%         &=9\frac{\kappa^{[0]*}}{\kappa^{[0]}-\kappa^{[0]*}}{\rm e}^{\vartheta^{[0]}+\vartheta^{[0]*}}\sum_{k=1,l=1}^{\infty,\infty}M_{3k-2,3l-1}\epsilon_{r}^{*(3k-2)}\epsilon_{s}^{3l-1}. 
%     \end{split}
% \end{equation}
After taking the limit $\epsilon_{s},\epsilon^{*}_{r}\to 0$ in \eqref{N-sol}, we obtain the multi-rogue wave solutions 
\begin{equation}\label{gene-rw1}
    u_i^{(N_{1},N_{2})}=a_{i}\frac{\det
    \begin{pmatrix}
        \mathbf{M}^{(1)}_{N_{1},N_{1}}& \mathbf{M}^{(1)}_{N_{1},N_{2}}\\
        \mathbf{M}^{(1)}_{N_{2},N_{1}}& \mathbf{M}^{(1)}_{N_{2},N_{2}}
    \end{pmatrix}
        }{\det
        \begin{pmatrix}
            \mathbf{M}^{(0)}_{N_{1},N_{1}}& \mathbf{M}^{(0)}_{N_{1},N_{2}}\\
            \mathbf{M}^{(0)}_{N_{2},N_{1}}& \mathbf{M}^{(0)}_{N_{2},N_{2}}
        \end{pmatrix}
        }{\rm e}^{{\rm i}\omega_i},
\end{equation}
where $N_{2}=N-N_{1}$ and the matrices are
\begin{equation}
    \mathbf{M}^{(0)}_{N_{p},N_{q}}=
    \left(M_{3k-p,3l-q}\right)_{1\leq k \leq N_{p},1 \leq l \leq N_{2}},\quad 
    \mathbf{M}^{(1)}_{N_{p},N_{q}}=
    \left(G_{3k-p,3l-q}\right)_{1\leq k \leq N_{p},1 \leq l \leq N_{2}}. 
\end{equation}
For the rogue wave solutions \eqref{gene-rw1}, 
the internal parameters are $(\chi_{0}^{[1]},\chi_{0}^{[2]},\chi_{0}^{[4]},\chi_{0}^{[5]},\cdots,\chi_{0}^{[3N_{1}-1]})$ in $|y_{s}^{(0)}\rangle$ and 
$(\chi_{1}^{[1]},\chi_{1}^{[2]},\chi_{1}^{[4]},\chi_{1}^{[5]},\cdots,\chi_{1}^{[3N_{1}-2]})$ in $|y_{s}^{(1)}\rangle$. If $N_{1}=0$ or $N_{2}=0$, then 
the multi-rogue wave solutions degenerate to $k$-type rogue wave solutions \eqref{special-rw1}. 
To analyze the rogue wave patterns, we need another form of the rogue wave solutions \eqref{gene-rw1}. 
To simplify the notations, we consider the $k$-type rogue wave solutions \eqref{special-rw1} firstly. 
Using the skill $\kappa={\rm{e}}^{\ln(\kappa)}$, $M_{k,l}$ and $G_{k,l}$ can be expressed by Schur polynomials. 
Expanding
\begin{equation*}
    \begin{split}
        \ln(\kappa)&=\ln(\kappa^{[0]})+
        \ln\left(1-\frac{2z^{[1]}}{\kappa^{[0]}}\sum_{i=1}^{\infty}\mu_i\epsilon_{s}^{i}\right)   \\  &=\ln(\kappa^{[0]})-\sum_{j=1}^{\infty}
        \frac{1}{j}\left(\frac{2z^{[1]}}{\kappa^{[0]}}\right)^{j}\left(\sum_{i=1}^{\infty}\mu_i\epsilon_{s}^{i}\right)^{j} \\
        &=\ln(\kappa^{[0]})+\sum_{j=1}^{\infty}H_{j}^{(1)}\epsilon_{s}^{j},
    \end{split}
\end{equation*}
similarly, we have Taylor expansions
\begin{equation}\label{coeff}
\begin{split}
    \ln\left(\frac{\kappa}{\kappa^{[0]}}\right)=\sum_{j=1}^{\infty}H_{j}^{(1)}\epsilon_{s}^{j}
        ,\quad
        &\ln\left(\frac{\kappa^{[0]}-\kappa^{[0]*}}{-2z^{[1]}\mu_{1}\epsilon_{s}}\frac{\kappa-\kappa^{[0]}}{\kappa-\kappa^{[0]*}}\right)=\sum_{j=1}^{\infty}H^{(2)}_{j}\epsilon_{s}^{j},
        \\
        \ln\left(\frac{\kappa^{[0]}-\kappa^{[0]*}}{\kappa-\kappa^{[0]*}}\right)=\sum_{j=1}^{\infty}H^{(3)}_{j}\epsilon_{s}^{j}
        ,\quad
        &\ln\left(\frac{\kappa^{[0]}+b_i}{\kappa+b_i}\right)=\sum_{j=1}^{\infty}H^{(4)}_{j}\epsilon_{s}^{j}.
\end{split}
\end{equation}  
Here we denote $H^{(i)}=(H^{(i)}_{1},H^{(i)}_{2},\cdots),i=1,2,3,4$. Now we can reduce the functions \eqref{denom-expand} 
to more useful forms. Firstly, we expand
\begin{equation}
    \frac{1}{\kappa-\kappa^*}=\frac{\kappa^{[0]*}-\kappa^{[0]}}{(\kappa-\kappa^{[0]*})(\kappa^{*}-\kappa^{[0]})}
    \frac{1}{1-\frac{(\kappa-\kappa^{[0]})(\kappa^{*}-\kappa^{[0]*})}{(\kappa-\kappa^{[0]*})(\kappa^{*}-\kappa^{[0]})}}
    =\frac{\kappa^{[0]*}-\kappa^{[0]}}{(\kappa-\kappa^{[0]*})(\kappa^{*}-\kappa^{[0]})}
    \sum_{j=0}^{\infty}\left(\frac{(\kappa-\kappa^{[0]})(\kappa^{*}-\kappa^{[0]*})}{(\kappa-\kappa^{[0]*})(\kappa^{*}-\kappa^{[0]})}\right)^{j},
\end{equation}
then
% \begin{equation}
%     \begin{split}
%         \frac{1}{1-\frac{(\kappa-\kappa^{[0]})(\kappa^{*}-\kappa^{[0]*})}{(\kappa-\kappa^{[0]*})(\kappa^{*}-\kappa^{[0]})}}=\sum_{j=0}^{\infty}\left(\frac{(\kappa-\kappa^{[0]})(\kappa^{*}-\kappa^{[0]*})}{(\kappa-\kappa^{[0]*})(\kappa^{*}-\kappa^{[0]})}\right)^{j}.
%     \end{split}
% \end{equation}
the functions \eqref{denom-expand} can be expressed by
\begin{equation}
    \begin{split}
        \mathcal{M}(\epsilon_{s},\epsilon_{r}^{*})
        &=\frac{\kappa^{[0]*}}{\kappa^{[0]}-\kappa^{[0]*}}{\rm e}^{\vartheta^{[0]}+\vartheta^{[0]*}}\sum_{k=0}^{\infty}\left(\frac{4z^{[1]}z^{[1]*}\mu_{1}\mu_{1}^{*}\epsilon_{s}\epsilon_{r}^{*}}{|\kappa^{[0]}-\kappa^{[0]*}|^{2}}\right)^{k}\\
        &\exp\left(\sum_{l=1}^{\infty}((kH^{(2)}_{l}+H_{l}^{(3)}+\vartheta^{[l]})\epsilon_{s}^{l}+(kH^{(2)*}_{l}+H_{l}^{(3)*}+H_{l}^{(1)*}+\vartheta^{[l]*})\epsilon_{r}^{*l})\right),\\
        \mathcal{G}(\epsilon_{s},\epsilon_{r}^{*})
        &=\frac{\kappa^{[0]*}+b_i}{\kappa^{[0]}+b_i}\frac{\kappa^{[0]}}{\kappa^{[0]}-\kappa^{[0]*}}{\rm e}^{\vartheta^{[0]}+\vartheta^{[0]*}}\sum_{k=0}^{\infty}\left(\frac{4z^{[1]}z^{[1]*}\mu_{1}\mu_{1}^{*}\epsilon_{s}\epsilon_{r}^{*}}{|\kappa^{[0]}-\kappa^{[0]*}|^{2}}\right)^{k}\\
        &\exp\left(\sum_{l=1}^{\infty}((kH^{(2)}_{l}+H_{l}^{(3)}+H_{l}^{(4)}+H^{(1)}_{l}+\vartheta^{[l]})\epsilon_{s}^{l}+(kH^{(2)*}_{l}+H_{l}^{(3)*}-H_{l}^{(4)*}+\vartheta^{[l]*})\epsilon_{r}^{*l})\right).
    \end{split}
\end{equation}
Denote $\varTheta=(\vartheta^{[1]},\vartheta^{[2]},\cdots)$, then the coefficients are given by
\begin{equation}\label{coeff-function}
    \begin{split}
        M_{k,l}&=\sum_{r=0}^{\min(k,l)}C^{r}S_{l-r}(rH^{(2)}+H^{(3)}+\varTheta^{})S_{k-r}(rH^{(2)*}+H^{(3)*}+H^{(1)*}+\varTheta^{*}),
        \\
        G_{k,l}&=\sum_{r=0}^{\min(k,l)}C^{r}S_{l-r}(rH^{(2)}+H^{(3)}+H^{(4)}+H^{(1)}+\varTheta^{})S_{k-r}(rH^{(2)*}+H^{(3)*}-H^{(4)*}+\varTheta^{*}),
    \end{split}
\end{equation}
where the constant $C=\frac{4|z^{[1]}|^{2}|\mu_{1}|^{2}}{|\kappa^{[0]}-\kappa^{[0]*}|^{2}}$. Hence for 0-type rogue wave 
solutions in \eqref{special-rw1}, the coefficients
\begin{equation}
    M_{3k-1,3l-1}=
    \begin{pmatrix}
        S_{3k-1}(H^{(3)}+H^{(1)}+\varTheta) \\ C^{\frac{1}{2}}S_{3k-2}(H^{(2)}+H^{(3)}+H^{(1)}+\varTheta) \\ \vdots
    \end{pmatrix}_{3N\times 1}^{\dagger}
    \begin{pmatrix}
        S_{3l-1}(H^{(3)}+\varTheta) \\
        C^{\frac{1}{2}}S_{3l-2}(H^{(2)}+H^{(3)}+\varTheta) \\
        \vdots
    \end{pmatrix}_{3N\times 1}, 
\end{equation}
where $1\leq k,l \leq N$. 
The expression of $G_{3k-1,3l-1}$ is similar to $M_{3k-1,3l-1}$. Hence the coefficients $M_{k,l}$ and $G_{k,l}$ can be expressed 
by Schur polynomials. For the multi-rogue wave solutions \eqref{gene-rw1}, using the same method, 
we can obtain a similar expression. 

To express the multi-rogue wave solutions \eqref{gene-rw1} using Schur polynomials, we introduce 
some notations. 
For two integers $0\leq N_{1},N_{2}\leq N$ and $N_{1}+N_{2}=N$, given two 
infinite dimensional vector $\mathbf{x}^{(r)}=(x_{1}^{(r)},x_{2}^{(r)},x_{3}^{(r)},\cdots),r=1,2$, and $p,q\in\{1,2\}$ we define 
\begin{equation}
    \mathbf{Y}_{N_{p},N_{q}}^{(p,q)}(\mathbf{x}^{(1)},\mathbf{x}^{(2)})=(\mathbf{Y}_{N_{p}}^{(p)}(\mathbf{x}^{(2)}))^{\dagger}\mathbf{Y}_{N_{q}}^{(q)}(\mathbf{x}^{(1)}),
\end{equation}
where $\mathbf{Y}_{N_{p}}^{(p)}(\mathbf{x}^{(r)})$ is $3N$ by $N_{p}$ matrix, and the $(i,j)$th element is
\begin{equation}\label{m-CFL-element}
    Y^{(p)}_{N_{p};i,j}(\mathbf{x}^{(r)})=C^{\frac{i-1}{2}}S_{3j-i-p+1}((i-1)H^{(2)}+\varTheta_{p}+\mathbf{x}^{(r)}).
\end{equation}
The terms $\varTheta_{p}=(\vartheta_{p-1}^{[1]},\vartheta_{p-1}^{[2]},\cdots)$, and the coefficients $\vartheta_{p-1}^{[i]}$ 
are given by the expansion $\vartheta^{(p-1)}(\epsilon_{s})=\theta(\epsilon_{s})+\sum_{i=1}^{\infty}\chi_{p-1}^{[i]}\epsilon_{s}^{i}
=\sum_{i=0}^{\infty}\vartheta_{p-1}^{[i]}\epsilon_{s}^{i}$ just like \eqref{exp-theta}. 
Then the $k$-type rogue wave solutions \eqref{special-rw1} can be represented by
\begin{equation}\label{special-rw2}
    \begin{split}
        u_{i,0}^{[N]}=a_{i}\left(\frac{\det(\mathbf{Y}_{N,N}^{(1,1)}(H^{(3)}+H^{(4)}+H^{(1)},H^{(3)}-H^{(4)}))}{\det(\mathbf{Y}_{N,N}^{(1,1)}(H^{(3)},H^{(3)}+H^{(1)}))}\right){\rm e}^{{\rm i}\omega_i}, \\
        u_{i,1}^{[N]}=a_{i}\left(\frac{\det(\mathbf{Y}_{N,N}^{(2,2)}(H^{(3)}+H^{(4)}+H^{(1)},H^{(3)}-H^{(4)}))}{\det(\mathbf{Y}_{N,N}^{(2,2)}(H^{(3)},H^{(3)}+H^{(1)}))}\right){\rm e}^{{\rm i}\omega_i}.
    \end{split}
\end{equation}
% where the free internal parameters are $(a^{[1]},a^{[2]},a^{[4]},a^{[5]},\cdots)$ respectively. 
% The former rogue wave solution in \eqref{special-rw2} is referred to as 0-type, while the latter is called 1-type. 
% Since we only consider $|y_{s}\rangle=|y_{s}^{[0]}\rangle$ or $|y_{s}\rangle=|y_{s}^{[1]}\rangle$, the free internal parameters of 
% two special rogue wave solution \eqref{special-rw2}, would be a infinite dimensional vector $(a^{[1]},a^{[2]},a^{[4]},a^{[5]},\cdots)$ and $a^{[1]}=0$. 
% But for the multi-rogue wave solutions, there will be two infinite-dimensional vectors. 
For the multi-rogue wave solutions \eqref{gene-rw1}, 
denote 
\begin{equation}
    \mathbf{Y}_{N_{1},N_{2}}(\mathbf{x}^{(1)},\mathbf{x}^{(2)})=
    \begin{pmatrix}
        \mathbf{Y}_{N_{1},N_{1}}^{(1,1)} & \mathbf{Y}_{N_{1},N_{2}}^{(1,2)} \\
        \mathbf{Y}_{N_{2},N_{1}}^{(2,1)} & \mathbf{Y}_{N_{2},N_{2}}^{(2,2)} 
    \end{pmatrix},
\end{equation}
we have the following proposition about the multi-rogue wave solutions of the CFL equations \eqref{CFL}. 
\begin{prop}
    Given two integers $N_{1},N_{2}$ with $0\leq N_{1},N_{2}\leq N$ and $N_{1}+N_{2}=N$, 
    let $|y_{s}\rangle=|y_{s}^{(0)}\rangle$ for $1\leq s \leq N_{1}$ and 
    $|y_{s}\rangle=|y_{s}^{(1)}\rangle$ for $N_{1}+1\leq s \leq N$ 
    in Theorem \ref{DT} and the seed solutions $u_{i}=u_{i}^{[0]}$ in \eqref{seed}. By B\"acklund transformation, 
    the CFL equations \eqref{CFL} have multi-rogue wave solutions
    \begin{equation}\label{gene-rw2}
        u_i^{(N_{1},N_{2})}=a_{i}\left(\frac{\det(\mathbf{Y}_{N_{1},N_{2}}(H^{(3)}+H^{(4)}+H^{(1)},H^{(3)}-H^{(4)}))}{\det(\mathbf{Y}_{N_{1},N_{2}}(H^{(3)},H^{(3)}+H^{(1)}))}\right){\rm e}^{{\rm i}\omega_i}.
    \end{equation}
    For the multi-rogue wave solutions \eqref{gene-rw2}, the free internal parameters are $(\chi_{0}^{[1]},\chi_{0}^{[2]},\chi_{0}^{[4]},\chi_{0}^{[5]},\cdots,\chi_{0}^{[3N_{1}-1]})$ for $|y_{s}^{(0)}\rangle$
    in \eqref{ys0-discussion} and 
    $(\chi_{1}^{[1]},\chi_{1}^{[2]},\chi_{1}^{[4]},\chi_{1}^{[5]},\cdots,\chi_{1}^{[3N_{1}-2]})$ for $|y_{s}^{(1)}\rangle$ in \eqref{ys1-discussion}. 
\end{prop}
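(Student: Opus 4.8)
The statement is essentially a repackaging of the construction carried out earlier in this section, so the plan is to assemble it in four steps. First, with the seed $u_i=u_i^{[0]}$ of \eqref{seed} and the spectral vectors $|y_s\rangle=|y_s^{(0)}\rangle$ for $1\le s\le N_1$ and $|y_s\rangle=|y_s^{(1)}\rangle$ for $N_1+1\le s\le N$ substituted into Theorem~\ref{DT}, the quadric-form identities \eqref{quadric-form}--\eqref{generating-function} express each entry of $\mathbf M$ and of $\mathbf M+2u_i^{-1}\mathbf Y^\dagger\mathbf Y_i$ in \eqref{N-sol} as a bilinear pairing of the vectors $(c_{s,1},c_{s,2},c_{s,3})$ against the matrices $(z^{(0)}_{k,l})$ and $(z^{(1)}_{k,l})$. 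Inserting the local coordinate chart \eqref{coord}, under which $\kappa_l^{(s)}=\kappa(\epsilon_s\omega^{l-1})$ and $\theta_l^{(s)}=\theta(\epsilon_s\omega^{l-1})$, then rewrites $m^{(0)}_{rs}$ and $m^{(1)}_{rs}$ as the cube-root-of-unity sums \eqref{denom-expand} of the generating functions $\mathcal M$ and $\mathcal G$.

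Second, I would carry out the root-of-unity averaging. Because the weights attached to $|y_s^{(0)}\rangle$ carry the phases $(1,\omega,\omega^2)$ and those attached to $|y_s^{(1)}\rangle$ the phases $(1,\omega^2,\omega)$, the geometric sums $\sum_{j=0}^{2}\omega^{jc}$ vanish unless $3\mid c$, so in the $(r,s)$ block only the coefficients $M_{3k-p,3l-q}$ (resp.\ $G_{3k-p,3l-q}$) survive, where $p\in\{1,2\}$ records the type of the $r$-th vector and $q\in\{1,2\}$ that of the $s$-th. Dividing numerator and denominator of \eqref{N-sol} by the common scalar $9\kappa^{[0]*}/(\kappa^{[0]}-\kappa^{[0]*})\,\ee^{\vartheta^{[0]}+\vartheta^{[0]*}}$ (resp.\ its $\mathcal G$-counterpart, which differs from it only by the unimodular factor $(\kappa^{[0]*}+b_i)\kappa^{[0]}/((\kappa^{[0]}+b_i)\kappa^{[0]*})$ times the same exponential) leaves \eqref{N-sol} unchanged, and letting $\epsilon_s,\epsilon_r^*\to 0$ extracts the leading coefficient in each slot, producing \eqref{gene-rw1}.

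Third, I would convert \eqref{gene-rw1} into Schur-polynomial form. Writing $\kappa=\ee^{\ln\kappa}$, expanding $1/(\kappa-\kappa^*)$ geometrically, and using the Taylor series \eqref{coeff} for $\ln(\kappa/\kappa^{[0]})$ and the associated logarithmic ratios turns $\mathcal M$ and $\mathcal G$ into the convolution formulas \eqref{coeff-function}; grouping powers of $\epsilon_s$ and $\epsilon_r^*$ exactly as in the proof of Proposition~\ref{prop-Ok1} identifies $M_{k,l}$ and $G_{k,l}$ with the matrix entries \eqref{m-CFL-element}, and the block index ranges $1\le s\le N_1$, $N_1+1\le s\le N$ assemble these into $\mathbf Y_{N_1,N_2}$, giving \eqref{gene-rw2}. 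Finally, for the parameter count I would trace back the constraint $\chi_l^{[3i]}=0$ built into $c^{(l)}_s(\epsilon_s)$ and observe that the highest-index Schur polynomial occurring in the $(p-1)$-type block is of the form $S_{3N_p-p}$, so only the $\chi$'s in the listed ranges can enter: indices that are multiples of $3$ are forced to zero, and indices beyond the listed range either drop out in the limit or merely shift the solution by a constant unimodular phase.

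The main obstacle I anticipate is not any single computation but justifying the $\epsilon_s,\epsilon_r^*\to 0$ limit rigorously: one must check that after the cube-root cancellation the lowest surviving bidegree in every matrix entry is genuinely $(3k-p,3l-q)$ with no further accidental vanishing, that the scalars removed from numerator and denominator agree up to a unimodular constant so that \eqref{N-sol} is really preserved (whence the limit is again a CFL solution by continuity), and that the denominator determinant $\det(\mathbf Y_{N_1,N_2}(H^{(3)},H^{(3)}+H^{(1)}))$ is not identically zero in $(x,t)$ — the last point being controlled by the fact that its leading behavior in $x,t$ reproduces a Wronskian-type Schur determinant of the same shape as the Okamoto determinants of Section~\ref{sec-pre}. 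Once these points are in place, the proposition follows by direct substitution.
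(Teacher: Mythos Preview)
Your proposal is correct and follows essentially the same route as the paper: the proposition is stated there without a separate proof block, being the summary of the construction carried out immediately before it (the quadric forms \eqref{quadric-form}, the root-of-unity averaging leading to \eqref{gene-rw1}, and the $\kappa=\ee^{\ln\kappa}$/geometric-series rewriting \eqref{coeff}--\eqref{coeff-function} producing the Schur-polynomial determinants). Your four steps track these stages in order, and the additional rigor checks you flag (nonvanishing of the denominator, unimodularity of the discarded scalars) are reasonable refinements that the paper leaves implicit.
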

The 0-type rogue wave solutions $u_{i,0}^{[N]}$ in \eqref{special-rw2} have the form $u_{i}^{(N,0)}$ in \eqref{gene-rw2} and 
the 1-type rogue wave solutions $u_{i,1}^{[N]}$ have the form $u_{i}^{(0,N)}$. For these two cases, we use the notations 
$(\chi^{[1]},\chi^{[2]},\chi^{[4]},\chi^{[5]},\cdots,\chi^{[3N-1]})$ and 
$(\chi^{[1]},\chi^{[2]},\chi^{[4]},\chi^{[5]},\cdots,\chi^{[3N-2]})$ respectively to 
represent the free internal parameters. 
Since the selections of parameters $\chi^{[3i]},i\geq 1$ do not impact the rogue wave solution, we can set these terms to zero. 
We will provide the reasons behind the proof of the rogue wave patterns in the inner region. 

In the next section, we will analyze the $k$-type rogue wave solutions \eqref{special-rw2}, and show their patterns. 
Actually, there are three types of rogue wave solutions. If we take $|y_{s}\rangle=\mathbf{\Phi}(\lambda_{s})(c_{s}(\epsilon_{s}), c_{s}(\omega\epsilon_{s}), c_{s}(\omega^{2}\epsilon_{s}))^{T}$, then 
the quadratic forms \eqref{quadric-form} only depends on $(3k,3l)$th order coefficients of two functions 
\eqref{denom-expand}. In this case, the solution can be converted to $0$-type, since 
the first column of $\mathbf{Y}_{N_{p}}^{(p)}(\mathbf{x}^{(r)})$ \eqref{m-CFL-element} has only one nonzero element $S_{0}=1$. 

As a summary of this section, considering the seed solutions \eqref{seed}, 
it leads to the fundamental solution \eqref{fund2} of the Lax pair \eqref{CFL-lax} 
and the Riemann surfaces \eqref{RS1}. Additionally, we study a general proposition of the Riemann surfaces at branch points. Then
we construct the rogue wave solutions \eqref{gene-rw2} generated at the branch point of multiplicity three 
using the B\"acklund transformation. 
To analyze the rogue wave patterns in Section \ref{sec-pattern}, we reduce 
the multi-rogue wave solutions to determinant 
representation \eqref{gene-rw2}. 
In the next section, we use the root structures of Okamoto polynomial hierarchies to study 
$k$-type rogue wave solutions \eqref{special-rw2}. 

\section{The rogue wave patterns}\label{sec-pattern}

In this section, we study the rogue wave patterns
for \eqref{special-rw2}, and our results are as follows. 
Under the assumption nonzero roots of Okamoto polynomial hierarchies are all simple, the patterns are divided into two parts, 
the outer region, and the inner region. In the outer region, the rogue wave can be decomposed into some
first-order rogue wave solutions which are far from the origin. In the inner region, it can be viewed as a lower-order rogue wave. 
The positions and orders of these rogue waves are associated with the root distributions of Okamoto polynomial hierarchies respectively. 

\subsection{The asymptotics of the outer region}
Now we study the rogue wave patterns in the outer region for \eqref{special-rw2}. 
\begin{prop}[Outer region]
    Let $\eta=(\chi^{[m]})^{1/m}, m\geq 2$ where $\chi^{[m]}$ is an internal parameter of the $k$-type rogue wave solutions 
    \eqref{special-rw2} and $\sqrt{x^2+t^2}=\mathcal{O}(\eta)$, suppose the nonzero roots of $W_{N}^{[k,m]}(z)$ 
    are all simple. As $|\eta|\to \infty$ for $k$-type rogue wave solutions \eqref{special-rw2} with $k=0,1$, we have first-order rogue wave solutions 
    for $i=1,2$ near the nonzero roots $(x_{0},t_{0})$ of $W_{N}^{[k,m]}((\vartheta^{[1]}(x,t)-\chi^{[1]}){\rm e}^{-{\rm i}\arg{\eta}})$: 
    \begin{equation}\label{outer-solution}
        u_{i,k}^{asy}(\hat{x},\hat{t})=
        a_{i}
        \left(
        \frac
        {|p_{1}|^{2}(\hat{x}+\frac{\Re{(p_{1}q_{1}^{*})}}{|p_{1}|^{2}}\hat{t}+r_{3})^{2}+p_{2} (\hat{t}+r_{4})^{2}+\frac{4|z^{[1]}|^{2}|\mu_{1}|^{2}}{|\kappa^{[0]}-\kappa^{[0]*}|^{2}}}
        {|p_{1}|^{2}(\hat{x}+\frac{\Re{(p_{1}q_{1}^{*})}}{|p_{1}|^{2}}\hat{t}+r_{1})^{2}+p_{2} (\hat{t}+r_{2})^{2}+\frac{4|z^{[1]}|^{2}|\mu_{1}|^{2}}{|\kappa^{[0]}-\kappa^{[0]*}|^{2}}}
        \right)
        {\rm e}^{{\rm i}\omega_i}+\mathcal{O}(\eta^{-1}), 
    \end{equation}
    where $\hat{x}=x-x_{0}|\eta|,\hat{t}=t-t_{0}|\eta|$. The translation terms are
    \begin{equation}
        \begin{split}
            r_{1}&=\Re(\frac{q_{2}}{p_{1}})+\frac{H_{1}^{(1)*}}{2p_{1}^{*}},\quad
            r_{2}=-\frac{\Im{(p_{1}q_{1}^{*})}}{2p_{2}|p_{1}|^{2}}(2\Im(q_{2} p_{1}^{*})+{\rm i}H_{1}^{(1)*}p_{1}),\\
            r_{3}&=\Re(\frac{q_{2}}{p_{1}})+{\rm i}\Im(\frac{H_{1}^{(4)}}{p_{1}})+\frac{H^{(1)}_{1}}{2p_{1}},\quad
            r_{4}=-\frac{\Im{(p_{1}q_{1}^{*})}}{2p_{2}|p_{1}|^{2}}(2\Im(q_{2} p_{1}^{*})-2{\rm i}\Re(H_{1}^{(4)}p_{1}^{*})-{\rm i}H^{(1)}_{1}p_{1}^{*}),\\
        \end{split}
    \end{equation}
    where 
    \begin{equation}
        \begin{split}
            p_{1}&=-2{\rm i} z^{[1]}\mu_{1},\quad
            q_{1}=-\frac{{\rm i} z^{[1]}(b_{1}+b_{2}+2\kappa^{[0]}-2z^{[0]})}{b_{1}b_{2}z^{[0]}},\\
            p_{2}&=|q_{1}|^{2}-(\frac{\Re{(p_{1}q_{1}^{*})}}{|p_{1}|})^{2},\quad
            q_{2}=H^{(3)}_{1}+\chi^{[1]}+(\vartheta^{[2]}(x_{0},t_{0})-\chi^{[2]}){\rm e}^{-{\rm i}\arg{\eta}}\frac{W_{N,1}^{[k,m]}((\vartheta^{[1]}(x_{0},t_{0})-\chi^{[1]}){\rm e}^{-{\rm i}\arg{\eta}})}{W_{N}^{[k,m]'}((\vartheta^{[1]}(x_{0},t_{0})-\chi^{[1]}){\rm e}^{-{\rm i}\arg{\eta}})}.
        \end{split}
    \end{equation}
\end{prop}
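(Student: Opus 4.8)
The plan is to analyze the large-$\eta$ asymptotics of the determinant representation \eqref{special-rw2} in the regime $\sqrt{x^2+t^2}=\mathcal{O}(\eta)$. The starting point is the observation that each entry of the matrices $\mathbf{Y}_{N,N}^{(p,p)}$ is built from Schur polynomials $S_n$ evaluated at argument vectors whose components are dominated by the internal parameter $\chi^{[m]}$ in slot $m$ (of size $\mathcal{O}(\eta^m)$) and by $\vartheta^{[1]}$ in slot $1$ (of size $\mathcal{O}(\eta)$ in the chosen spatial scaling), with all other slots $\mathcal{O}(\eta)$. First I would invoke Proposition \ref{prop-OK2} to reduce each $S_n$ in the determinant to $S_n$ of the two-parameter truncated vector $\mathbf{v}=(\cdot,0,\ldots,0,\chi^{[m]},0,\ldots)$, incurring only relative errors of order $\eta^{-1}$. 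After rescaling via Proposition \ref{prop-Ok1} (pulling out powers of $\eta$ and writing everything in terms of $z:=(\vartheta^{[1]}-\chi^{[1]})\mathrm{e}^{-\mathrm{i}\arg\eta}$), the leading determinant becomes exactly $W_N^{[k,m]}(z)$ up to the normalizing constant $c_N^{[k]}$ — this is where the Okamoto hierarchy enters and why the case $m\equiv 0\pmod 3$ must be excluded (there $W_N^{[k,m]}$ is a pure monomial $z^{N(N+1-k)}$ and has no nonzero roots to organize the outer pattern).

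Next I would localize near a simple nonzero root $(x_0,t_0)$ of $W_N^{[k,m]}(z)$. Setting $\hat x=x-x_0|\eta|$, $\hat t=t-t_0|\eta|$, the argument $z$ is $z_0+\mathcal{O}(\eta^{-1}\cdot(\hat x,\hat t))+\ldots$ where $z_0$ is the root; since $W_N^{[k,m]}(z_0)=0$ but $W_N^{[k,m]\prime}(z_0)\ne 0$ (simplicity), the leading term of the denominator determinant vanishes and one must go to the next order in the $\eta^{-1}$ expansion. This is precisely the role of the auxiliary polynomials $W_{N,1}^{[k,m]}$: the $\mathcal{O}(\eta^{-1})$ correction to the truncated Schur vectors (the reinstated $\vartheta^{[2]}$-type contributions) produces, via multilinearity of the determinant, column-replacement terms that assemble into $W_{N,1}^{[k,m]}$. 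Carrying the expansion of both numerator and denominator determinants to the order where the first nonvanishing contribution appears, and factoring out the common scalar prefactors, I expect to obtain a ratio of two quadratic polynomials in $(\hat x,\hat t)$ — the hallmark of a first-order Peregrine-type rogue wave. The explicit shifts $r_1,r_2,r_3,r_4$ and the coefficients $p_1,q_1,p_2,q_2$ then come from matching the linear-in-$(\hat x,\hat t)$ and constant terms in the two lowest Taylor coefficients $\vartheta^{[1]},\vartheta^{[2]}$ of \eqref{exp-theta}, together with $H^{(1)}_1,H^{(3)}_1,H^{(4)}_1$ from \eqref{coeff}; the appearance of $W_{N,1}^{[k,m]}(z_0)/W_N^{[k,m]\prime}(z_0)$ in $q_2$ is exactly the ratio coming from the column-replacement correction divided by the derivative from the simple zero.

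Concretely, the key steps in order are: (i) apply Propositions \ref{prop-Ok1}–\ref{prop-OK2} to truncate and rescale the Schur arguments, identifying the leading determinant with $c_N^{[k]}W_N^{[k,m]}(z)$; (ii) record the next-order correction and, using multilinearity, express it through $W_{N,1}^{[k,m]}$; (iii) zoom in at a simple nonzero root $z_0$, where the leading order cancels, and extract the first surviving order in $1/\eta$; (iv) evaluate the relevant low-order coefficients $\vartheta^{[1]},\vartheta^{[2]},H^{(1)}_1,H^{(3)}_1,H^{(4)}_1$ and a matching constant $C$ from the earlier formulas; (v) algebraically reorganize the resulting quadratic-over-quadratic ratio into the stated completed-square form, reading off $p_1,q_1,p_2,q_2,r_1,\ldots,r_4$; (vi) verify the error is uniformly $\mathcal{O}(\eta^{-1})$ on compact $(\hat x,\hat t)$-sets and that the limiting expression is indeed the first-order rogue wave of the CFL equations.

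**The hard part will be** step (ii)–(iii): tracking how the $\mathcal{O}(\eta^{-1})$ perturbations of the Schur-polynomial arguments propagate through the $N\times N$ determinant when the leading term is degenerate at $z_0$. Because the leading determinant vanishes to first order at a simple root, one cannot simply expand the ratio naively; one needs the precise statement that the first correction equals (derivative of $W_N^{[k,m]}$) $\times$ (linear shift in $z$) plus a $W_{N,1}^{[k,m]}$-type term, and that these combine cleanly. Establishing this combinatorial identity for the determinant expansion — essentially a Jacobi-type / multilinearity argument adapted to the Okamoto block structure — is the technical heart, analogous to the corresponding lemmas in \cite{yang_rogue_2022}; the remaining algebra to reach the explicit $r_j,p_j,q_j$ is lengthy but routine.
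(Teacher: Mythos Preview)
Your plan is essentially the paper's own argument: reduce the Schur entries via Propositions~\ref{prop-Ok1}--\ref{prop-OK2}, identify the leading determinant with $W_N^{[k,m]}$, localize at a simple nonzero root, and expand to next order to recover a first-order rogue wave. The overall architecture is correct.

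There is one technical point you are underselling that the paper handles explicitly and that you will need to confront. The object whose determinant you must expand is not a single $N\times N$ determinant of Schur polynomials but the Gram-type product
\[
\det\bigl(\mathbf{Y}_{0,N}(\mathbf{x}^{(1)},\mathbf{x}^{(2)})\bigr)
=\det\bigl((\mathbf{Y}_N^{(2)}(\mathbf{x}^{(2)}))^{\dagger}\,\mathbf{Y}_N^{(2)}(\mathbf{x}^{(1)})\bigr),
\]
with the rectangular factors carrying the weights $C^{(i-1)/2}$ from \eqref{m-CFL-element}. The paper expands this via Cauchy--Binet as a sum over index sets $(v_1,\dots,v_N)$, and at a root $z_0$ of $W_N^{[k,m]}$ there are \emph{two} contributions at the surviving order $\eta^{2N^2-2}$: the ``principal'' choice $(1,\dots,N)$ produces the linear-in-$\vartheta^{[1]}(\hat x,\hat t)$ term and the $W_{N,1}^{[k,m]}$ column-replacement term you anticipate, while the shifted choice $(1,\dots,N-1,N+1)$ produces an additional $C\cdot W_N^{[k,m]\prime}(z_0)$ term. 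It is this second contribution that supplies the additive constant $C=\tfrac{4|z^{[1]}|^2|\mu_1|^2}{|\kappa^{[0]}-\kappa^{[0]*}|^2}$ in the final quadratic, preventing the denominator from vanishing identically at the root. Your step~(ii) as written (``multilinearity of the determinant'') captures the first contribution but not obviously the second; make sure you run Cauchy--Binet and track both index sets, or you will not see where the $+C$ comes from.
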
 
\begin{proof}
We will only provide the proof for 1-type as the proofs are similar to 0-type. 
Our main idea is to estimate the determinant element 
of the numerator and denominator in \eqref{special-rw2},i.e. 
$\mathbf{Y}_{0,N}(\mathbf{x}^{(1)},\mathbf{x}^{(2)})$, where $(\mathbf{x}^{(1)},\mathbf{x}^{(2)})=(H^{(3)},H^{(3)}+H^{(1)})$ or 
$(\mathbf{x}^{(1)},\mathbf{x}^{(2)})=(H^{(3)}+H^{(4)}+H^{(1)},H^{(3)}-H^{(4)})$. Since $\mathbf{Y}_{0,N}(\mathbf{x}^{(1)},\mathbf{x}^{(2)})$ 
can be expressed by $\mathbf{Y}_{N}^{(2)}$ \eqref{m-CFL-element}, 
it just needs to estimate 
$\mathbf{Y}_{N}^{(2)}(\mathbf{x})$ for a given vector $\mathbf{x}=(x_1,x_2,\cdots)$. 
By Proposition \ref{prop-OK2}, for $i\geq 2$ we obtain
\begin{equation}
    S_{i}(kH^{(2)}+\varTheta^{}+\mathbf{x})= S_{i}(\mathbf{v}_{1})+
    \begin{cases}
        \mathcal{O}(\eta^{i-1}), \quad m\geq 3,\\
        \mathcal{O}(\eta^{i-2}), \quad m=2
    \end{cases}
\end{equation}
for some integer $k$, where $\mathbf{v}_{1}=(\vartheta^{[1]}+x_{1},0,\cdots,0,\vartheta^{[m]}+kH^{(2)}_{m}+x_m,0,\cdots)$, since $H_{1}^{(2)}=0$. 
%especially $\vartheta^{[1]}=-2{\rm i}z^{[1]}f_1\left[x+\left(\frac{\kappa^{[0]}+c_1}{z^{[0]}}-2\right)\frac{t}{2b_1b_2}\right]$. 
By Proposition \ref{prop-Ok1}, since $\vartheta^{[1]}(x,t)=\mathcal{O}(\eta)$ and $x_{1},H^{(2)}_{m},x_m$ are constants, 
we have $S_{i}(\mathbf{v}_{1})=\eta^{i}p_{i}^{[m]}(\eta^{-1}\vartheta^{[1]})+\mathcal{O}(\eta^{i-1})$. 
Using Okamoto polynomial hierarchies, it leads to
\begin{equation}
    \det_{1\leq i,j \leq N}(S_{3j-i-1}((i-1)H^{(2)}+\varTheta^{}+\mathbf{x}))=\eta^{N^{2}}(c_{N}^{[1]})^{-1}W_{N}^{[1,m]}(\eta^{-1}\vartheta^{[1]})
    +\mathcal{O}(\eta^{N^{2}-1}),
\end{equation}
where $c_{N}^{[1]}$ is defined in \eqref{Okamoto}. 
To calculate the asymptotic expression of the numerator and denominator in the rogue wave solutions \eqref{special-rw2}, 
using the Cauchy-Binet formula, we obtain
\begin{equation}\label{est}
    \begin{split}
        \det(\mathbf{Y}_{0,N}(\mathbf{x}^{(1)},\mathbf{x}^{(2)}))
        &=\sum_{1\leq v_{1}<v_{2}<\cdots <v_{N}\leq 3N}\det_{1\leq i,j \leq N}(Y^{(2)}_{N,v_{j},i}(\mathbf{x}^{(2)})^{\dagger})\det_{1\leq i,j \leq N}(Y^{(2)}_{N,v_{j},i}(\mathbf{x}^{(1)}))\\
        &=\sum_{1\leq v_{1}<v_{2}<\cdots <v_{N}\leq 3N}C^{-\sum_{i=1}^{N}v_{i}}
        \det_{1\leq i,j \leq N}(S^{*}_{3i-v_{j}-1}((v_{j}-1)H^{(2)}+\mathbf{x}^{(2)}
        \\ & +\varTheta^{}))\det_{1\leq i,j \leq N}(S_{3i-v_{j}-1}((v_{j}-1)H^{(2)}+\mathbf{x}^{(1)}+\varTheta^{})).\\
    \end{split}
\end{equation}
Since the degree of $S_{i}(\mathbf{v}_{1})$ is decrease with respect to $\eta$ when $i$ is decrease, 
the leading order term of $\eta$ comes from the choice 
$(v_{1},\cdots,v_{N})=(1,2,\cdots, N)$. Hence the coefficient of the leading order term is
\begin{equation}\label{asym}
    C^{-\frac{N(N+1)}{2}}\det_{1\leq i,j \leq N}(S^{*}_{3i-j-1}((j-1)H^{(2)}+\mathbf{x}^{(2)}+\varTheta^{}))
    \det_{1\leq i,j \leq N}(S_{3i-j-1}((j-1)H^{(2)}+\mathbf{x}^{(1)}+\varTheta^{})),
\end{equation}
which has an asymptotic expansion
\begin{equation}
    C^{-\frac{N(N+1)}{2}}|\eta|^{2N^{2}}\left|(c_{N}^{[1]})^{-1}W_{N}^{[1,m]}(\eta^{-1}\vartheta^{[1]})\right|^{2}.
\end{equation}
Under the condition $\sqrt{x^2+t^2}=\mathcal{O}(\eta)$, if $\eta^{-1}\vartheta^{[1]}(x,t)$ 
is far from the roots of the Okamoto polynomial hierarchies, 
when $|\eta|\to \infty$, we obtain its limit is nonzero and independent of $\mathbf{x}^{(1)}$ and $\mathbf{x}^{(2)}$.
Hence the asymptotic solution of \eqref{special-rw2} just $a_{i}{\rm e}^{{\rm i}\omega_i}$ in this case.  
To get a nontrivial asymptotic expansion, we take a nonzero root $(x_{0},t_{0})$ of 
$W_{N}^{[1,m]}((\vartheta^{[1]}(x,t)-\chi^{[1]}){\rm e}^{-{\rm i}\arg{\eta}})$ and expand \eqref{special-rw2} near 
$(x_{0},t_{0})$. 

We first calculate the leading order term that comes from the choice 
$(v_{1},\cdots,v_{N})=(1,2,\cdots, N)$. 
Making coordinate transformation $x=\hat{x}+x_{0}\eta {\rm e}^{-{\rm i}\arg{\eta}}, t=\hat{t}+t_{0}\eta {\rm e}^{-{\rm i}\arg{\eta}}$ 
(Note that $\eta {\rm e}^{-{\rm i}\arg{\eta}}=|\eta|$ is real, then the transformation is reasonable), if $\chi^{[1]}=0$, we have
\begin{equation}
    \begin{split}
        &\exp((\frac{\vartheta^{[1]}(\hat{x},\hat{t})+x_{1}}{\eta}+\vartheta^{[1]}(x_{0},t_{0}){\rm e}^{-{\rm i}\arg{\eta}})\epsilon+(\frac{\vartheta^{[2]}(\hat{x},\hat{t})+x_{2}}{\eta^{2}}+\frac{(\vartheta^{[2]}(x_{0},t_{0})-\chi^{[2]}){\rm e}^{-{\rm i}\arg{\eta}}}{\eta})\epsilon^{2}+\epsilon^{m})\\
        =&\exp(\vartheta^{[1]}(x_{0},t_{0}){\rm e}^{-{\rm i}\arg{\eta}}\epsilon+\epsilon^{m})\exp(\frac{\vartheta^{[1]}(\hat{x},\hat{t})+x_{1}}{\eta}\epsilon+\frac{(\vartheta^{[2]}(x_{0},t_{0})-\chi^{[2]}){\rm e}^{-{\rm i}\arg{\eta}}}{\eta}\epsilon^{2}+\mathcal{O}(\eta^{-2}))\\
        =&(\sum_{j=1}^{\infty}p_{j}^{[m]}(\vartheta^{[1]}(x_{0},t_{0}){\rm e}^{-{\rm i}\arg{\eta}})\epsilon^{j})(1+\frac{\vartheta^{[1]}(\hat{x},\hat{t})+x_{1}}{\eta}\epsilon+\frac{(\vartheta^{[2]}(x_{0},t_{0})-\chi^{[2]}){\rm e}^{-{\rm i}\arg{\eta}}}{\eta}\epsilon^{2}+\mathcal{O}(\eta^{-2})).\\
    \end{split}
\end{equation}
Hence we have an approximation
\begin{equation}
\begin{split}
    \eta^{-n}S_{n}(kH^{(2)}+\varTheta^{}+\mathbf{x})\sim & p_{n}^{[m]}(\vartheta^{[1]}(x_{0},t_{0}){\rm e}^{-{\rm i}\arg{\eta}})\\
    &+\left((\vartheta^{[1]}(\hat{x},\hat{t})+x_{1})p_{n-1}^{[m]}(\vartheta^{[1]}(x_{0},t_{0}){\rm e}^{-{\rm i}\arg{\eta}})
    \right.\\
    &\left.+(\vartheta^{[2]}(x_{0},t_{0})-\chi^{[2]}){\rm e}^{-{\rm i}\arg{\eta}}p_{n-2}^{[m]}(\vartheta^{[1]}(x_{0},t_{0}){\rm e}^{-{\rm i}\arg{\eta}})\right)
    \eta^{-1}+\mathcal{O}(\eta^{-2}).
\end{split}
\end{equation}
If $\chi^{[1]}$ is not zero, we just replace $\vartheta^{[1]}(x_{0},t_{0})$ by $\vartheta^{[1]}(x_{0},t_{0})-\chi^{[1]}$. 
Using the properties of determinants, 
the coefficient of $\eta^{N^2}$ for $\det(\mathbf{Y}_{N}^{(2)}(\mathbf{x}))$ is zero since $(x_{0},t_{0})$ is the 
root of $W_{N}^{[1,m]}((\vartheta^{[1]}(x,t)-\chi^{[1]}){\rm e}^{-{\rm i}\arg{\eta}})$. For the coefficient of $\eta^{N^2-1}$, 
letting $z_{1}=(\vartheta^{[1]}(x_{0},t_{0})-\chi^{[1]}){\rm e}^{-{\rm i}\arg{\eta}}$ and $z_{2}=(\vartheta^{[2]}(x_{0},t_{0})-\chi^{[2]}){\rm e}^{-{\rm i}\arg{\eta}}$, 
we can only choose one column be the elements $(\vartheta^{[1]}(\hat{x},\hat{t})+x_{1})p_{n-1}^{[m]}(z_{1})$ or 
$z_{2}p_{n-2}^{[m]}(z_{1})$. 
Hence the leading order term of $\det(\mathbf{Y}_{N}^{(2)}(\mathbf{x}))$ with respect to $\eta$ is
\begin{equation}\label{contr1}
    \eta^{N^2-1}((\vartheta^{[1]}(\hat{x},\hat{t})+x_{1})(c_{N}^{[1]})^{-1}W_{N}^{[1,m]'}(z_{1})+z_{2}(c_{N}^{[1]})^{-1}W_{N,1}^{[1,m]}(z_{1})).
\end{equation}
Another contribution is given by the choice $(v_{1},\cdots,v_{N})=(1,2,\cdots, N-1,N+1)$ in \eqref{est}, then the term 
$\det_{1\leq i,j \leq N}(S_{3i-v_{j}-1}((v_{j}-1)H^{(2)}+\mathbf{x}+\varTheta^{}))$
is
\begin{equation}\label{contr2}
    \eta^{N^2-1}(c_{N}^{[1]})^{-1}W_{N}^{[1,m]'}(z_{1}).
\end{equation}
Combining \eqref{contr1} and \eqref{contr2} and calculating the leading order term of \eqref{est} with respect to $\eta$, 
as $|\eta|\to \infty$, we obtain the asymptotic 1-type rogue wave solution \eqref{special-rw2}
\begin{equation}\label{outer-solution-1}
    a_{i}\frac{\left(\vartheta^{[1]}(\hat{x},\hat{t})+H^{(3)}_{1}+H^{(4)}_{1}+H^{(1)}_{1}+z_{2}\frac{W_{N,1}^{[1,m]}(z_{1})}{W_{N}^{[1,m]'}(z_{1})}\right)
    \left(\vartheta^{[1]}(\hat{x},\hat{t})+H^{(3)}_{1}-H^{(4)}_{1}+z_{2}\frac{W_{N,1}^{[1,m]}(z_{1})}{W_{N}^{[1,m]'}(z_{1})}\right)^{*}+C}
    {\left(\vartheta^{[1]}(\hat{x},\hat{t})+H^{(3)}_{1}+z_{2}\frac{W_{N,1}^{[1,m]}(z_{1})}{W_{N}^{[1,m]'}(z_{1})}\right)
    \left(\vartheta^{[1]}(\hat{x},\hat{t})+H^{(3)}_{1}+H^{(1)}_{1}+z_{2}\frac{W_{N,1}^{[1,m]}(z_{1})}{W_{N}^{[1,m]'}(z_{1})}\right)^{*}+C}
    {\rm e}^{{\rm i}\omega_i}.
\end{equation}
By simplifying the aforementioned expression \eqref{outer-solution-1}, we arrive at the final expression \eqref{outer-solution} 
stated in the proposition.

\end{proof}
Moreover, it is routine to verify that the center $(\hat{x}_{1},\hat{t}_{1})$ of the rogue wave solution \eqref{outer-solution} is
\begin{equation}\label{center}
    \begin{split}
        \left(-\frac{\Re{\left(p_{1}q_{1}^{*}\right)}\Im{\left(p_{1}q_{1}^{*}\right)}}{2p_{2}|p_{1}|^{4}}\Im\left(2q_{2} p_{1}^{*}-H_{1}^{(1)*}p_{1}\right)-\Re\left(\frac{q_{2}}{p_{1}}+\frac{H_{1}^{(1)*}}{2p_{1}^{*}}\right),\frac{\Im{\left(p_{1}q_{1}^{*}\right)}}{2p_{2}|p_{1}|^{2}}\left(\Im\left(2q_{2} p_{1}^{*}-H_{1}^{(1)*}p_{1}\right)\right)\right).
    \end{split}
\end{equation}

As evidenced by the proof, when one of the internal parameters is large enough, 
Okamoto polynomial hierarchies become inherent, as asserted in Proposition \ref{prop-OK2}. 
Proposition \ref{prop-OK2} provides insight into the values of the rogue wave solutions \eqref{special-rw2} that are distant from the origin.
The values of rogue wave solutions at points close to the roots of the Okamoto polynomial hierarchies 
can be approximated by first-order rogue waves. 

\subsection{The asymptotics of the inner region}
Now we calculate the asymptotic expression of rogue wave solutions \eqref{special-rw2} in the inner region. 
We have the following proposition and 
the proof is relative to Theorem \ref{Ok-root}. 
\begin{prop}[Inner region]
    Let $\sqrt{x^2+t^2}=\mathcal{O}(1)$, denote $N=km+N_{0}$ where $N_{0}$ is the remainder of $N$ divided by $m$ and 
    $m\geq 2$. If $m$ is not a multiple of $3$, 
    as $|\chi^{[m]}|\to \infty$ where $\chi^{[m]}$ is an internal parameter for $k$-type rogue wave solutions in \eqref{special-rw2}, 
    we have lower order asymptotic rogue wave solutions \eqref{gene-rw2} for $i=1,2$:
    \begin{equation}\label{inner-solution}
        u_{i,k}^{(N_{1}^{[k]},N_{2}^{[k]})}=
        a_{i}
        \left(
        \frac{\det(\mathbf{Y}_{N_{1}^{[k]},N_{2}^{[k]}}(H^{(3)}+H^{(4)}+H^{(1)},H^{(3)}-H^{(4)}))}{\det(\mathbf{Y}_{N_{1}^{[k]},N_{2}^{[k]}}(H^{(3)},H^{(3)}+H^{(1)}))}
        \right)
        {\rm e}^{{\rm i}\omega_i}+\mathcal{O}((\chi^{[m]})^{-1}).
    \end{equation}
    The free internal parameters are given by $\chi_{l}^{[i]}=\chi^{[i]}+kmH_{i}^{(2)},l=0,1$ for $i\ne m$ and 
    $\chi_{0}^{[m]}=\chi_{1}^{[m]}=kmH_{m}^{(2)}$. 
\end{prop}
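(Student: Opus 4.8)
The plan is to adapt the combinatorial argument behind Theorem~\ref{Ok-root} to the perturbed determinants in \eqref{special-rw2}--\eqref{gene-rw2}, now treating $\chi^{[m]}$ (rather than a spatial variable) as the large parameter. Set $\eta=(\chi^{[m]})^{1/m}$. In the inner region $\sqrt{x^2+t^2}=\mathcal{O}(1)$, so by \eqref{exp-theta} the coefficients $\vartheta^{[i]}$ and $\vartheta_{l}^{[i]}$ stay $\mathcal{O}(1)$ for every $i\ne m$, while $\vartheta^{[m]}=\chi^{[m]}+\mathcal{O}(1)=\mathcal{O}(\eta^{m})$. Hence every argument vector occurring in the Schur entries \eqref{m-CFL-element} meets the hypotheses of Proposition~\ref{prop-OK2} with distinguished index $m$; applying Proposition~\ref{prop-OK2} and then rescaling by Proposition~\ref{prop-Ok1} turns each entry $S_n((i-1)H^{(2)}+\varTheta_p+\mathbf{x})$ into $\eta^{n}\,p_n^{[m]}(w\eta^{-1})+\mathcal{O}(\eta^{n-1})$, where $w=\mathcal{O}(1)$ is the first component of that vector. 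The contrast with the outer region is exactly that here the effective variable $z:=w\eta^{-1}$ tends to $0$, so the whole analysis is localised near $z=0$.

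Next I would substitute these expansions into the $N\times N$ Schur determinant $\det_{1\leq i,j\leq N}\bigl(S_{3j-i-p+1}((i-1)H^{(2)}+\varTheta_p+\mathbf{x})\bigr)$, with $p\in\{1,2\}$ the label of \eqref{m-CFL-element}. The ``diagonal'' column selection produces, up to an explicit power of $\eta$ and the constant $c_N^{[k]}$, the value $W_N^{[k,m]}(z)$, which by \eqref{Ok-poly} equals $z^{N^{[k]}}q_N^{[k,m]}(z^m)$ when $m\not\equiv 0\pmod 3$; since $z=\mathcal{O}(\eta^{-1})$ this is suppressed by a factor $\eta^{-N^{[k]}}$ and is \emph{not} the leading term. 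The genuine leading term must come from the same row/column-exchange mechanism that generates the prefactor $z^{N^{[k]}}$ in the proof of Theorem~\ref{Ok-root}: the three-step pattern $3j-i$ forces the rows and columns to organise into ``saturated'' blocks contributing pure powers of $z$ together with two ``incomplete'' blocks whose sizes — dictated by the arithmetic of $3$ versus $m$ — are exactly $N_1^{[k]}$ and $N_2^{[k]}$ from the definitions preceding Theorem~\ref{Ok-root}. (This is precisely where $m\not\equiv 0\pmod 3$ must be assumed: in the excluded case $W_N^{[k,m]}(z)=z^{N(N+1-k)}$ has no surviving $q_N^{[k,m]}$ part, the incomplete blocks degenerate, and the inner region carries no rogue-wave structure — the reason promised in the text.) Pulling out the saturated blocks leaves, up to an overall power of $\eta$ and a nonzero constant, a block determinant with blocks of sizes $N_1^{[k]}$ and $N_2^{[k]}$ whose entries are again of the form \eqref{m-CFL-element}, but with the row index shifted by $km$; this shift converts $(i-1)H^{(2)}$ into $(i-1)H^{(2)}+kmH^{(2)}$, injects the $kmH_i^{(2)}$ terms into the parameters, and absorbs $\chi^{[m]}$ entirely into the power of $\eta$ — which is exactly the bookkeeping behind the rule $\chi_l^{[i]}=\chi^{[i]}+kmH_i^{(2)}$ for $i\ne m$ and $\chi_0^{[m]}=\chi_1^{[m]}=kmH_m^{(2)}$.

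Then I would run this for the numerator and denominator of \eqref{special-rw2} simultaneously, expanding the determinants of the Gram-type products via the Cauchy--Binet formula as in \eqref{est}, and noting that the argument vectors $\mathbf{x}^{(1)},\mathbf{x}^{(2)}$ differ only by the bounded shifts $H^{(1)},H^{(4)}$, which do not affect the block combinatorics. The common power of $\eta$ and the common nonzero constant cancel in the ratio, leaving precisely the block-determinant ratio defining $u_{i,k}^{(N_1^{[k]},N_2^{[k]})}$ in \eqref{gene-rw2} with the stated parameters; the discarded terms are of lower order, which yields the $\mathcal{O}((\chi^{[m]})^{-1})$ error in \eqref{inner-solution}. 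The harmlessness of setting $\chi^{[3i]}=0$ (claimed just above the proposition) also falls out here, since those parameters only enter at orders already dominated.

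The main obstacle is the second step: rigorously establishing the block decomposition of the \emph{perturbed} $N\times N$ Schur determinant and identifying the two surviving blocks as having sizes exactly $N_1^{[k]}$ and $N_2^{[k]}$ with entries of the form \eqref{m-CFL-element}. This is a genuine upgrade of the argument for Theorem~\ref{Ok-root} from an exact polynomial identity to a determinant-with-remainder estimate, and it requires simultaneous control of the power of $\eta$ and of the Schur arguments through the Laplace/Cauchy--Binet expansions; moreover the split $m\bmod 3\in\{1,2\}$ and the three ranges of $N_0$ in the definition of $(N_1^{[k]},N_2^{[k]})$ have to be carried through in parallel.
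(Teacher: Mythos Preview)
Your high-level picture is right --- the inner-region reduction does come from a block triangularisation in which ``saturated'' blocks contribute pure powers of $\chi^{[m]}$ and the residual block of size $N_0$ gives the lower-order rogue wave, with the $kmH^{(2)}$ shift arising from a row-index shift by $km$. But your first technical step is where it breaks: invoking Proposition~\ref{prop-OK2} to replace each entry by $\eta^{n}p_n^{[m]}(w\eta^{-1})+\mathcal{O}(\eta^{n-1})$ discards exactly the information you need. In the inner region $z=w\eta^{-1}\to 0$, and $p_n^{[m]}(z)\sim z^{\,n\bmod m}/((n\bmod m)!\,[n/m]!)$ there, so the putative main term is only of size $\eta^{\,n-(n\bmod m)}$; for every entry with $n\bmod m\ge 2$ this is already \emph{smaller} than the $\mathcal{O}(\eta^{n-1})$ remainder from Proposition~\ref{prop-OK2}. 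Since most entries satisfy $n\bmod m\ge 2$, the approximation gives you nothing usable, and the ``row/column-exchange mechanism'' from Theorem~\ref{Ok-root} cannot be mounted on top of uncontrolled remainders. Proposition~\ref{prop-OK2} is the right tool for the outer region (where $z$ stays bounded away from $0$); here it is too coarse by design.

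The paper avoids this by never approximating entrywise. It uses the \emph{exact} finite expansion
\[
S_n\!\bigl((r-1)H^{(2)}+\varTheta+\mathbf{x}\bigr)=\sum_{i=0}^{[n/m]}\frac{(\chi^{[m]})^{i}}{i!}\,S_{n-im}(\hat{\mathbf{y}}^{(r)}),
\]
with $\hat{\mathbf{y}}^{(r)}$ equal to the argument vector minus $\chi^{[m]}\mathbf{e}_m$, so that every $S_{n-im}(\hat{\mathbf{y}}^{(r)})$ is $\mathcal{O}(1)$ and the large parameter appears only as explicit polynomial coefficients. One then performs column operations on the full matrix $\mathbf{Y}_N^{(p)}(\mathbf{x})$: because $\gcd(3,m)=1$, the residues of $3j$ modulo $m$ exhaust $\{0,\dots,m-1\}$ over each block of $m$ consecutive columns, allowing successive cancellation of higher powers of $\chi^{[m]}$ against earlier columns until the matrix is block upper-triangular with $m\times m$ diagonal blocks that are themselves triangular with pure powers of $\chi^{[m]}$ on the diagonal. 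The final $N_0$-column block is then dissected, case-by-case in the range of $N_0$, into the $(N_1^{[k]},N_2^{[k]})$ split by the same residue bookkeeping. If you replace Proposition~\ref{prop-OK2} by this exact expansion and work at the matrix level rather than entrywise, the obstacle you flagged disappears.
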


\begin{proof}
We also only provide the proof for $1$-type as the proofs are similar for $0$-type. Similar to the proof in the outer region, 
we will estimate $\mathbf{Y}_{N}^{(2)}(\mathbf{x})$ for $\mathbf{x}=H^{(3)},H^{(3)}+H^{(1)},H^{(3)}+H^{(4)}+H^{(1)},H^{(3)}-H^{(4)}$. 
Using the Cauchy-Binet formula, we also need to calculate the limit of \eqref{est} as $\chi^{[m]}\to\infty$. 
Since the elements of $\mathbf{Y}_{N}^{(2)}(\mathbf{x})$ can be viewed as the polynomial of $\chi^{[m]}$, we 
calculate the coefficients with respect to $\chi^{[m]}$. We will prove that, the matrix $\mathbf{Y}_{N}^{(2)}(\mathbf{x})$ 
can be simplified to the form
\begin{equation}\label{Yasym}
    \begin{split}
        \mathbf{Y}_{N}^{(2)}(\mathbf{x})\sim
        \begin{pmatrix}
            \mathbf{A}_{1,1} & \mathbf{A}_{1,2} & \mathbf{A}_{1,3} & \cdots & \mathbf{A}_{1,k} & \mathbf{B}_{1,k+1} \\
            \mathbf{0}_{m\times m} & \mathbf{A}_{2,2} & \mathbf{A}_{2,3} & \cdots & \mathbf{A}_{2,k} & \mathbf{B}_{2,k+1} \\
            \mathbf{0}_{m\times m} & \mathbf{0}_{m\times m} & \mathbf{A}_{3,3} & \cdots & \mathbf{A}_{3,k} & \mathbf{B}_{3,k+1} \\
            \vdots & \vdots & \vdots  & \ddots & \vdots & \vdots \\
            \mathbf{0}_{m\times m} & \mathbf{0}_{m\times m} & \mathbf{0}_{m\times m} & \cdots & \mathbf{A}_{k,k} & \mathbf{B}_{k,k+1} \\
            \mathbf{0}_{(2N+N_{0})\times m} & \mathbf{0}_{(2N+N_{0})\times m} & \mathbf{0}_{(2N+N_{0})\times m} & \cdots & \mathbf{0}_{(2N+N_{0})\times m} & \mathbf{C}_{k+1,k+1}
        \end{pmatrix}
    \end{split}
\end{equation}
where $\mathbf{C}_{k+1,k+1}=\mathbf{C}_{k+1,k+1}(\mathbf{x})$ is $(2N+N_{0})\times N_{0}$ matrix. 
The $m\times m$ matrix $\mathbf{A}_{i,i},i=1,2,\cdots,k$ can be transformed into an upper triangular matrix 
through column transformation, and the diagonal elements are the powers of $\chi^{[m]}$. 
Hence the term \eqref{est} can be approximated by
\begin{equation}\label{asy-inner}
    \begin{split}
        \det(\mathbf{Y}_{0,N}(\mathbf{x}^{(1)},\mathbf{x}^{(2)}))=&\det((\mathbf{Y}_{N}^{(2)}(\mathbf{x}^{(2)}))^{\dagger}\mathbf{Y}_{N}^{(2)}(\mathbf{x}^{(1)}))\\
        \sim& \det((\mathbf{C}_{k+1,k+1}(\mathbf{x}^{(2)}))^{\dagger}\mathbf{C}_{k+1,k+1}(\mathbf{x}^{(1)}))(\chi^{[m]})^{l}
    \end{split}
\end{equation}
for some integer $l$. 
Then we can obtain the proposition. Now we conduct precise calculations. 

To estimate the asymptotic expression, we need to calculate the leading order terms with respect to $\chi^{[m]}$. 
Denote $\hat{\mathbf{x}}^{(r)}=(r-1)H^{(2)}+\varTheta+\mathbf{x}$ and 
$\hat{\mathbf{y}}^{(r)}=\hat{\mathbf{x}}^{(r)}-\chi^{[m]}\mathbf{e}_{m}$ for some integer $r$, where 
$\mathbf{e}_{m}$ is the $m$-th unit vector, we have
\begin{equation}\label{expand1}
    S_{n}(\hat{\mathbf{x}}^{(r)})=\sum_{i=0}^{[n/m]}\frac{(\chi^{[m]})^{i}}{i!}S_{n-im}(\hat{\mathbf{y}}^{(r)}).
\end{equation}
Now we concentrate on the first $m$ columns of $\mathbf{Y}_{N}^{(2)}(\mathbf{x})$ and obtain $\mathbf{A}_{1,1}$. 
Since $m$ is not a multiple of $3$, the order of $3$ in the cyclic group $Z_{m}$ is $m$. Then for the first $m$ columns
, with respect to the leading order terms coefficient of $\chi^{[m]}$ i.e. the term $S_{n-im}(\mathbf{y})$, the subscript 
$n-im$ traverse $0$ to $m-1$ (we omit the coefficient $1/i!$). On the other hand 
, if the row index is decreased by 1, then the term $S_{n-im}$ be $S_{n-im-1}$. Denote the leading order of $1$-st row and 
$j$-th column with respect to $\chi^{[m]}$ be $k_{1,j}$. We only preserve the coefficient of $(\chi^{[m]})^{k_{1,j}}$ in each column. 
Then we obtain $\mathbf{A}_{1,1}$. For example, if $m=3j+1,j\geq 1$, the first $m$ columns of $\mathbf{Y}_{N}^{(2)}(\mathbf{x})$ are 
$3N\times m$ matrix
\begin{equation}
    \begin{pmatrix}
        S_{1}(\hat{\mathbf{x}}^{(1)})  & \cdots & S_{3j-2}(\hat{\mathbf{x}}^{(1)}) & S_{3j+1}(\hat{\mathbf{x}}^{(1)}) & \cdots & S_{6j+1}(\hat{\mathbf{x}}^{(1)}) & S_{6j+4}(\hat{\mathbf{x}}^{(1)}) & \cdots & S_{9j+1}(\hat{\mathbf{x}}^{(1)}) \\
        S_{0}(\hat{\mathbf{x}}^{(2)})  & \cdots & S_{3j-3}(\hat{\mathbf{x}}^{(2)}) & S_{3j}(\hat{\mathbf{x}}^{(2)}) & \cdots & S_{6j}(\hat{\mathbf{x}}^{(2)}) & S_{6j+3}(\hat{\mathbf{x}}^{(2)}) & \cdots & S_{9j}(\hat{\mathbf{x}}^{(2)}) \\
        \vdots & \ddots & \vdots & \vdots & \ddots & \vdots & \vdots & \ddots & \vdots 
    \end{pmatrix}.
\end{equation}
% \begin{equation}
%     \begin{pmatrix}
%         S_{1}(\hat{\mathbf{x}}^{[1]}) & S_{4}(\hat{\mathbf{x}}^{[1]}) & \cdots & S_{3j-2}(\hat{\mathbf{x}}^{[1]}) & S_{3j+1}(\hat{\mathbf{x}}^{[1]}) & \cdots & S_{6j+1}(\hat{\mathbf{x}}^{[1]}) & S_{6j+4}(\hat{\mathbf{x}}^{[1]}) & \cdots & S_{9j+1}(\hat{\mathbf{x}}^{[1]}) \\
%         S_{0}(\hat{\mathbf{x}}^{[2]}) & S_{3}(\hat{\mathbf{x}}^{[2]}) & \cdots & S_{3j-3}(\hat{\mathbf{x}}^{[2]}) & S_{3j}(\hat{\mathbf{x}}^{[2]}) & \cdots & S_{6j}(\hat{\mathbf{x}}^{[2]}) & S_{6j+3}(\hat{\mathbf{x}}^{[2]}) & \cdots & S_{9j}(\hat{\mathbf{x}}^{[2]}) \\
%         \vdots & \vdots & \ddots & \vdots & \vdots & \ddots & \vdots & \vdots & \ddots & \vdots 
%     \end{pmatrix}.
% \end{equation}
Then we expand the above elements just like \eqref{expand1}. For example, 
$S_{3j-2}(\hat{\mathbf{x}}^{(1)})=S_{3j-2}(\hat{\mathbf{y}}^{(1)})$, 
$S_{3j+1}(\hat{\mathbf{x}}^{(1)})=S_{3j+1}(\hat{\mathbf{y}}^{(1)})+S_{0}(\hat{\mathbf{y}}^{(1)})\chi^{[m]}$
% , $S_{6j+1}(\hat{\mathbf{x}}^{(1)})=S_{6j+1}(\hat{\mathbf{y}}^{(1)})+S_{3j}(\hat{\mathbf{y}}^{(1)})\chi^{[m]}$ 
and 
$S_{9j+1}(\hat{\mathbf{x}}^{(1)})=S_{9j+1}(\hat{\mathbf{y}}^{(1)})+S_{6j}(\hat{\mathbf{y}}^{(1)})\chi^{[m]}$ $+S_{3j-1}(\hat{\mathbf{y}}^{(1)})(\chi^{[m]})^{2}$. 
Then we preserve the 
term $(\chi^{[m]})^{0}$ for the first $j$ columns, with respect to $(\chi^{[m]})^{1}$ for $j+1$ to $2j+1$ columns and 
with respect to $(\chi^{[m]})^{2}$ for $2j+2$ to $3j+1=m$ columns. The first $m$ columns of $\mathbf{Y}_{N}^{(2)}(\mathbf{x})$ can be 
approximated by a $3N\times m$ matrix
\begin{equation}
    \begin{split}
        \begin{pmatrix}
            S_{1}(\hat{\mathbf{y}}^{(1)})  & \cdots & S_{3j-2}(\hat{\mathbf{y}}^{(1)}) & \chi^{[m]} & \cdots & S_{3j}(\hat{\mathbf{y}}^{(1)})\chi^{[m]} & S_{2}(\hat{\mathbf{y}}^{(1)})(\chi^{[m]})^{2} & \cdots & S_{3j-1}(\hat{\mathbf{y}}^{(1)})(\chi^{[m]})^{2} \\
            1  & \cdots & S_{3j-3}(\hat{\mathbf{y}}^{(2)}) & 0 & \cdots & S_{3j-1}(\hat{\mathbf{y}}^{(2)})\chi^{[m]} & S_{1}(\hat{\mathbf{y}}^{(2)})(\chi^{[m]})^{2} & \cdots & S_{3j-2}(\hat{\mathbf{y}}^{(2)})(\chi^{[m]})^{2} \\
            0  & \cdots & S_{3j-4}(\hat{\mathbf{y}}^{(3)}) & 0 & \cdots & S_{3j-2}(\hat{\mathbf{y}}^{(3)})\chi^{[m]} & (\chi^{[m]})^{2} & \cdots & S_{3j-3}(\hat{\mathbf{y}}^{(3)})(\chi^{[m]})^{2} \\
            \vdots & \ddots & \vdots & \vdots & \ddots & \vdots & \vdots & \ddots & \vdots 
        \end{pmatrix}
    \end{split}
\end{equation}
% \begin{equation}
%     \begin{split}
%         \begin{pmatrix}
%             S_{1}(\hat{\mathbf{y}}^{[1]}) & S_{4}(\hat{\mathbf{y}}^{[1]}) & \cdots & S_{3j-2}(\hat{\mathbf{y}}^{[1]}) & a^{[m]} & \cdots & S_{3j}(\hat{\mathbf{y}}^{[1]})a^{[m]} & S_{2}(\hat{\mathbf{y}}^{[1]})(a^{[m]})^{2} & \cdots & S_{3j-1}(\hat{\mathbf{y}}^{[1]})(a^{[m]})^{2} \\
%             1 & S_{3}(\hat{\mathbf{y}}^{[2]}) & \cdots & S_{3j-3}(\hat{\mathbf{y}}^{[2]}) & 0 & \cdots & S_{3j-1}(\hat{\mathbf{y}}^{[2]})a^{[m]} & S_{1}(\hat{\mathbf{y}}^{[2]})(a^{[m]})^{2} & \cdots & S_{3j-2}(\hat{\mathbf{y}}^{[2]})(a^{[m]})^{2} \\
%             0 & S_{2}(\hat{\mathbf{y}}^{[3]}) & \cdots & S_{3j-4}(\hat{\mathbf{y}}^{[3]}) & 0 & \cdots & S_{3j-2}(\hat{\mathbf{y}}^{[3]})a^{[m]} & (a^{[m]})^{2} & \cdots & S_{3j-3}(\hat{\mathbf{y}}^{[3]})(a^{[m]})^{2} \\
%             \vdots & \vdots & \ddots & \vdots & \vdots & \ddots & \vdots & \vdots & \ddots & \vdots 
%         \end{pmatrix}
%     \end{split}
% \end{equation}
since $S_{0}(\hat{\mathbf{y}}^{(r)})=1$. The above matrix is $\begin{pmatrix}\mathbf{A}_{1,1}&\mathbf{0}_{(3N-m)\times m}\end{pmatrix}^{T}$. 
Through column transformations, the matrix $\mathbf{A}_{1,1}$ can be transformed to
\begin{equation}
    \begin{pmatrix}
        \chi^{[m]}&S_{1}(\hat{\mathbf{y}}^{(1)})&S_{2}(\hat{\mathbf{y}}^{(1)})(\chi^{[m]})^{2}&\cdots &S_{m-1}(\hat{\mathbf{y}}^{(1)})\chi^{[m]}\\
0&1&S_{1}(\hat{\mathbf{y}}^{(2)})(\chi^{[m]})^{2}&\cdots &S_{m-2}(\hat{\mathbf{y}}^{(2)})\chi^{[m]}\\
0&0&(\chi^{[m]})^{2}&\cdots &S_{m-3}(\hat{\mathbf{y}}^{(3)})\chi^{[m]}\\
\vdots&\vdots&\vdots&\ddots &S_{1}(\hat{\mathbf{y}}^{(m-1)})\chi^{[m]}\\
0&0&0&\cdots &\chi^{[m]}\\
    \end{pmatrix}.
\end{equation}
Moreover, the $m$-th order principal minor of $\mathbf{Y}_{N}^{(2)}(\mathbf{x})$ is $\det(\mathbf{A}_{1,1})=(\chi^{[m]})^{m}$, which is independent of the parameter.

Now we look at the other columns. Expand
\begin{equation}
    S_{n+3lm}(\hat{\mathbf{x}}^{(r)})=\sum_{i=[n/m]+2l}^{[n/m]+3l}\frac{(\chi^{[m]})^{i}}{i!}S_{n-(i-3l)m}(\hat{\mathbf{y}}^{(r)})
    +\sum_{i=0}^{[n/m]+2l-1}\frac{(\chi^{[m]})^{i}}{i!}S_{n-(i-3l)m}(\hat{\mathbf{y}}^{(r)})
\end{equation}
and denote $s_{i}=\frac{(\chi^{[m]})^{[n/m]+2i}}{([n/m]+2i)!}S_{n-([n/m]-i)m}(\hat{\mathbf{y}}^{(r)})$, it follows that
\begin{equation}
    \begin{split}
        S_{n+3lm}(\hat{\mathbf{x}}^{(r)})=&\sum_{i=0}^{l}\frac{(\chi^{[m]})^{i+[n/m]+2l}}{(i+[n/m]+2l)!}S_{n-([n/m]-l+i)m}(\hat{\mathbf{y}}^{(r)})+\mathcal{O}((a^{[m]})^{[n/m]+2l-1})\\
        =&\sum_{i=0}^{l}\frac{(\chi^{[m]})^{-i+[n/m]+3l}}{(-i+[n/m]+3l)!}S_{n-([n/m]-i)m}(\hat{\mathbf{y}}^{(r)})+\mathcal{O}((a^{[m]})^{[n/m]+2l-1})\\
        =&\sum_{i=0}^{l}\frac{([n/m]+2i)!}{(-i+[n/m]+3l)!}(\chi^{[m]})^{3(l-i)}s_{i}+\mathcal{O}((a^{[m]})^{[n/m]+2l-1}).
        \end{split}
\end{equation}
For fixed $l$, we can use $S_{n+3im},0\leq i\leq l-1$ to remove the terms $(\chi^{[m]})^{i},2l+1\leq i \leq 3l$ in $S_{n+3lm}$. 
Moreover, for $n,n-1,\cdots,[\frac{n}{m}]m$, the coefficients of $s_{i}$ are just $\frac{([n/m]+2i)!}{(-i+[n/m]+3l)!}(\chi^{[m]})^{3(l-i)}$. 
Since $\{s_{i},1\leq i \leq l\}$ are linearly independent, and the determinant of column transformation matrices are constant 
multiple of the power of $\chi^{[m]}$, 
we can only preserve the coefficients of $(\chi^{[m]})^{[n/m]+2l}$ on $j$-th $m$ columns, i.e.
$\frac{([n/m])!}{([n/m]+3l)!}S_{n-([n/m]-l)m}(\hat{\mathbf{y}}^{(r)})$. For example, we calculate $\mathbf{A}_{1,2}$ and $\mathbf{A}_{2,2}$. 
If $m=3j+1,j\geq 1$, then we reserve the term $(\chi^{[m]})^{2},(\chi^{[m]})^{3}$ for $m+1$ to $m+j$ columns, 
with respect to $(\chi^{[m]})^{3},(\chi^{[m]})^{4}$ for $m+j+1$ to $m+2j+1$ columns and 
with respect to $(\chi^{[m]})^{4},(\chi^{[m]})^{5}$ for $m+2j+2$ to $m+3j+1=2m$ columns. Then $m+1$ to $2m$ columns of $\mathbf{Y}_{N}^{(2)}(\mathbf{x})$ 
can be approximated by $3N\times m$ matrix
\begin{equation}\label{inner-exp}
    \begin{split}
        \begin{pmatrix}
            S_{m+1}(\hat{\mathbf{y}}^{(1)})(\chi^{[m]})^{2} & S_{m}(\hat{\mathbf{y}}^{(2)})(\chi^{[m]})^{2} & S_{m-1}(\hat{\mathbf{y}}^{(3)})(\chi^{[m]})^{2} & \cdots \\
            S_{m+4}(\hat{\mathbf{y}}^{(1)})(\chi^{[m]})^{2} & S_{m+3}(\hat{\mathbf{y}}^{(2)})(\chi^{[m]})^{2} & S_{m+2}(\hat{\mathbf{y}}^{(3)})(\chi^{[m]})^{2} & \cdots \\
            \vdots & \vdots & \vdots & \ddots  \\
            S_{m+3j-2}(\hat{\mathbf{y}}^{(1)})(\chi^{[m]})^{2} & S_{m+3j-3}(\hat{\mathbf{y}}^{(2)})(\chi^{[m]})^{2} & S_{m+3j-4}(\hat{\mathbf{y}}^{(3)})(\chi^{[m]})^{2} & \cdots \\
            S_{m}(\hat{\mathbf{y}}^{(1)})(\chi^{[m]})^{3} & S_{m-1}(\hat{\mathbf{y}}^{(2)})(\chi^{[m]})^{3} & S_{m-2}(\hat{\mathbf{y}}^{(3)})(\chi^{[m]})^{3} & \cdots \\
            \vdots & \vdots & \vdots & \ddots \\
            S_{m+3j}(\hat{\mathbf{y}}^{(1)})(\chi^{[m]})^{3} & S_{m+3j-1}(\hat{\mathbf{y}}^{(2)})(\chi^{[m]})^{3} & S_{m+3j-2}(\hat{\mathbf{y}}^{(3)})(\chi^{[m]})^{3} & \cdots \\
            S_{m+2}(\hat{\mathbf{y}}^{(1)})(\chi^{[m]})^{4} & S_{m+1}(\hat{\mathbf{y}}^{(2)})(\chi^{[m]})^{4} & S_{m}(\hat{\mathbf{y}}^{(3)})(\chi^{[m]})^{4} & \cdots \\
            \vdots & \vdots & \vdots & \ddots \\
            S_{m+3j-1}(\hat{\mathbf{y}}^{(1)})(\chi^{[m]})^{4} & S_{m+3j-2}(\hat{\mathbf{y}}^{(2)})(\chi^{[m]})^{4} & S_{m+3j-3}(\hat{\mathbf{y}}^{(3)})(\chi^{[m]})^{4} & \cdots \\
        \end{pmatrix}^{T}+
        \begin{pmatrix}
            (\chi^{[m]})^{3}\mathbf{A}_{1,1} \\
            \mathbf{0}_{(3N-m)\times m}
        \end{pmatrix}.
    \end{split}
\end{equation}
Define the first matrix of \eqref{inner-exp} be $\begin{pmatrix}\mathbf{A}_{1,2}& \mathbf{A}_{2,2} &\mathbf{0}_{(3N-2m)\times m}\end{pmatrix}^{T}$. 
Then $(\chi^{[m]})^{-2}\mathbf{A}_{2,2}$ is
\begin{equation*}
        \begin{pmatrix}
            S_{1}(\hat{\mathbf{y}}^{(m+1)})  & \cdots & S_{3j-2}(\hat{\mathbf{y}}^{(m+1)}) & \chi^{[m]} & \cdots & S_{3j}(\hat{\mathbf{y}}^{(m+1)})\chi^{[m]} & S_{2}(\hat{\mathbf{y}}^{(m+1)})(\chi^{[m]})^{2} & \cdots \\
            1  & \cdots & S_{3j-3}(\hat{\mathbf{y}}^{(m+2)}) & 0 & \cdots & S_{3j-1}(\hat{\mathbf{y}}^{(m+2)})\chi^{[m]} & S_{1}(\hat{\mathbf{y}}^{(m+2)})(\chi^{[m]})^{2} & \cdots \\
            0  & \cdots & S_{3j-4}(\hat{\mathbf{y}}^{(m+3)}) & 0 & \cdots & S_{3j-2}(\hat{\mathbf{y}}^{(m+3)})\chi^{[m]} & (\chi^{[m]})^{2} & \cdots \\
            \vdots & \ddots & \vdots & \vdots & \ddots & \vdots & \vdots & \ddots 
    \end{pmatrix}
\end{equation*}
which has similar form as $\mathbf{A}_{1,1}$. 
Repeating this process, then we obtain all $\mathbf{A}_{i,j}$. Moreover, 
all $\mathbf{A}_{i,i}$ can be transformed to upper triangular matrices through 
column transformations. The determinants $\det(\mathbf{A}_{i,i})=(\chi^{[m]})^{2m(i-1)}\det(\mathbf{A}_{1,1})=(\chi^{[m]})^{2mi-m}$. 

Now we need to calculate $\mathbf{C}_{k+1,k+1}$ in \eqref{Yasym}. 
Through the above analysis, we can only preserve the coefficient of
$(\chi^{[m]})^{[n/m]+2l}$ which just $\frac{([n/m])!}{([n/m]+3k)!}S_{n-([n/m]-k)m}(\hat{\mathbf{y}}^{(r)})$ for 
the last $N_{0}$ columns of $\mathbf{Y}_{N}^{(2)}(\mathbf{x})$. 
Notice that $S_{n-([n/m]-k)m}(\hat{\mathbf{y}}^{(r)})=S_{n-[n/m]m+km}(\hat{\mathbf{y}}^{(r)})$, and $n-[n/m]m$ is the remainder of $n$ mod $m$.
Since the subscript of $S_{i}$ minus one if the column index minus one, when the column index reduces $km$, 
the coefficient is $S_{n-[n/m]m}(\hat{\mathbf{y}}^{(r)})$. Hence $\mathbf{C}_{k+1,k+1}$ have similar form of $\mathbf{A}_{1,1}$. 
We just need to concentrate on the first $m$ columns to calculate
the last $N_{0}$ columns. Let $m=3j+1,j\geq 1$, then $\mathbf{C}_{k+1,k+1}$ is formed by the first $2N+N_{0}$ rows and 
$N_{0}$ columns of the matrix
\begin{equation}\label{remain-5}
    \begin{pmatrix}
        S_{1}(\hat{\mathbf{y}}^{(km+1)})(\chi^{[m]})^{2k} & (\chi^{[m]})^{2k} & 0 & \cdots \\
        \vdots & \vdots & \vdots & \ddots \\
        S_{3j-2}(\hat{\mathbf{y}}^{(km+1)})(\chi^{[m]})^{2k} & S_{3j-3}(\hat{\mathbf{y}}^{(km+2)})(\chi^{[m]})^{2k} & S_{3j-4}(\hat{\mathbf{y}}^{(km+3)})(\chi^{[m]})^{2k} & \cdots \\
        (\chi^{[m]})^{2k+1}& 0 & 0 & \cdots \\
        \vdots & \vdots & \vdots & \ddots \\
        S_{3j}(\hat{\mathbf{y}}^{(km+1)})(\chi^{[m]})^{2k+1}& S_{3j-1}(\hat{\mathbf{y}}^{(km+2)})(\chi^{[m]})^{2k+1}& S_{3j-2}(\hat{\mathbf{y}}^{(km+3)})(\chi^{[m]})^{2k+1}&\cdots \\
        S_{2}(\hat{\mathbf{y}}^{(km+1)})(\chi^{[m]})^{2k+2}&S_{1}(\hat{\mathbf{y}}^{(km+2)})(\chi^{[m]})^{2k+2}&(\chi^{[m]})^{2k+2}&\cdots \\
        \vdots & \vdots & \vdots & \ddots \\
        S_{3j-1}(\hat{\mathbf{y}}^{(km+1)})(\chi^{[m]})^{2k+2}&S_{3j-2}(\hat{\mathbf{y}}^{(km+2)})(\chi^{[m]})^{2k+2}&S_{3j-3}(\hat{\mathbf{y}}^{(km+3)})(\chi^{[m]})^{2k+2}&\cdots 
    \end{pmatrix}_{m \times (2N+m)}^{T}.
\end{equation}
To simplify the notations, we omit the variables and the power of $\chi^{[m]}$ in the following. 
For example, the first column of the matrix \eqref{remain-5} is of the form
\begin{equation}
    \begin{pmatrix}
        S_{1}&S_{4}&\cdots&S_{3j-2}&S_{0}&\cdots&S_{3j}&S_{2}&\cdots&S_{3j-1}
    \end{pmatrix}_{1\times m}.
\end{equation}

If $1\leq N_{0} \leq j$, the last $N_{0}$ columns of the first row is of the form $(S_{1},S_{4},\cdots,S_{3N_{0}-2})_{1\times N_{0}}$ which can be expressed 
by $N_{0}$th order 1-type rogue wave solutions. 

If $j+1\leq N_{0} \leq 2j+1$, the last $N_{0}$ columns of the first row is $(S_{1},S_{4},\cdots,S_{3j-2},S_{0},\cdots,S_{3(N_{0}-j-1)})_{1\times N_{0}}$. 
Since there is only one nonzero element in the $(j+1)$th column and there are two nonzero elements in the first column 
of matrix $\mathbf{C}_{k+1,k+1}$, when calculating the term 
$\det((\mathbf{C}_{k+1,k+1}(\mathbf{x}^{(2)}))^{\dagger}\mathbf{C}_{k+1,k+1}(\mathbf{x}^{(1)}))$ in \eqref{asy-inner}, 
we can discard these two columns:
\begin{equation}\label{discard1}
    \begin{split}
    \begin{pmatrix}
        S_{1}&S_{4}&\cdots&S_{3j-2}&S_{0}&S_{3}&\cdots&S_{3(N_{0}-j-1)}\\
        S_{0}&S_{3}&\cdots&S_{3j-3}&0&S_{2}&\cdots&S_{3(N_{0}-j-2)}\\
        0&S_{2}&\cdots&S_{3j-4}&0&S_{1}&\cdots&S_{3(N_{0}-j-3)}\\
        \vdots&\vdots&\ddots&\vdots&\vdots&\vdots&\ddots&\vdots
    \end{pmatrix}
    \to
    \begin{pmatrix}
        S_{2}&\cdots&S_{3j-4}&S_{1}&\cdots&S_{3(N_{0}-j-3)}\\
        \vdots&\ddots&\vdots&\vdots&\ddots&\vdots
    \end{pmatrix}.
    \end{split}
\end{equation}
Denote $\mathbf{D}_{1}$ be the submatrix of matrix $\mathbf{C}_{k+1,k+1}$ obtained 
by removing the first and second rows, the first column, and the $(j+1)$th column, 
we have approximation
\begin{equation}
    \det((\mathbf{C}_{k+1,k+1}(\mathbf{x}^{(2)}))^{\dagger}\mathbf{C}_{k+1,k+1}(\mathbf{x}^{(1)}))\sim
    \det((\mathbf{D}_{1}(\mathbf{x}^{(2)}))^{\dagger}\mathbf{D}_{1}(\mathbf{x}^{(1)}))
\end{equation}
in \eqref{asy-inner}. 
The matrix $\mathbf{D}_{1}$ is a $(2N+N_{0}-2)$ by $(j-1)+(N_{0}-j-1)=(N_{0}-2)$ matrix. 
Hence the remaining matrix $\mathbf{D}_{1}$ corresponds to $(j-1)$th order $0$-type and $(N_{0}-j-1)$th 
order $1$-type rogue wave solutions. 

If $2j+2\leq N_{0} \leq 3j$, for the matrix $\mathbf{C}_{k+1,k+1}$, we observe that the columns $j+1, 1, 2j+2$ contain one, two, and three nonzero 
elements respectively.  Furthermore, the columns $j+2, 2, 2j+3$ contain four, five, and six nonzero elements respectively. 
Based on this observation, we can continue this process. 
Using the same method in \eqref{discard1}, 
if we denote $\mathbf{D}_{2}$ be the submatrix of 
matrix $\mathbf{C}_{k+1,k+1}$ obtained by removing the first $3(N_{0}-2j-1)$ rows, 
the first column to the $(N_{0}-2j-1)$th column, the $(j+1)$th column to the $(N_{0}-j-1)$th column, 
and the $(2j+2)$th column to the $N_{0}$th column, then we have approximation
\begin{equation}
    \det((\mathbf{C}_{k+1,k+1}(\mathbf{x}^{(2)}))^{\dagger}\mathbf{C}_{k+1,k+1}(\mathbf{x}^{(1)}))\sim
    \det((\mathbf{D}_{2}(\mathbf{x}^{(2)}))^{\dagger}\mathbf{D}_{2}(\mathbf{x}^{(1)}))
\end{equation}
in \eqref{asy-inner}. Hence the remaining matrix $\mathbf{D}_{2}$ corresponds to 
$(m-1-N_{0})$th order $0$-type and $(m-N_{0})$th order $1$-type rogue wave solutions. 

For the case $m=3j+2$, we can use the same method. The conclusion is given in the proposition. 

\end{proof}
The rogue wave patterns in the inner region reveal that, at points around the origin, 
the rogue wave solutions can be approached by lower-order rogue waves. 
Now we can see why $\chi^{[3i]}, i\geq 1$ (the case $m$ is a multiple of $3$) would not affect the rogue wave solutions. 
Using the expansion \eqref{expand1}, the term $\mathbf{Y}_{N}^{(2)}(\mathbf{x})$ would be a
$3N\times m$ matrix
\begin{equation}
    \begin{pmatrix}
        S_{1}(\hat{\mathbf{y}}^{(1)})  & \cdots & S_{m-2}(\hat{\mathbf{y}}^{(1)}) & S_{m+1}(\hat{\mathbf{y}}^{(1)})+S_{1}(\hat{\mathbf{y}}^{(1)})\chi^{[m]} & \cdots  \\
        S_{0}(\hat{\mathbf{y}}^{(2)})  & \cdots & S_{m-3}(\hat{\mathbf{y}}^{(2)}) & S_{m}(\hat{\mathbf{y}}^{(2)})+S_{0}(\hat{\mathbf{y}}^{(1)})\chi^{[m]} & \cdots  \\
        \vdots & \ddots & \vdots & \vdots & \ddots 
    \end{pmatrix}.
\end{equation}
We can eliminate the term with the power of $\chi^{[m]}$ in the $(m/3+1)$-th row using the first row, and repeat this process. 
Then the rogue wave solutions \eqref{special-rw2} would not contain the term $\chi^{[3i]}, i\geq 1$. 

Combining the rogue wave patterns in the inner and outer regions, we obtain the asymptotic expression of rogue waves generated at 
the branch point of multiplicity three on the Riemann surfaces. And the distribution of these 
first-order rogue waves can be represented 
by the roots of Okamoto polynomial hierarchies and the center \eqref{center}: 
\begin{thm}[Rogue wave patterns]\label{decopose}
    Let $\eta=(\chi^{[m]})^{1/m}, m\geq 2$ and $m$ is not a multiple of $3$, 
    where $\chi^{[m]}$ is an internal parameter of the $k$-type rogue wave solutions 
    \eqref{special-rw2}, suppose the nonzero roots of Okamoto polynomial 
    hierarchies $W_{N}^{[k,m]}(z)$ 
    are all simple. As $|\eta|\to \infty$, we can decompose the rogue wave solutions \eqref{special-rw2} 
    into $N(N+1-k)-N^{[k]}$ first-order rogue wave solutions \eqref{outer-solution} in the outer region 
    and a lower order rogue wave solutions \eqref{inner-solution} in the inner region
    \begin{equation}\label{deco}
        u_{i,k}^{[N]}(x,t)=\sum_{(x_0,t_0)} u_{i,k}^{asy}(x-x_{0}|\eta|,t-t_{0}|\eta|)+u_{i,k}^{(N_{1}^{[k]},N_{2}^{[k]})}(x,t)
        +\mathcal{O}(\eta^{-1}),
    \end{equation}
    where $(x_{0},t_{0})$ traverses the nonzero roots of $W_{N}^{[k,m]}((\vartheta^{[1]}(x,t)-\chi^{[1]}){\rm e}^{-{\rm i}\arg{\eta}})$. The positions of these 
    first-order rogue waves in the outer region are $(x_{0}|\eta|+\hat{x}_{1},t_{0}|\eta|+\hat{t}_{1})$, where 
    $(\hat{x}_{1},\hat{t}_{1})$ is defined in \eqref{center}. The position of the lower rogue wave in the inner region is 
    the origin. 
\end{thm}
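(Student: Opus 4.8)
The plan is to assemble Theorem~\ref{decopose} by combining the Outer region Proposition, the Inner region Proposition, and the root count of Theorem~\ref{Ok-root}, after one elementary observation identifying the nonzero roots of the Okamoto hierarchy with the centres of the outer first-order rogue waves. First I would set up that dictionary. By \eqref{exp-theta}, $\vartheta^{[1]}(x,t)-\chi^{[1]}=p_{1}x+q_{1}t$ with $p_{1}=-2\mathrm{i}z^{[1]}\mu_{1}$ and $q_{1}$ exactly as in the Outer region Proposition; since $z^{[1]}\neq0$ when $b_{1}\neq b_{2}$ and $\mu_{1}=1$, the coefficient $p_{1}$ is purely imaginary and nonzero, while $q_{1}$ is not a real multiple of $p_{1}$ (equivalently $p_{2}>0$, with $p_{2}$ as in the Outer region Proposition---a fact one checks from \eqref{mult3}, and which is in any case forced since it is exactly the condition for the profile \eqref{outer-solution} to be everywhere regular). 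Hence $(x,t)\mapsto z:=(p_{1}x+q_{1}t)\mathrm{e}^{-\mathrm{i}\arg\eta}$ is a real-linear bijection $\mathbb{R}^{2}\to\mathbb{C}$, so each simple nonzero root $z_{0}$ of $W_{N}^{[k,m]}$ pulls back to a unique point $(x_{0},t_{0})=\mathcal{O}(1)$ with $(x_{0},t_{0})\neq(0,0)$.

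Next I would count the outer waves. By Theorem~\ref{Ok-root}, since $m$ is not a multiple of $3$, $W_{N}^{[k,m]}(z)$ is monic of degree $N(N+1-k)$ and its zero root has multiplicity exactly $N^{[k]}$; under the standing hypothesis that all nonzero roots are simple, there are therefore precisely $N(N+1-k)-N^{[k]}$ distinct nonzero roots, matching the number of first-order rogue waves in \eqref{deco}. This is also where the exclusion of $m$ divisible by $3$ is used: there $W_{N}^{[k,m]}(z)=z^{N(N+1-k)}$ has no nonzero root, and moreover (as noted after the Inner region Proposition) $\chi^{[m]}$ does not affect the solution at all, so the limit $\eta=(\chi^{[m]})^{1/m}\to\infty$ is vacuous.

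With the dictionary and the count in hand, I would patch the regional expansions. Fix $|\eta|$ large and place around each scaled point $(x_{0}|\eta|,t_{0}|\eta|)$ a disc of radius $\rho(|\eta|)$ with $\rho\to\infty$ and $\rho=o(|\eta|)$; because the $z_{0}$ are distinct and $(x_{0},t_{0})\neq(0,0)$, these discs are pairwise disjoint and disjoint from the $\mathcal{O}(1)$ inner region once $|\eta|$ is large. On the disc about $(x_{0}|\eta|,t_{0}|\eta|)$ the Outer region Proposition gives $u_{i,k}^{[N]}=u_{i,k}^{\mathrm{asy}}(x-x_{0}|\eta|,t-t_{0}|\eta|)+\mathcal{O}(\eta^{-1})$, with the profile centred at $(x_{0}|\eta|+\hat{x}_{1},t_{0}|\eta|+\hat{t}_{1})$, $(\hat{x}_{1},\hat{t}_{1})$ as in \eqref{center}; away from every $z_{0}$ but still with $\sqrt{x^{2}+t^{2}}=\mathcal{O}(\eta)$ the same proposition yields $u_{i,k}^{[N]}=a_{i}\mathrm{e}^{\mathrm{i}\omega_{i}}+\mathcal{O}(\eta^{-1})$, the common background of all the profiles; and on the inner region the Inner region Proposition gives $u_{i,k}^{[N]}=u_{i,k}^{(N_{1}^{[k]},N_{2}^{[k]})}+\mathcal{O}((\chi^{[m]})^{-1})=u_{i,k}^{(N_{1}^{[k]},N_{2}^{[k]})}+\mathcal{O}(\eta^{-m})$. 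Since both $u_{i,k}^{\mathrm{asy}}$ and $u_{i,k}^{(N_{1}^{[k]},N_{2}^{[k]})}$ reduce to $a_{i}\mathrm{e}^{\mathrm{i}\omega_{i}}$ away from their own regions, the right-hand side of \eqref{deco}, read with $a_{i}\mathrm{e}^{\mathrm{i}\omega_{i}}$ understood as the single shared background, reproduces $u_{i,k}^{[N]}$ on every region up to $\mathcal{O}(\eta^{-1})$ (the inner error $\mathcal{O}(\eta^{-m})$ being absorbed since $m\geq2$), and the asserted positions of the rogue waves follow from \eqref{center} and from the inner expansion.

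The heavy analytic work---the Schur-polynomial reduction of the determinants producing the first-order profile in the outer region, and the block-triangularisation of $\mathbf{Y}_{N}^{(2)}(\mathbf{x})$ producing the lower-order profile in the inner region---is already carried out in the two propositions, so here the remaining work is essentially bookkeeping: checking the nondegeneracy $p_{2}>0$ of the change of variables, confirming that the scaled centres separate fast enough ($\rho=o(|\eta|)$) for the local expansions not to interact, and---the most delicate point---making precise the sense in which the compact formula \eqref{deco} holds, namely regionally, with $a_{i}\mathrm{e}^{\mathrm{i}\omega_{i}}$ treated as one shared background so that superposing the $N(N+1-k)-N^{[k]}$ outer profiles and the inner profile does not overcount it. One should also check that the $\mathcal{O}(\eta^{-1})$ error is uniform over each disc and over the finitely many roots, which holds because the implied constants depend only on the fixed data $(N,m,k,a_{i},b_{i})$ and on the fixed root set; this consistency issue, rather than any new estimate, is what I expect to be the main obstacle.
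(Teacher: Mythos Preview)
Your proposal is correct and follows the same route as the paper: the paper does not give a separate proof of Theorem~\ref{decopose} at all, but simply states it as the combination of the Outer region Proposition, the Inner region Proposition, and the root count of Theorem~\ref{Ok-root}. Your write-up is in fact more careful than the paper's treatment---the explicit check that $(x,t)\mapsto(\vartheta^{[1]}(x,t)-\chi^{[1]})\mathrm{e}^{-\mathrm{i}\arg\eta}$ is a real-linear bijection, the separation of the scaled discs, and especially the clarification that \eqref{deco} must be read regionally with a single shared background $a_{i}\mathrm{e}^{\mathrm{i}\omega_{i}}$ are points the paper leaves entirely implicit.
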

In conclusion, Theorem \ref{decopose} tells us the decomposition of rogue wave solutions \eqref{special-rw2} 
when one of the internal parameters is large enough. 
The positions and the orders of rogue waves in the outer region correspond to the roots of Okamoto polynomial hierarchies. 
Due to the root distributions of the Okamoto polynomial hierarchies given by \eqref{Ok-poly}, 
we can observe the symmetry structures by the positions of the rogue wave patterns.

\subsection{Examples}
Now we give some examples to verify and visualize the rogue wave patterns in Theorem \ref{decopose}. 
% We need to consider the parameter space 
% $\Omega=\{(a_1,a_2,b_1,b_2)| a_1,a_2\neq 0,\,\, \sigma b_2\neq 2a_2^{-2}-b_1a_1^{2}a_2^{-2},\, b_1\neq b_2\}$. 
Assuming $b_{1}=1$, $b_{2}=2$, and $N=3$, we can calculate 
\begin{equation*}
    \vartheta^{[1]}(x,t)=-\frac{2{\rm i}}{3}x+\frac{\sqrt{3}+{\rm i}}{9}t+\chi^{[1]}. 
\end{equation*}
To simplify the calculation, without loss of generality, 
we consider the parameter settings where all internal parameters are zero except for one that is nonzero. 
Using this assumption, we plot the graph of the norm of rogue wave solutions \eqref{special-rw2} and the 
positions which are given in Theorem \ref{decopose}. 

\begin{figure}[htbp]
    \centering
    \includegraphics[scale=0.205]{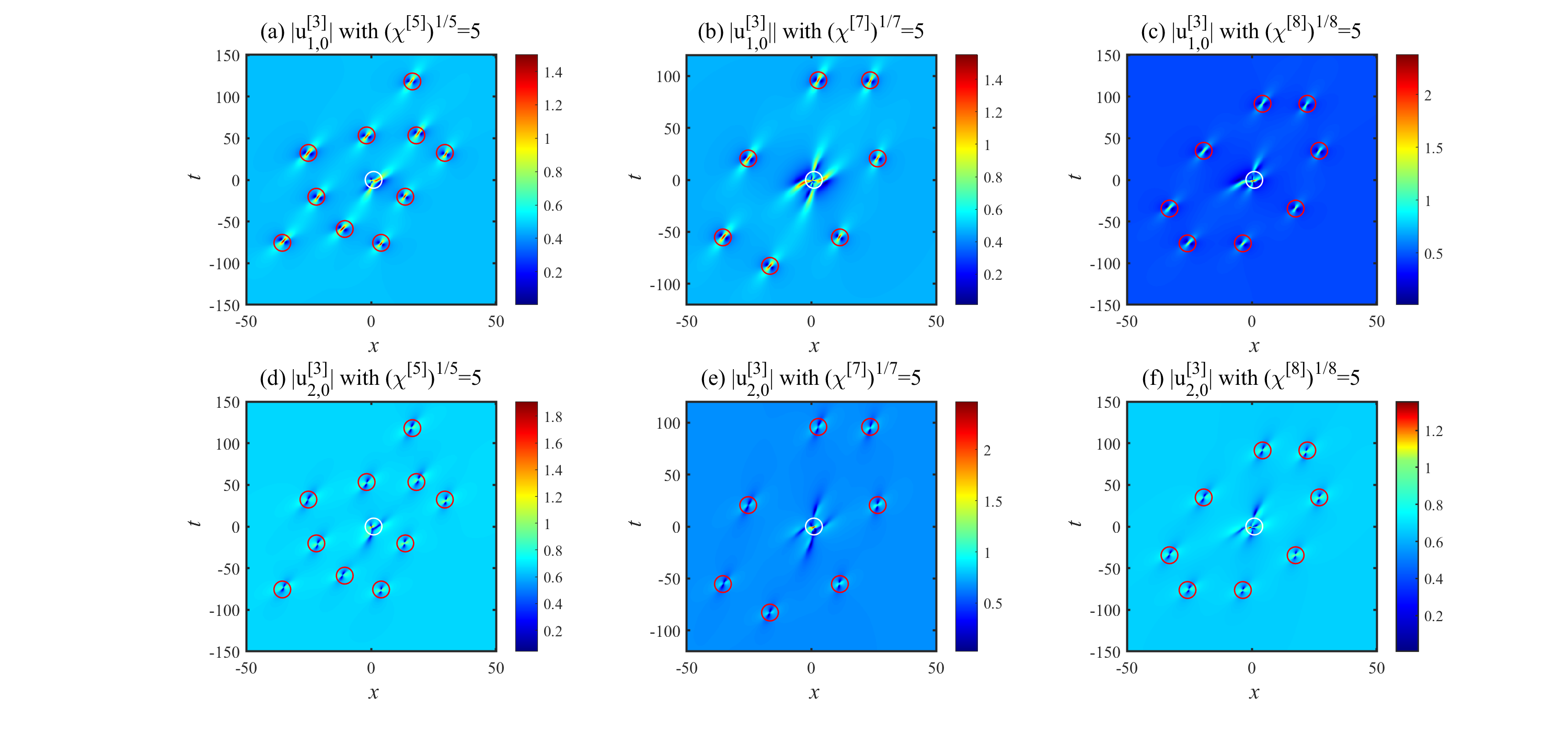}
    \caption{The predicted positions of $u_{i,0}^{[3]}(x,t)$ in Theorem \ref{decopose}. 
    Parameter settings: (a,d) $(\chi^{[5]})^{1/5}=5$, (b,e) $(\chi^{[7]})^{1/5}=5$ and (c,f) $(\chi^{[8]})^{1/5}=5$. 
    First row: $|u_{1,0}^{[3]}(x,t)|$. Second row: $|u_{2,0}^{[3]}(x,t)|$. 
    The backgrounds are the norm of 0-type rogue wave solutions $u_{i,0}^{[3]}(x,t)$. 
    The red circles are the predicted positions 
    of first-order rogue wave solutions in Theorem \ref{decopose} in the outer region, and
    the white circles are the predicted positions of the lower rogue wave solutions 
    in the inner region.}
    \label{N30}
\end{figure}
\begin{enumerate}
    \item For 0-type rogue waves, we consider three cases
    \begin{equation*}
        (\chi^{[5]})^{1/5}=5,\quad (\chi^{[7]})^{1/7}=5,\quad (\chi^{[8]})^{1/8}=5, 
    \end{equation*}
    and the figures are plotted in Figure \ref{N30}. In these cases, the degree of $W_{3}^{[0,m]}(z)$ with respect 
    to $z$ is $N(N+1)=12$. 
    \begin{enumerate}
        \item For $(\chi^{[5]})^{1/5}=5$, the first component of the solution \eqref{special-rw2} is plotted in Figure (\ref{N30}-a) and 
        the second component is plotted in Figure (\ref{N30}-d). Since $m=5,N=3$, we calculate $(N_{1}^{[0]},N_{2}^{[0]})=(1,1)$ and $N^{[0]}=2$
        in Theorem \ref{Ok-root}. Hence there are $N(N+1)-N^{[0]}=10$ first-order rogue wave solutions in the outer region and 
        a $(1,1)$-order rogue wave solution in the inner region. 
        \item For $(\chi^{[7]})^{1/7}=5$, the two components are plotted in Figure (\ref{N30}-b,e). The term 
        $(N_{1}^{[0]},N_{2}^{[0]})=(2,1)$ and $N^{[0]}=5$. Hence there are $N(N+1)-N^{[0]}=7$ first-order rogue wave solutions in 
        the outer region and a $(2,1)$-order rogue wave solution in the inner region. 
        \item For $(\chi^{[8]})^{1/8}=5$, the two components are plotted in Figure (\ref{N30}-c,f). By direct calculation, 
        there are $8$ first-order rogue wave solutions in the outer region and a $(0,2)$-order rogue wave solution in the inner region, 
        i.e. a second-order 1-type rogue wave solution. 
    \end{enumerate}
    \item For 1-type rogue waves, we consider four cases
    \begin{equation*}
        (\chi^{[2]})^{1/2}=5,\quad (\chi^{[4]})^{1/4}=5,\quad (\chi^{[5]})^{1/5}=5,\quad (\chi^{[7]})^{1/7}=5,
    \end{equation*}
    and the figures are plotted in Figure \ref{N31}. The degree of $W_{3}^{[1,m]}(z)$ with respect 
    to $z$ is $N^{2}=9$. 
    \begin{enumerate}
        \item For $(\chi^{[2]})^{1/2}=5$, the first component of the solution \eqref{special-rw2} is plotted in Figure (\ref{N31}-a) and 
        the second component is plotted in Figure (\ref{N31}-e). Since $m=2$, the terms $(N_{1}^{[1]},N_{2}^{[1]})=(0,1)$ and $N^{[1]}=1$
        in Theorem \ref{Ok-root}. There are $8$ first-order rogue wave solutions in the outer region and 
        a $(0,1)$-order rogue wave solution in the inner region, i.e. a first-order 1-type rogue wave solution. 
        \item For $(\chi^{[4]})^{1/4}=5$, the two components are plotted in Figure (\ref{N31}-b,f). The 
        order $(N_{1}^{[1]},N_{2}^{[1]})=(0,1)$ and $N^{[1]}=1$ in Theorem \ref{Ok-root}. 
        The case $(\chi^{[4]})^{1/4}=5$ is the same as $(\chi^{[2]})^{1/2}=5$. 
        \item For $(\chi^{[5]})^{1/5}=5$, the two components are plotted in Figure (\ref{N31}-c,g). We get 
        $(N_{1}^{[1]},N_{2}^{[1]})=(2,1)$ and $N^{[1]}=4$. Hence there are $5$ first-order rogue wave solutions in 
        the outer region and a $(2,1)$-order rogue wave solution in the inner region. 
        \item For $(\chi^{[5]})^{1/5}=5$, the two components are plotted in Figure (\ref{N31}-d,h). By straightforward calculation, 
        there are $7$ first-order rogue wave solutions in the outer region and a $(1,0)$-order rogue wave solution in the inner region, 
        i.e. a first-order 0-type rogue wave solution. 
    \end{enumerate}
\end{enumerate}

\begin{figure}[htbp]
    \centering
    \includegraphics[scale=0.205]{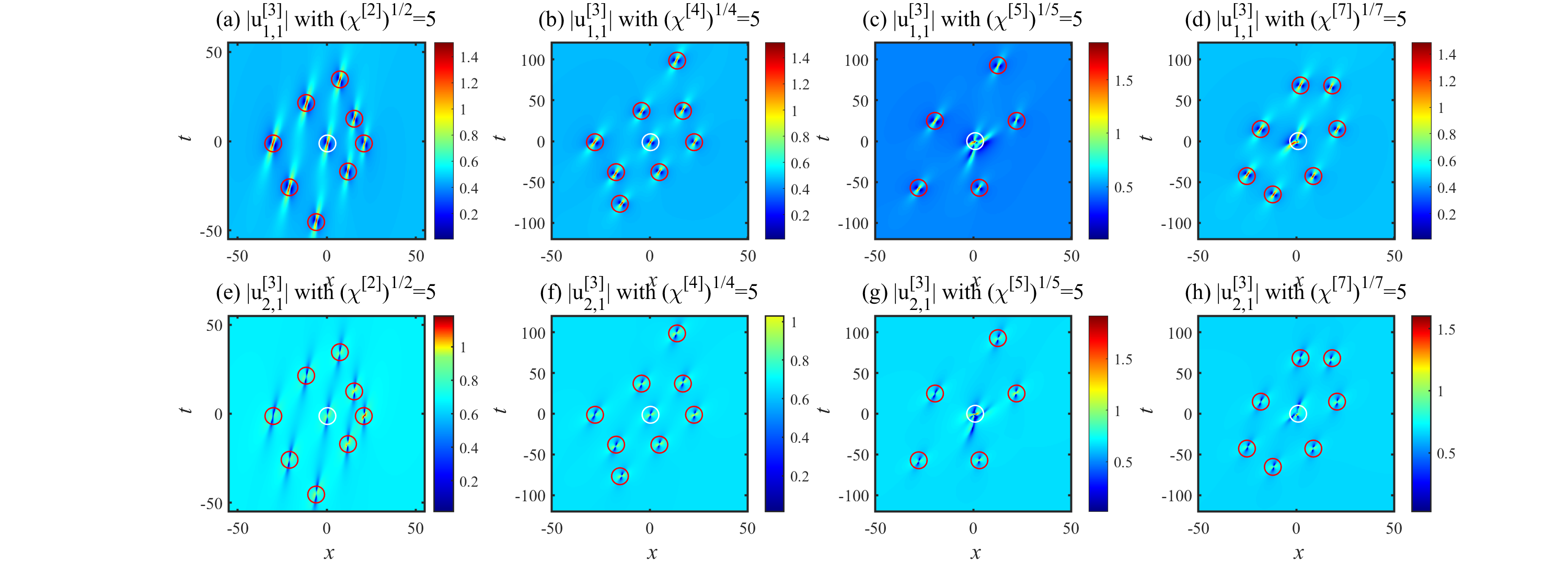}
    \caption{The predicted positions of $u_{i,1}^{[3]}(x,t)$ in Theorem \ref{decopose}. Parameter settings: 
    (a,e) $(\chi^{[2]})^{1/2}=5$, (b,f) $(\chi^{[4]})^{1/4}=5$, (c,g) $(\chi^{[5]})^{1/5}=5$ and (d,h) $(\chi^{[7]})^{1/7}=5$.
    First row: $|u_{1,1}^{[3]}(x,t)|$. Second row: $|u_{2,1}^{[3]}(x,t)|$. The backgrounds are the norm of $u_{i,1}^{[3]}(x,t)$. 
    The circles are the predicted positions of the rogue waves in the outer and inner regions in Theorem \ref{decopose}.}
    \label{N31}
\end{figure}

% The red circles represent the nonzero roots of $W_{N}^{[k,m]}(\vartheta^{[1]}(x,t)),k=0,1$, 
% translated to the center of \eqref{outer-solution}, and the expression are given in Theorem \ref{decopose}. 
% The white circle represents the location of the rogue wave in the inner region. We can use these circles to predict the location of 
% rogue wave solutions \eqref{special-rw2} when one of the internal parameters is large enough. 

From Figure \ref{N30} and \ref{N31}, it can be observed that these circles predict the positions of the rogue wave solutions 
in Theorem \ref{decopose}. 
We can use the roots of Okamoto polynomial hierarchies 
$W_{N}^{[k,m]}(\vartheta^{[1]}(x,t)-\chi^{[1]})$ and centers \eqref{center} to predict the positions. 

Now we adjust the argument of $(\chi^{[m]})^{1/m}$ with fixed norm. 
Denote $\tilde{\theta}=\arg{\left((\chi^{[m]})^{1/m}\right)}$, let 
$(x_{\tilde{\theta}},t_{\tilde{\theta}})$ be the roots of $W_{N}^{[k,m]}((\vartheta^{[1]}(x,t)-\chi^{[1]}){\rm e}^{-{\rm i}\tilde{\theta}})$ 
for $\tilde{\theta} \in [0,2\pi)$. 
Using the expansion of $\vartheta$ in \eqref{exp-theta}, we define the matrix $\mathbf{A}$ which is given by
\begin{equation*}
    \begin{pmatrix}
        \vartheta^{[1]}-\chi^{[1]}\\
        (\vartheta^{[1]}-\chi^{[1]})^{*}
    \end{pmatrix}
    =\mathbf{A}
    \begin{pmatrix}
        x\\
        t
    \end{pmatrix}.
\end{equation*}
It leads to the relation
\begin{equation}\label{co-tran}
    \begin{pmatrix}
        x_{\tilde{\theta}}\\
        t_{\tilde{\theta}}
    \end{pmatrix}
    =\mathbf{A}^{-1}
    \mathrm{diag}\left({\rm e}^{{\rm i}\tilde{\theta}},{\rm e}^{-{\rm i}\tilde{\theta}}\right)
    \mathbf{A}
    \begin{pmatrix}
        x_{\tilde{\theta}=0}\\
        t_{\tilde{\theta}=0}
    \end{pmatrix}.
\end{equation}
Note that the matrix $\mathbf{A}^{-1}\mathrm{diag}\left({\rm e}^{{\rm i}\tilde{\theta}},{\rm e}^{-{\rm i}\tilde{\theta}}\right)\mathbf{A}$ has real elements, 
since $\mathbf{A}$ is the transformation matrix between $(x,t)^{T}$ and a pair of conjugate complex numbers. 
Hence for the parameter $(\chi^{[m]})^{1/m}$ with fixed norm, 
we can use the position of case $\tilde{\theta}=0$ to calculate the position $(x_{\tilde{\theta}},t_{\tilde{\theta}})$ 
for $\tilde{\theta} \in [0,2\pi)$ through a coordinate transformation \eqref{co-tran}. 
Now we give an example for the cases different $\tilde{\theta}$, with fixed norm $|(\chi^{[m]})^{1/m}|$. 
\begin{figure}[htbp]
    \centering
    \includegraphics[scale=0.205]{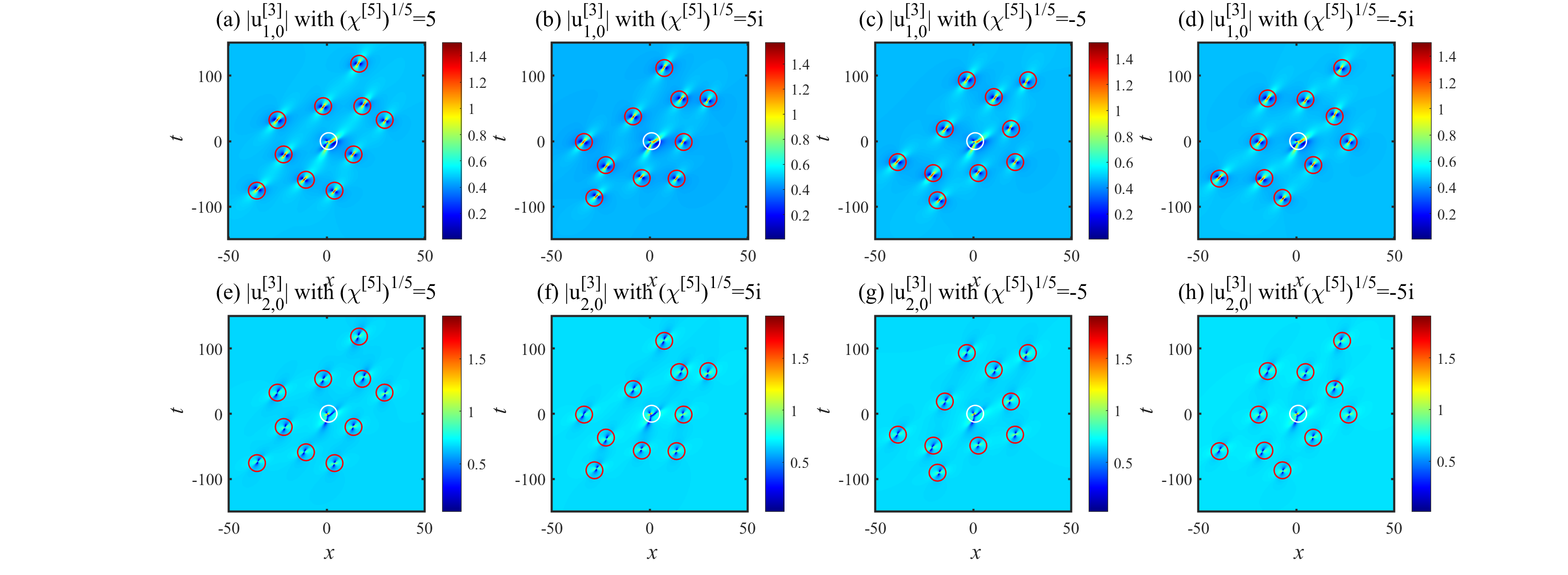}
    \caption{The predicted positions of $u_{i,0}^{[3]}(x,t)$ with rotation. Parameter settings: 
    (a,e) $(\chi^{[5]})^{1/5}=5$, (b,f) $(\chi^{[5]})^{1/5}=5{\rm i}$, (c,g) $(\chi^{[5]})^{1/5}=-5$ and (d,h) 
    $(\chi^{[5]})^{1/5}=-5{\rm i}$. First row: $|u_{1,0}^{[3]}(x,t)|$. 
    Second row: $|u_{2,0}^{[3]}(x,t)|$. The backgrounds are the norm of $u_{i,0}^{[3]}(x,t)$. The circles are 
    the predicted positions in Theorem \ref{decopose}. } 
    \label{N30arg}
\end{figure}
We set $b_{1}=1,b_{2}=2,N=3$ and the parameters $\chi^{[k]}=0,k\ne 5$ with four cases
\begin{equation}
    (\chi^{[5]})^{1/5}=5{\rm e}^{{\rm i}\tilde{\theta}},\quad \tilde{\theta}=0,\frac{\pi}{2},\pi,\frac{3\pi}{2}. 
\end{equation}
Here we consider 0-type rogue waves, and plot the norm of rogue wave solutions $|u_{i,0}^{[3]}(x,t)|$ and the positions in Figure \ref{N30arg}. 
As we change $\tilde{\theta}$, the rogue wave patterns have rotations, which are given by the formula \eqref{co-tran}. 

\section{Conclusion}\label{sec-col}
In this paper, we use the Darboux transformation to construct the rogue wave solutions \eqref{gene-rw2}
of the CFL equations \eqref{CFL} based on the paper \cite{ling_general_2018}. 
With the aid of the form \eqref{special-rw2} of $k$-type rogue wave solutions, we can analyze the rogue wave patterns for \eqref{special-rw2}. 
The patterns of the rogue wave solutions generated at the branch point of multiplicity three 
are determined by the root structures of the 
Okamoto polynomial hierarchies with a linear transformation based on the paper \cite{yang_rogue_2022}. 
After letting one of the internal parameters large enough, 
the Okamoto polynomial hierarchies \eqref{Okamoto} arise naturally and the rogue wave solutions \eqref{special-rw2} have 
decomposition \eqref{deco}. 
We can predict the positions of the first-order rogue waves in \eqref{deco} using the root distributions \eqref{Ok-poly}
of the Okamoto polynomial hierarchies. 

For the CFL equations, we can also consider the rogue wave solutions generated at the branch point of multiplicity two 
and the rogue wave patterns are associated with Yablonskii-Vorob’ev polynomial hierarchies. 
More generally, in other models of integrable systems, we can use the roots of special polynomials to study 
the patterns of the rogue wave solutions with tau function determinant representations. 
Specifically, for general integrable models, we can also construct its Darboux transformation 
and use the seed solutions to generate the high-order rogue wave solutions at the branch point of the Riemann surface. 
For such rogue wave solutions, we can use a similar approach to calculate the asymptotic expressions and 
analyze the properties of the associated polynomial hierarchies to study the rogue wave patterns.

\bibliographystyle{siam}
\bibliography{CFL-pattern}

\end{document}